\spnewtheorem{dfn}{Definition}{\bfseries}{\itshape}
\spnewtheorem{lem}{Lemma}{\bfseries}{\itshape}
\spnewtheorem{thm}{Theorem}{\bfseries}{\itshape}
\spnewtheorem{cor}{Corollary}{\bfseries}{\itshape}
\spnewtheorem{prp}{Proposition}{\bfseries}{\itshape}
\spnewtheorem*{rmk}{Remark}{\itshape}{}
\spnewtheorem{invariant}[lemma]{Invariant}{\bfseries}{\itshape}
\newcommand{\set}[1]{\{#1\}}
\newcommand{\lu}{\textup{(}}
\newcommand{\ru}{\textup{)}}
\newcommand{\upbr}[1]{\lu #1\ru}
\newcommand{\sseq}{\langle v_0,v_1,v_2,\ldots\rangle}
\newcommand{\pat}{\omega\xspace} 
\newcommand{\Pat}{\Omega\xspace}
\newcommand{\obj}{\phi\xspace}
\newcommand{\Inf}{\mathrm{Inf}\xspace}
\newcommand{\objsty}[2]{\textrm{#1}(#2)}
\newcommand{\Reach}[1]{\objsty{Reach}{#1}}
\newcommand{\Streett}[1]{\objsty{Streett}{#1}}
\newcommand{\SP}{\mathrm{TP}\xspace}
\newcommand{\idxs}{j\xspace} 
\newcommand{\win}[1]{\llangle 1 \rrangle\left(#1\right)}
\newcommand{\scc}{C\xspace}
\newcommand{\inscc}{\expandafter\MakeLowercase\expandafter{\scc}\xspace} 
\DeclareMathOperator{\Out}{\textit{Out}\xspace}
\DeclareMathOperator{\In}{\textit{In}\xspace}
\newcommand{\sccalg}{\ProcNameSty{allSCCs}}
\newcommand{\reachG}[2]{\ProcNameSty{GraphReach}(#1, #2)} 
\newcommand{\goodC}{\DataSty{goodC}} 
\newcommand{\at}[3]{\textit{Attr}_{#1}(#2, #3)\xspace}
\newcommand{\straa}{\sigma\xspace}
\newcommand{\Straa}{\Sigma\xspace}
\newcommand{\pr}[3]{\mathrm{Pr}^{#1}_{#2}\left(#3\right)}
\newsavebox{\@brx}
\newcommand{\llangle}[1][]{\savebox{\@brx}{\(\m@th{#1\langle}\)}%
  \mathopen{\copy\@brx\mkern2mu\kern-0.9\wd\@brx\usebox{\@brx}}}
\newcommand{\rrangle}[1][]{\savebox{\@brx}{\(\m@th{#1\rangle}\)}%
  \mathclose{\copy\@brx\mkern2mu\kern-0.9\wd\@brx\usebox{\@brx}}}
\newcommand{\as}[1]{\llangle 1 \rrangle_\textit{as}\left(#1\right)}
\newcommand{\mdp}{P\xspace}
\newcommand{\vp}{V_1\xspace}
\newcommand{\vr}{V_R\xspace}
\newcommand{\trans}{\delta\xspace}
\newcommand{\target}{T\xspace}
\newcommand{\intarget}{\expandafter\MakeLowercase\expandafter{\target}\xspace}
\newcommand{\ec}{X\xspace}
\newcommand{\inec}{\expandafter\MakeLowercase\expandafter{\ec}\xspace}
\newcommand{\mecalg}{\ProcNameSty{allMECs}}
\newcommand{\mectime}{\textsc{mec}\xspace}
\newcommand{\good}{\DataSty{goodEC}}
\newcommand{\rout}{\textit{rout}}
\newcommand{\dombound}[2]{h\xspace} 
\newcommand{\badv}{B\xspace}
\newcommand{\bad}{\ProcNameSty{Bad}}
\newcommand{\pre}{\mathsf{Pre}\xspace}
\newcommand{\post}{\mathsf{Post}\xspace}
\newcommand{\cpre}[1]{\mathsf{CPre}_{#1}\xspace}
\begin{document}
\mainmatter

\title{Symbolic Algorithms for Graphs and Markov Decision Processes with Fairness Objectives\vspace{-2mm}}
\author{Krishnendu Chatterjee$^1$, Monika Henzinger$^2$, Veronika Loitzenbauer$^3$, \\ Simin Oraee$^4$, and Viktor Toman$^1$\vspace{-1mm}}
\institute{$^1$ IST Austria, Klosterneuburg, Austria \\
			$^2$ University of Vienna, Austria \\
            	$^3$ Johannes Kepler University Linz, Austria \\
            	$^4$ Max Planck Institute for Software Systems, Kaiserslautern, Germany}

\maketitle
\vspace{-4mm}
\begin{abstract}
Given a model and a specification, the fundamental model-checking problem asks
for algorithmic verification of whether the model satisfies the specification.
We consider graphs and Markov decision processes (MDPs), which are fundamental 
models for reactive systems.
One of the very basic specifications that arise in verification of reactive systems
is the strong fairness (aka Streett) objective. Given different types of 
requests and corresponding grants, the objective requires that for each type, if the 
request event happens infinitely often, then the corresponding grant event 
must also happen infinitely often. 
All $\omega$-regular objectives can be expressed as Streett objectives
and hence they are canonical in verification. 
To handle the state-space explosion, symbolic algorithms are required that operate on a
succinct implicit representation of the system rather than explicitly accessing
the system.
While explicit algorithms for graphs and MDPs with Streett objectives have been 
widely studied, there has been no improvement of the basic symbolic algorithms.
The worst-case numbers of symbolic steps required for the basic symbolic algorithms 
are as follows: quadratic for graphs and cubic for MDPs.
In this work we present the first sub-quadratic symbolic algorithm for graphs 
with Streett objectives, and our algorithm is sub-quadratic even for MDPs.
Based on our algorithmic insights we present an implementation of the new symbolic
approach and show that it improves the existing approach on several academic benchmark
examples.
\end{abstract}

\vspace{-4mm}
\section{Introduction}\label{sec:intro}
\vspace{-1mm}

In this work we present faster symbolic algorithms for graphs and Markov decision 
processes (MDPs) with strong fairness objectives.
For the fundamental model-checking problem, the input consists of a {\em model} and 
{\em a specification}, and the algorithmic verification problem is to check whether 
the model {\em satisfies} the specification. 
We first describe the specific model-checking problem we consider and then our 
contributions. 

\vspace{-0.5mm}
\smallskip\noindent{\em Models: Graphs and MDPs.}
Two standard models for reactive systems are graphs and Markov decision processes (MDPs).
Vertices of a graph represent states of a reactive system, edges represent transitions 
of the system, and infinite paths of the graph represent non-terminating trajectories 
of the reactive system.
MDPs extend graphs with probabilistic transitions that represent reactive systems with 
uncertainty. 
Thus graphs and MDPs are the de-facto model of reactive systems with nondeterminism,
and nondeterminism with stochastic aspects, respectively~\cite{ClarkeBook,baierbook}.

\vspace{-0.5mm}
\smallskip\noindent{\em Specification: Strong Fairness (aka Streett) Objectives.}
A basic and fundamental property in the analysis of reactive systems is the 
{\em strong fairness condition}, which informally requires that if events are enabled infinitely often, then 
they must be executed infinitely often.
More precisely, the strong fairness conditions (aka Streett objectives) consist of 
$k$ types of requests and corresponding grants, and the objective requires that for each 
type if the request happens infinitely often, then the corresponding grant must also happen
infinitely often. 
After safety, reachability, and liveness, the strong fairness condition is one of the most
standard properties that arise in the analysis of reactive systems, and chapters 
of standard textbooks in verification are devoted to it (e.g., 
\cite[Chapter~3.3]{ClarkeBook},~\cite[Chapter~3]{MPProgress},~\cite[Chapters~8,~10]{AlurHenzingerBook}). 
Moreover, all $\omega$-regular objectives can be described by Streett objectives, e.g., 
LTL formulas and non-deterministic $\omega$-automata can be translated to
deterministic Streett automata~\cite{Safra88} and efficient translation
has been an active research area~\cite{ChatterjeeGK13,EsparzaK14,KomarkovaK14}. 
Thus Streett objectives are a canonical class of objectives that arise in verification.

\vspace{-0.5mm}
\smallskip\noindent{\em Satisfaction.} 
The basic notions of satisfaction for graphs and MDPs are as follows:
For graphs the notion of satisfaction requires that there is a trajectory (infinite path) 
that belongs to the set of paths described by the Streett objective.
For MDPs the satisfaction requires that there is a policy to resolve the nondeterminism 
such that the Streett objective is ensured almost-surely (with probability~1).
Thus the algorithmic model-checking problem of graphs and MDPs with Streett objectives is a 
core problem in verification.

\vspace{-0.5mm}
\smallskip\noindent{\em Explicit vs Symbolic Algorithms.}
The traditional algorithmic studies consider {\em explicit} algorithms that 
operate on the explicit representation of the system. In contrast, 
{\em implicit} or {\em symbolic} algorithms only use a set of predefined operations 
and do not explicitly access the system~\cite{ClarkeGP99}.
The significance of symbolic algorithms in verification is as follows:
to combat the state-space explosion, large systems must be succinctly represented 
implicitly and then symbolic algorithms are scalable, 
whereas explicit algorithms do not scale as it is computationally too expensive
to even explicitly construct the system.

\vspace{-0.5mm}
\smallskip\noindent{\em Relevance.} 
In this work we study symbolic algorithms for graphs and MDPs with Streett objectives.
Symbolic algorithms for the analysis of graphs and MDPs are at the heart of many state-of-the-art
tools such as SPIN, NuSMV for graphs~\cite{SPIN,NUSMV} and PRISM, LiQuor, Storm for MDPs~\cite{PRISM,LIQUOR,STORM}.
Our contributions are related to the algorithmic complexity of graphs and MDPs with 
Streett objectives for symbolic algorithms. 
We first present previous results and then our contributions.

\vspace{-0.5mm}
\smallskip\noindent{\em Previous Results.}
The most basic algorithm for the problem for graphs is based on repeated SCC (strongly
connected component) computation, and informally can be described as follows:
for a given SCC, (a)~if for every request type that is present in the SCC 
the corresponding grant type is also present in the SCC, then the SCC is identified 
as ``good'', (b)~else vertices of each request type that has no corresponding
grant type in the SCC are removed, and the algorithm recursively proceeds
on the remaining graph.
Finally, reachability to good SCCs is computed. 
The current best-known symbolic algorithm for SCC computation requires $O(n)$ symbolic
steps, for graphs with $n$ vertices~\cite{GentiliniPP08}, and moreover, the algorithm is optimal~\cite{ChatterjeeDHL18}.
For MDPs, the SCC computation has to be replaced by MEC (maximal end-component) computation,
and the current best-known symbolic algorithm for MEC computation requires $O(n^2)$ symbolic steps.
While there have been several explicit algorithms for graphs with Streett 
objectives~\cite{HenzingerT96,ChatterjeeHL15}, MEC computation~\cite{ChatterjeeH11,ChatterjeeH12,ChatterjeeH14}, 
and MDPs with Streett objectives~\cite{ChatterjeeDHL16}, as well as symbolic algorithms for MDPs with 
B\"uchi objectives~\cite{ChatterjeeHJS13}, the current best-known bounds for symbolic algorithms 
with Streett objectives are obtained from the basic algorithms, which are $O(n \cdot \min(n,k))$ 
for graphs and $O(n^2\cdot \min(n,k))$ for MDPs, where $k$ is the number of types of request-grant pairs.

\vspace{-0.5mm}
\smallskip\noindent{\em Our Contributions.}
In this work our main contributions are as follows:
\begin{compactitem}
\item We present a symbolic algorithm that requires $O(n \cdot \sqrt{m \log n})$ symbolic steps,
both for graphs and MDPs, where $m$ is the number of edges. 
In the case $k=O(n)$, the previous worst-case bounds are quadratic ($O(n^2)$) for graphs and 
cubic ($O(n^3)$) for MDPs.
In contrast, we present the first sub-quadratic symbolic algorithm both for graphs as well as MDPs.
Moreover, in practice, since most graphs are sparse (with $m=O(n)$), the worst-case bounds of 
our symbolic algorithm in these cases are $O(n \cdot \sqrt{n\log n})$.
Another interesting contribution of our work is that we also present an $O(n \cdot \sqrt{m})$ symbolic 
steps algorithm for MEC decomposition, which is relevant for our results as well as of 
independent interest, as MEC decomposition is used in many other algorithmic problems 
related to MDPs.
Our results are summarized in Table~\ref{tab:symb:comparison}.

\item While our main contribution is theoretical, based on the algorithmic insights we 
also  present a new symbolic algorithm implementation for graphs and MDPs with Streett objectives. 
We show that the new algorithm improves (by around 30\%) the basic algorithm 
on several academic benchmark examples from the VLTS benchmark suite~\cite{VLTS}.
\end{compactitem}

\vspace{-1mm}
\begin{table}
\renewcommand{\arraystretch}{1.3}
\caption{Symbolic algorithms for Streett objectives and MEC decomposition.}\label{tab:symb:comparison}
\vspace{0.5mm}
\centering
\begin{tabular}{@{}llll@{}}
\toprule
& \multicolumn{2}{c}{Symbolic Operations}\\
\cmidrule{2-4}
Problem & Basic Algorithm & Improved Algorithm & Reference \\
\midrule
Graphs with Streett  & $O(n \cdot \min(n, k))$ & \boldmath$\mathbf{O(n \sqrt{m \log n})}$ & Theorem~\ref{thm:improvedgraphs}\\
MDPs with Streett  & $O(n^2 \cdot \min(n, k))$ & \boldmath$\mathbf{O(n \sqrt{m \log n})}$ & Theorem~\ref{thm:improvedmdps} \\
MEC decomposition & $O(n^2)$ & \boldmath$\mathbf{O(n \sqrt{m})}$ & Theorem~\ref{thm:improvedmecs} \\
\bottomrule
\end{tabular}
\end{table}
\setlength\intextsep{0em}
\smallskip

\vspace{-0.5mm}
\smallskip\noindent{\em Technical Contributions.}
The two key technical contributions of our work are as follows:
\begin{compactitem}
\item \emph{Symbolic Lock Step Search:} We search for newly emerged SCCs by a 
local graph exploration around vertices that lost adjacent edges. 
In order to find small new SCCs first, all searches are conducted ``in parallel'', i.e., in lock-step, 
and the searches stop as soon as the first one finishes successfully. 
This approach has successfully been used to improve explicit algorithms~\cite{HenzingerT96,ChatterjeeJH03,ChatterjeeH12,ChatterjeeDHL16}.
Our contribution is a non-trivial symbolic variant (Section~\ref{sec:lss}) 
which lies at the core of the theoretical improvements.

\item \emph{Symbolic Interleaved MEC Computation:}
For MDPs the identification of vertices that have to be removed can be interleaved with the computation of MECs such that in each iteration the computation of SCCs instead of MECs is sufficient to 
make progress~\cite{ChatterjeeDHL16}. We present a symbolic variant of this interleaved computation. 
This interleaved MEC computation is the basis for applying the lock-step search to MDPs.
\end{compactitem}

\vspace{-3mm}
\section{Definitions}\label{sec:defin}

\vspace{-2mm}
\subsection{Basic Problem Definitions}

\vspace{-1mm}
\smallskip\noindent{\em Markov Decision Processes \upbr{MDPs} and Graphs.}
An MDP~$\mdp = ((V, E), \allowbreak(\vp, \vr), \allowbreak\trans)$ consists of a
finite directed graph $G = (V, E)$ with a set of $n$ vertices~$V$ and a set of $m$ 
edges~$E$, a partition of the vertices into 
\emph{player~1 vertices} $\vp$ and \emph{random vertices} $\vr$, and a 
probabilistic transition function~$\trans$. We call an edge $(u,v)$ with
$u \in \vp$ a \emph{player~1 edge} and an edge $(v, w)$ with $v \in \vr$ a
\emph{random edge}. For $v \in V$ we define $\In(v)=\{w \in V \mid (w,v) \in E\}$
and $\Out(v)=\{w \in V \mid (v,w) \in E\}$.
The probabilistic transition function is a function from $\vr$ to $\mathcal{D}(V)$, 
where $\mathcal{D}(V)$ is the set of probability distributions over $V$ and 
a random edge $(v, w) \in E$ if and only if $\trans(v)[w] > 0$.
Graphs are a special case of MDPs with $\vr = \emptyset$.

\vspace{-0.5mm}
\smallskip\noindent{\em Plays and Strategies.} 
A \emph{play} or infinite path in $\mdp$ is an infinite sequence $\pat = \langle v_0, 
v_1, v_2, \ldots \rangle$ such that $(v_i, v_{i+1}) \in E$ for all $i \in \mathbb{N}$;
we denote by $\Pat$ the set of all plays.
A player~1 \emph{strategy}~$\straa: V^* \cdot \vp \rightarrow V$ is a function that 
assigns to every finite prefix~$\pat \in V^* \cdot \vp$ of a play that ends in a 
player~1 vertex~$v$ a successor vertex $\straa(\pat) \in V$ such that 
$(v, \straa(\pat)) \in E$; we denote by $\Straa$ the 
set of all player~1 strategies. A strategy is \emph{memoryless} if we have 
$\straa(\pat) = \straa(\pat')$ for any $\pat, \pat' \in V^* \cdot \vp$ that 
end in the same vertex $v \in \vp$.

\vspace{-0.5mm}
\smallskip\noindent{\em Objectives.}
An \emph{objective} $\obj$ is a subset of $\Pat$ said to be winning
for player~1. We say that a  play $\pat \in \Pat$
\emph{satisfies the objective} if $\pat \in \obj$. For a vertex set~$\target
\subseteq V$ the \emph{reachability objective} is the set of infinite paths
that contain a vertex of $\target$, i.e., 
$\Reach{\target} = \set{\sseq \in \Pat \mid \exists j \ge 0: v_j \in \target}$.
Let $\Inf(\pat)$ for $\pat \in \Pat$ denote the set of vertices that occur
infinitely often in $\pat$. Given a set $\SP$ of $k$ pairs $(L_i, U_i)$ of vertex
sets $L_i, U_i \subseteq V$ with $1 \le i \le k$, the \emph{Streett objective}
is the set of infinite paths for which it holds
\emph{for each} $1 \le i \le k$ that whenever a vertex of $L_i$ occurs
infinitely often, then a vertex of $U_i$ occurs infinitely often, i.e.,
$\Streett{\SP} = \set{\pat \in \Pat \mid L_i \cap \Inf(\pat) = \emptyset
\text{ or } U_i \cap \Inf(\pat) \ne \emptyset \text{ for all } 1 \le i \le k}$.

\vspace{-0.5mm}
\smallskip\noindent{\em Almost-Sure Winning Sets.}
For any measurable set of plays $A \subseteq \Pat$ we denote by
$\pr{\straa}{v}{A}$ the probability that a play starting at $v \in V$
belongs to $A$ when player~1 plays strategy~$\straa$. 
A strategy~$\straa$ is \emph{almost-sure \upbr{a.s.} winning} from a vertex
$v \in V$ for an objective $\obj$ if $\pr{\straa}{v}{\obj} = 1$.
The \emph{almost-sure winning set} $\as{\mdp, \obj}$
of player~1 is the set of vertices for which player~1 has an
almost-sure winning strategy. In graphs the existence of an almost-sure
winning strategy corresponds to the existence of a play in the objective,
and the set of vertices for which player~1 has an (almost-sure) winning
strategy is called the \emph{winning set} $\win{\mdp, \obj}$ of player~1.

\vspace{-0.5mm}
\smallskip\noindent{\em Symbolic Encoding of MDPs.}
Symbolic algorithms operate on sets of vertices, which are usually described by 
Binary Decision Diagrams (\textsc{bdd}s)~\cite{Lee59,Akers78}.
In particular Ordered Binary Decision Diagrams~\cite{Bryant85} (\textsc{Obdd}s) 
provide a canonical symbolic representation of Boolean functions. 
For the computation of almost-sure winning sets of MDPs it is sufficient to encode MDPs 
with \textsc{Obdd}s and one additional bit that denotes whether a vertex is 
in~$\vp$ or~$\vr$.

\vspace{-0.5mm}
\smallskip\noindent{\em Symbolic Steps.}
One symbolic step corresponds to one primitive operation as supported by 
standard symbolic packages like \textsc{CuDD}~\cite{Somenzi15}.
In this paper we only allow the same basic \emph{set-based symbolic operations} as 
in~\cite{RaviBS00,GentiliniPP03,BloemGS06,ChatterjeeHJS13}, namely set operations and
the following one-step symbolic operations for a set of vertices $Z$:
(a)~the one-step predecessor operator
$\pre(Z)=\{v \in V \mid \Out(v) \cap Z \ne \emptyset\};$
(b)~the one-step  successor operator
\mbox{$\post(Z)=\{v \in V \mid \In(v) \cap Z \ne \emptyset\};$}
and (c)~the one-step \emph{controllable} predecessor operator
$\cpre{R}(Z) = \left\{ v \in \vp \mid \Out(v) \subseteq Z \right\}
\cup  \left\{ v \in \vr \mid \Out(v) \cap Z \ne \emptyset \right\};$
i.e., the $\cpre{R}$ operator computes all vertices such that the successor belongs to~$Z$ with
positive probability. This operator can be defined using the $\pre$ operator
and basic set operations as follows:
$\cpre{R}(Z)= \pre(Z) \setminus (\vp \cap \pre(V \setminus Z))\,$.
We additionally allow cardinality computation and picking an 
arbitrary vertex from a set as in~\cite{ChatterjeeHJS13}.

\vspace{-0.5mm}
\smallskip\noindent{\em Symbolic Model.} Informally, a symbolic algorithm does not 
operate on explicit representation of the transition function of a graph, 
but instead accesses it through $\pre$ and $\post$ operations.
For explicit algorithms, a $\pre/\post$ operation on a set of vertices (resp., a single vertex)
requires $O(m)$ (resp., the order of indegree/outdegree of the vertex) time. 
In contrast, for symbolic algorithms $\pre/\post$ operations are considered unit-cost.
Thus an interesting algorithmic question is whether better algorithmic bounds can be obtained
considering $\pre/\post$ as unit operations.
Moreover, the basic set operations are computationally less expensive 
(as they encode the relationship between the state variables) compared to the $\pre/\post$ 
symbolic operations (as they encode the transitions and thus the relationship between the 
present and the next-state variables). 
In all presented algorithms, the number of set operations is asymptotically at most the 
number of $\pre/\post$ operations. 
Hence in the sequel we focus on the number of $\pre/\post$ operations of algorithms.

\vspace{-0.5mm}
\smallskip\noindent{\em Algorithmic Problem.}
Given an MDP~$\mdp$ (resp. a graph~$G$) and a set of Streett pairs $\SP$, the problem
we consider asks for a symbolic algorithm to compute the almost-sure winning
set $\as{\mdp, \Streett{\SP}}$ (resp. the winning set $\win{G, \Streett{\SP}}$), which
is also called the \emph{qualitative analysis} of MDPs (resp. graphs).

\vspace{-2mm}
\subsection{Basic Concepts related to Algorithmic Solution}

\vspace{-1mm}
\smallskip\noindent{\em Reachability.}
For a graph~$G = (V, E)$ and a set of vertices $S \subseteq V$ 
the set $\reachG{G}{S}$ is the set of vertices of $V$ that \emph{can 
reach} a vertex of $S$ within~$G$, and it can be identified with at most
$\lvert \reachG{G}{S} \setminus S \rvert + 1$ many $\pre$ operations.

\vspace{-0.5mm}
\smallskip\noindent{\em Strongly Connected Components.}
For a set of vertices $S \subseteq V$ we denote by 
$G[S] = (S, E \cap (S \times S))$ the subgraph of the graph~$G$ induced by 
the vertices of~$S$. An induced subgraph~$G[S]$ is strongly connected 
if there exists a path in~$G[S]$ between every pair of vertices of $S$.
A \emph{strongly connected component \upbr{SCC}} of~$G$ is a set of 
vertices~$\scc\subseteq V$ such that the induced subgraph~$G[\scc]$ is 
strongly connected and $\scc$ is a maximal set in~$V$ with this property.
We call an SCC \emph{trivial} if it only contains a single 
vertex and no edges; and \emph{non-trivial} otherwise. The SCCs of~$G$ partition
its vertices and can be found in $O(n)$ symbolic steps~\cite{GentiliniPP08}.
A bottom SCC~$\scc$ in a directed graph~$G$ is an SCC with no edges from 
vertices of~$\scc$ to vertices of~$V \setminus \scc$, i.e., an SCC without
\emph{outgoing} edges. Analogously, a top SCC~$\scc$ is an SCC with no \emph{incoming}
edges from~$V \setminus \scc$.
For more intuition for bottom and top SCCs, consider the graph in which 
each SCC is contracted into a single vertex (ignoring edges within
an SCC). In the resulting directed acyclic graph the sinks represent the
bottom SCCs and the sources represent the top SCCs.
Note that every graph has at least one bottom and at least one top SCC.
If the graph is not strongly connected, then there exist at least one
top and at least one bottom SCC that are disjoint and thus one of them 
contains at most half of the vertices of~$G$.

\vspace{-0.5mm}
\smallskip\noindent{\em Random Attractors.}
In an MDP~$\mdp$ the \emph{random attractor} $\at{R}{\mdp}{W}$  of a set of vertices 
$W$ is defined as $\at{R}{\mdp}{W} =
\bigcup_{j \ge 0} Z_j$ where $Z_0 = W$ and $Z_{j+1} = Z_j \cup \cpre{R}(Z_j)$
for all $j > 0$. The attractor can be computed with at most
$\lvert \at{R}{\mdp}{W} \setminus  W \rvert + 1$ many $\cpre{R}$ operations.

\vspace{-0.5mm}
\smallskip\noindent{\em Maximal End-Components.}
Let $\ec$ be a vertex set without outgoing random edges, i.e.,
with $\Out(v) \subseteq \ec$ for all $v \in \ec \cap \vr$.
A sub-MDP of an MDP~$\mdp$ induced by a vertex set $\ec \subseteq V$
without outgoing random edges is defined as 
$\mdp[\ec] = ((\ec, E \cap (\ec \times \ec), (\vp \cap \ec, 
\vr \cap \ec), \trans)$. Note that the requirement that $\ec$ has
no outgoing random edges
is necessary in order to use the same probabilistic transition function~$\trans$.
An \emph{end-component} \upbr{EC} of an MDP~$\mdp$ is a set of 
vertices $\ec \subseteq V$ such that \upbr{a} $\ec$ has no outgoing random 
edges, i.e., $\mdp[\ec]$ is a valid sub-MDP, \upbr{b} the induced sub-MDP $\mdp[\ec]$
is strongly connected, and \upbr{c} $\mdp[\ec]$ contains at least one edge.
Intuitively, an end-component is a set of vertices for which player~1 can ensure
that the play stays within the set and almost-surely reaches all the vertices in the 
set (infinitely often). An end-component is a \emph{maximal end-component} \upbr{MEC} 
if it is maximal under set inclusion.
An end-component is \emph{trivial} if it consists of a single vertex \upbr{with
a self-loop}, otherwise it is \emph{non-trivial}.
The \emph{MEC decomposition} of an MDP consists of all MECs of the MDP.

\vspace{-0.5mm}
\smallskip\noindent{\em Good End-Components.}
All algorithms for MDPs with Streett objectives are based on finding 
good end-components, defined below. Given the union of all good end-components, the 
almost-sure winning set is obtained by computing the 
almost-sure winning set for the reachability objective
with the union of all good end-components as the target set.
The correctness of this approach is shown in \cite{ChatterjeeDHL16,Loitzenbauer16}
(see also~\cite[Chap.~10.6.3]{baierbook}). For Streett objectives a good end-component
is defined as follows. In the special case of graphs they are called good components.

\begin{dfn}[Good end-component]
	Given an MDP $\mdp$ and a set $\SP= \{(L_\idxs, U_\idxs) \mid 1 \le \idxs \le k\}$ of
	target pairs, a \emph{good end-component} is an 
	end-component $\ec$ of $\mdp$ such that for each $1 \le \idxs \le k$ either 
	$L_\idxs \cap \ec = \emptyset$ or $U_\idxs \cap \ec \ne \emptyset$.
	A maximal good end-component is a good end-component that is maximal with respect to set inclusion.
\end{dfn}

\begin{lem}[{Correctness of Computing Good End-Components~\cite[Corollary~2.6.5, Proposition~2.6.9]{Loitzenbauer16}}]
\label{lem:gec}
	For an MDP~$\mdp$ and a set~$\SP$ of target pairs, let $\mathcal{\ec}$ be the set of all maximal good end-components.
	Then $\as{\mdp, \Reach{\bigcup_{\ec \in \mathcal{\ec}}\ec}}$ is equal to $\as{\mdp, \Streett{\SP}}$.
\end{lem}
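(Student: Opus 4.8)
The plan is to prove the two inclusions separately. Write $S := \bigcup_{\ec \in \mathcal{\ec}} \ec$ for the target set of the reachability objective. Both directions rest on two classical structural facts about MDPs that I would invoke rather than reprove: (i)~in an end-component $\ec$, player~1 has a (memoryless) strategy that, started at any vertex of $\ec$, keeps the play inside $\ec$ and visits every vertex of $\ec$ infinitely often almost surely; and (ii)~under \emph{any} player~1 strategy and from any start vertex, the set $\Inf(\pat)$ of vertices visited infinitely often is almost surely an end-component of $\mdp$. Along the way I would record the small observation that distinct maximal good end-components are disjoint: if two of them shared a vertex, their union would again have no outgoing random edges and would be strongly connected (concatenate internal paths through the shared vertex), hence would be an end-component, and it would again satisfy the good-component condition pairwise, contradicting maximality.

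For $\as{\mdp,\Reach{S}} \subseteq \as{\mdp,\Streett{\SP}}$, I would take $v$ in the left-hand side with a strategy $\straa_1$ reaching $S$ from $v$ with probability~$1$, and compose it with the in-component strategy of fact~(i): play $\straa_1$ until the play first enters $S$; by disjointness this pins down a unique maximal good end-component $\ec$ containing the entered vertex, and from then on play the strategy of fact~(i) for that $\ec$. With probability~$1$ this yields a play with $\Inf(\pat)=\ec$ for some good end-component $\ec$, and then for every $1 \le \idxs \le k$ either $L_\idxs\cap\ec=\emptyset$, so $L_\idxs\cap\Inf(\pat)=\emptyset$, or $U_\idxs\cap\ec\neq\emptyset$, so $U_\idxs\cap\Inf(\pat)\neq\emptyset$; hence $\pat\in\Streett{\SP}$ and $v\in\as{\mdp,\Streett{\SP}}$.

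For the reverse inclusion I would take $v$ with a strategy $\straa$ winning $\Streett{\SP}$ almost surely. By fact~(ii), almost surely $\ec:=\Inf(\pat)$ is an end-component, and on the probability-one event $\pat\in\Streett{\SP}$ we additionally get, for each $1 \le \idxs \le k$, that $L_\idxs\cap\ec=\emptyset$ or $U_\idxs\cap\ec\neq\emptyset$ --- which is exactly the definition of a good end-component. So almost surely $\Inf(\pat)$ is a good end-component, hence contained in some maximal good end-component, hence $\Inf(\pat)\subseteq S$; in particular the play reaches $S$, so $\pr{\straa}{v}{\Reach{S}}=1$ and $v\in\as{\mdp,\Reach{S}}$. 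Combining the two inclusions gives the lemma; the graph case is the specialization $\vr=\emptyset$, with ``there exists a winning play'' in place of ``with probability~$1$'' and using that $\win{\cdot}$ and $\as{\cdot}$ coincide there.

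The main obstacle is fact~(ii): that the limit set $\Inf(\pat)$ of a run is almost surely an end-component regardless of how much (possibly infinite) memory the strategy uses. That is the genuine content; once it and fact~(i) are available, the rest is bookkeeping --- disjointness of maximal good end-components, the composition of the reachability strategy with the in-component strategy, and the purely Boolean observation that a vertex set satisfies the good-end-component condition if and only if, viewed as a candidate $\Inf(\pat)$, it satisfies the Streett condition. I would therefore quote the standard end-component theory (as in the cited references) for (i) and (ii) and devote the write-up to the strategy-composition argument.
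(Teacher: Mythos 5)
Your proof is correct. The paper itself does not prove this lemma---it imports it by citation from the references given in the statement---and your argument is precisely the standard one underlying those sources: the ``fundamental theorem of end-components'' (your fact~(ii), that $\Inf(\pat)$ is almost surely an end-component under any strategy) for one inclusion, and strategy composition with the stay-and-visit-everything strategy inside an end-component (your fact~(i)) for the other. The auxiliary observations you supply---that maximal good end-components are pairwise disjoint because the union of two intersecting good end-components is again a good end-component, and that every good end-component lies inside a maximal one---are exactly the bookkeeping needed, so nothing is missing.
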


\vspace{-0.5mm}
\smallskip\noindent{\em Iterative Vertex Removal.}
All the algorithms for Streett objectives maintain vertex sets that are 
candidates for good end-components. For such a vertex set~$S$ we (a) 
refine the maintained sets according to the SCC decomposition of $\mdp[S]$
and (b) for a set of vertices~$W$ for which we know that it cannot be contained 
in a good end-component, we remove its random attractor from $S$. The following lemma 
shows the correctness of these operations.

\vspace{-0.5mm}
\begin{lem}[{Correctness of Vertex Removal~\cite[Lemma~2.6.10]{Loitzenbauer16}}]
\label{lem:eccontained}
	Given an MDP $\mdp = ((V, E), (\vp, \vr), \trans)$, let $\ec$ be an end-component with $X \subseteq S$ for 
	some $S \subseteq V$. Then 
	\begin{compactitem}
	 \item[\upbr{a}]$\ec \subseteq \scc$ for one SCC~$\scc$ of $\mdp[S]$ and
	 \item[\upbr{b}] $\ec \subseteq S \setminus \at{R}{\mdp'}{W}$ for each
     $W \subseteq V \setminus \ec$ and each sub-MDP~$\mdp'$ containing~$\ec$.
	\end{compactitem}
\end{lem}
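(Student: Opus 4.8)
The plan is to prove the two parts separately, in each case using the definition of an end-component together with the basic facts about SCCs and random attractors collected above. Throughout, let $\ec$ be an end-component with $\ec \subseteq S$, so in particular $\mdp[\ec]$ is strongly connected, has at least one edge, and has no outgoing random edges.

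\emph{Part (a).} First I would observe that $\mdp[\ec]$ is a strongly connected subgraph of $\mdp[S]$: every edge of $\mdp[\ec]$ is an edge of $G$ with both endpoints in $\ec \subseteq S$, hence an edge of $\mdp[S]$, and by definition there is a path in $\mdp[\ec]$ between every pair of vertices of $\ec$. A strongly connected vertex set of a graph is always contained in a single SCC of that graph (take any $v \in \ec$; every other $u \in \ec$ is reachable from $v$ and can reach $v$ within $\mdp[\ec] \subseteq \mdp[S]$, so $u$ lies in the SCC of $\mdp[S]$ that contains $v$). Hence $\ec \subseteq \scc$ for the SCC $\scc$ of $\mdp[S]$ containing any fixed vertex of $\ec$. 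This part is routine.

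\emph{Part (b).} Fix $W \subseteq V \setminus \ec$ and a sub-MDP $\mdp'$ with $\ec \subseteq \mdp'$; write $A = \at{R}{\mdp'}{W}$. I would argue by contradiction: suppose $\ec \cap A \neq \emptyset$. Recall $A = \bigcup_{j \ge 0} Z_j$ with $Z_0 = W$ and $Z_{j+1} = Z_j \cup \cpre{R}(Z_j)$, so let $j$ be minimal with $\ec \cap Z_j \neq \emptyset$ and pick $v \in \ec \cap Z_j$. Since $\ec \cap W = \emptyset$ we have $j \ge 1$, and by minimality $v \in \cpre{R}(Z_{j-1}) \setminus Z_{j-1}$. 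Now I do a case split on whether $v \in \vp$ or $v \in \vr$. If $v \in \vp$, then $\cpre{R}$ forces $\Out(v) \subseteq Z_{j-1}$ in $\mdp'$; but $\mdp[\ec]$ is strongly connected with at least one edge, so $v$ has at least one successor inside $\ec$ within $\mdp[\ec]$ (and hence within $\mdp'$, since $\ec \subseteq \mdp'$), contradicting $\ec \cap Z_{j-1} = \emptyset$. If $v \in \vr$, then since $\ec$ has no outgoing random edges and $v \in \ec$, \emph{every} successor of $v$ (in $\mdp$, hence in $\mdp'$) lies in $\ec$; but $\cpre{R}$ requires at least one successor of $v$ to lie in $Z_{j-1}$, again contradicting $\ec \cap Z_{j-1} = \emptyset$. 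Either way we reach a contradiction, so $\ec \cap A = \emptyset$, i.e.\ $\ec \subseteq S \setminus A$ (using $\ec \subseteq S$).

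The main obstacle is making sure the $\cpre{R}$ argument in part (b) is stated relative to the correct sub-MDP: the successor sets used in the definition of the attractor must be those of $\mdp'$, and the two structural facts I rely on --- ``$v \in \vp \cap \ec$ has a successor in $\ec$'' (from strong connectivity plus the existence of an edge, after noting no vertex of a strongly connected sub-MDP with an edge can be a sink) and ``$v \in \vr \cap \ec$ has all successors in $\ec$'' (from the no-outgoing-random-edges property of end-components) --- must both survive the passage from $\mdp$ to $\mdp'$, which they do precisely because $\ec \subseteq \mdp'$ and $\mdp'$ inherits $\trans$ on its random vertices. A minor point to handle cleanly is the degenerate case where $\ec$ is a single vertex with a self-loop: it is still strongly connected with an edge, so the $v \in \vp$ argument goes through verbatim, and if that vertex is random the self-loop is a random edge staying inside $\ec$, consistent with the no-outgoing-random-edges condition.
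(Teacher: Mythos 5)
Your proof is correct. The paper does not actually prove this lemma—it is imported by citation from [Loitzenbauer16, Lemma 2.6.10]—but your argument is the standard one: part (a) is immediate from the strong connectivity of $\mdp[\ec]$ as a subgraph of $\mdp[S]$, and part (b) correctly uses a minimal attractor level $j$ together with the two structural facts (a player-1 vertex of $\ec$ always has a successor inside $\ec$, a random vertex of $\ec$ has all successors inside $\ec$), both of which transfer to $\mdp'$ because $\ec$ is contained in it. Your attention to working with $\Out$ relative to $\mdp'$ and to the single-vertex-with-self-loop case is exactly the right care to take.
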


\vspace{-0.5mm}
Let $\ec$ be a good end-component. Then $\ec$ is an end-component and for each index $\idxs$,
$\ec \cap U_\idxs = \emptyset$ implies $\ec \cap L_\idxs = \emptyset$ . 
Hence we obtain the following corollary.

\vspace{-0.5mm}
\begin{cor}[{\cite[Corollary~4.2.2]{Loitzenbauer16}}]\label{cor:geccontained}
Given an MDP $\mdp$, let $\ec$ be a \emph{good} end-component with 
$X \subseteq S$ for some $S \subseteq V$. 
For each $i$ with $S \cap U_i = \emptyset$ it holds that
$\ec \subseteq S \setminus \at{R}{\mdp[S]}{ L_i \cap S}$.
\end{cor}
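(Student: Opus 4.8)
The plan is to obtain the claim directly from Lemma~\ref{lem:eccontained}(b) together with the defining property of good end-components recorded in the paragraph just above the corollary. Fix an index $i$ with $S \cap U_i = \emptyset$. First I would observe that, since $\ec \subseteq S$, we have $\ec \cap U_i \subseteq S \cap U_i = \emptyset$, so $U_i \cap \ec = \emptyset$. Because $\ec$ is a good end-component, the definition of a good end-component (which requires $L_i \cap \ec = \emptyset$ or $U_i \cap \ec \neq \emptyset$, equivalently the observation preceding the corollary) forces $L_i \cap \ec = \emptyset$. In particular $(L_i \cap S) \cap \ec = \emptyset$, i.e., $L_i \cap S \subseteq V \setminus \ec$.

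Next I would invoke Lemma~\ref{lem:eccontained}(b) with the end-component $\ec$ (which is an end-component, being a good one), the vertex set $W := L_i \cap S$, and the sub-MDP $\mdp' := \mdp[S]$. The hypothesis $W \subseteq V \setminus \ec$ holds by the previous step, and $\mdp[S]$ is a sub-MDP containing $\ec$ since $\ec \subseteq S$. The lemma then yields $\ec \subseteq S \setminus \at{R}{\mdp[S]}{L_i \cap S}$, which is exactly the asserted inclusion.

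The derivation is essentially bookkeeping, so I do not expect a real obstacle; the one point deserving a moment's care is that $\mdp[S]$ must be an admissible sub-MDP for both the random attractor $\at{R}{\mdp[S]}{L_i \cap S}$ and the application of Lemma~\ref{lem:eccontained}(b) to be meaningful. This holds in the intended use of the corollary, where the maintained set $S$ has no outgoing random edges so that $\mdp[S]$ is a valid sub-MDP; more generally, a random vertex with an edge leaving $S$ lies neither in $\ec$ nor in any end-component inside $S$, and hence cannot affect the claimed inclusion.
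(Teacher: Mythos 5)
Your proposal is correct and follows essentially the same route as the paper: the paper derives the corollary from the observation that $\ec \cap U_i = \emptyset$ implies $\ec \cap L_i = \emptyset$ for a good end-component, combined with Lemma~\ref{lem:eccontained}(b) applied to $W = L_i \cap S$ and the sub-MDP $\mdp[S]$. Your additional remark about $\mdp[S]$ being an admissible sub-MDP is a reasonable point of care that the paper leaves implicit.
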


\vspace{-0.5mm}
For an index~$\idxs$ with $S \cap U_\idxs = \emptyset$ we call the 
vertices of $S \cap L_\idxs$ \emph{bad vertices}. The set of all bad 
vertices $\bad(S) = \bigcup_{1 \le i \le k} \set{v \in L_i \cap S \mid 
U_i \cap S = \emptyset}$ can be computed with $2 k$ set operations.

\vspace{-3mm}
\section{Symbolic Divide-and-Conquer with Lock-Step Search}\label{sec:lss}

\vspace{-1mm}
In this section we present a symbolic version of the lock-step
search for strongly connected subgraphs~\cite{HenzingerT96}. 
This symbolic version is used in all subsequent results, i.e., the
sub-quadratic symbolic algorithms for graphs and MDPs with Streett objectives,
and for MEC decomposition.

\vspace{-1mm}
\smallskip\noindent{\em Divide-and-Conquer.}
The common property of the algorithmic problems we consider in this work is
that the goal is to identify subgraphs of the input graph $G = (V, E)$
that are strongly connected and satisfy some additional properties. The difference between
the problems lies in the required additional properties. We 
describe and analyze the Procedure~\ref{proc:lockstep} that we use in all our improved 
algorithms to efficiently implement a divide-and-conquer approach based on the
requirement of strong connectivity, that is, we divide a subgraph~$G[S]$,
induced by a set of vertices~$S$, into 
two parts that are not strongly connected within $G[S]$ or detect that $G[S]$ is 
strongly connected. 

\vspace{-1mm}
\smallskip\noindent{\em Start Vertices of Searches.}
The input to Procedure~\ref{proc:lockstep} is a set of vertices~$S \subseteq V$ 
and two subsets of $S$ denoted by~$H_S$ and~$T_S$.
In the algorithms that call the procedure as a subroutine, vertices
contained in~$H_S$ have lost incoming edges (i.e., they were a  ``head'' of
a lost edge) and vertices contained in~$T_S$ have lost outgoing edges
(i.e., they were a  ``tail'' of a lost edge) since the last time a superset
of $S$ was identified as being strongly connected.
For each vertex~$h$ of $H_S$ the procedure
conducts a backward search (i.e., a sequence of $\pre$ operations) within~$G[S]$ to 
find the vertices of $S$ that can 
reach~$h$; and analogously a forward search (i.e., a sequence of $\post$ operations)
from each vertex~$t$ of $T_S$ is conducted.

\vspace{-1mm}
\smallskip\noindent{\em Intuition for the Choice of Start Vertices.}
If the subgraph~$G[S]$ is not strongly connected, then it contains at least 
one top SCC and at least one bottom SCC that are disjoint. Further, if for 
a superset $S' \supset S$ the subgraph~$G[S']$ was strongly connected, then 
each top SCC of $G[S]$ contains a vertex that had an additional 
incoming edge in $G[S']$ compared to $G[S]$, and analogously each bottom SCC of 
$G[S]$ contains a vertex that had an additional outgoing edge. Thus by keeping
track of the vertices that lost incoming or outgoing edges, the following invariant
will be maintained by all our improved algorithms.

\vspace{-1mm}
\begin{invariant}[Start Vertices Sufficient]\label{inv:HT} We have $H_S, T_S \subseteq S$.
	Either \upbr{a} $H_S \cup T_S = \emptyset$ and 
	$G[S]$ is strongly connected or \upbr{b} at least one vertex of
	each top SCC of $G[S]$ is contained in $H_S$ and
	at least one vertex of 
	each bottom SCC of $G[S]$ is contained in $T_S$.
\end{invariant}

\begin{procedure}
\caption{Lock-Step-Search($G$, $S$, $H_S$, $T_S$)}
\label{proc:lockstep}
		\tcc{$\pre$ and $\post$ defined w.r.t.\ to $G$}
		\lForEach{$v \in H_S \cup T_S$}{
			$C_v \gets \set{v}$\
		}
		\While{true}{
			$H'_S \gets H_S$, $T'_S \gets T_S$\;
			\ForEach(\tcc*[h]{search for top SCC}){$h \in H_S$}{
					$\scc'_h \gets (\scc_h \cup \pre(\scc_h)) \cap S$\;
					\lIf{$\lvert \scc'_h \cap H'_S \rvert > 1$}{
						$H'_S \gets H'_S \setminus \set{h}$
					}\Else{
						\lIf{$\scc'_h = \scc_h$}{
						\Return ($\scc_h$, $H'_S$, $T_S$)
						}
						$\scc_h \gets \scc'_h$\;
					}
			}
			\ForEach(\tcc*[h]{search for bottom SCC}){$t \in T_S$}{
				$\scc'_t \gets (\scc_t \cup \post(\scc_t)) \cap S$\;
				\lIf{$\lvert \scc'_t \cap T'_S \rvert > 1$}{
					$T'_S \gets T'_S \setminus \set{t}$
				}\Else{
					\lIf{$\scc'_t = \scc_t$}{
						\Return ($\scc_t$, $H'_S$, $T'_S$)
					}
					$\scc_t \gets \scc'_t$\;
				}
			}
			$H_S \gets H'_S$, $T_S \gets T'_S$\;
		}
\end{procedure}

\vspace{-1mm}
\smallskip\noindent{\em Lock-Step Search.}
The searches from the vertices of $H_S \cup T_S$
are performed in \emph{lock-step}, that is, (a) one step is performed
in each of the searches before the next step of any search is done and (b) all 
searches stop as soon as the first of the searches finishes. 
This is implemented in Procedure~\ref{proc:lockstep} as follows.
A step in the search from a vertex $t \in T_S$ 
(and analogously for $h \in H_S$)
corresponds to the execution of the iteration of the 
for-each loop for $t \in T_S$. In an iteration of a for-each loop 
we might discover that we do not need to consider this search further 
(see the paragraph on ensuring strong connectivity below)
and update the set $T_S$ (via $T'_S$) for future iterations accordingly.
Otherwise the set $C_t$ is either strictly increasing in this step of the 
search or the search for $t$ terminates and we return the set of vertices 
in $G[S]$ that are reachable from $t$.
So the two for-each loops over the vertices of $T_S$ and $H_S$ 
that are executed in an iteration of the while-loop perform
one step of each of the searches and the while-loop stops as 
soon as a search stops, i.e., a return statement is executed and hence 
this implements properties~(a) and~(b) of lock-step search.
Note that the while-loop terminates, i.e., a return statement is executed
eventually because for all $t \in T_S$ 
(and resp.\ for all $h \in H_S$) the sets $C_t$ are 
monotonically increasing over the iterations of the while-loop, 
we have $C_t \subseteq S$, and if some set $C_t$ does not increase 
in an iteration, then it is either removed from $T_S$ and thus not 
considered further or a return statement is executed. 
Note that when a search from a vertex $t \in T_S$ stops, it has discovered a 
maximal set of vertices~$\scc$ that can be reached from $t$; and analogously
for $h \in H_S$. Figure~\ref{fig:lssex} shows a small intuitive
example of a call to the procedure.

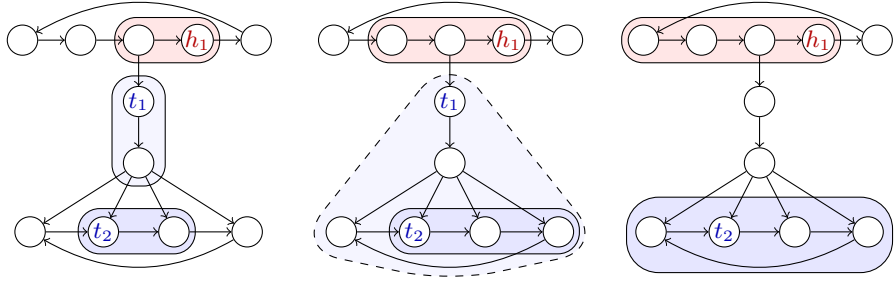
\begin{figure}
\begin{footnotesize}
\begin{center}
\begin{tikzpicture}[circle,minimum size=0.4cm,inner sep=0mm,fill=white,->]

	\node (0-BGh1) [draw,rectangle,rounded corners=8pt,fill=red!10,
    minimum height=0.6cm, minimum width=1.38cm] {};
	\node (0-a1) [draw,left=1cm of 0-BGh1] {};
    \node (0-a2) [draw,right=0.35cm of 0-a1] {};
    \node (0-a3) [draw,right=0.35cm of 0-a2,fill=white] {};
    \node (0-a4) [draw,right=0.35cm of 0-a3,fill=white,text=black!30!red] {$h_1$};
    \node (0-a5) [draw,right=0.35cm of 0-a4] {};
	\node (0-BGt1) [draw,rectangle,rounded corners=8pt,fill=blue!04,
    minimum height=1.46cm, minimum width=0.7cm, below=0.26cm of 0-a3] {};
    \node (0-BGt2) [draw,rectangle,rounded corners=8pt,fill=blue!10,
    minimum height=0.6cm, minimum width=1.54cm,
    below right=2.08cm and 0.58cm of 0-a1] {};
    \node (0-b1) [draw,below=0.4cm of 0-a3,text=black!30!blue,fill=white] {$t_1$};
    \node (0-c1) [draw,below=0.4cm of 0-b1,fill=white] {};
    \node (0-dd) [below=0.5cm of 0-c1] {};
    \node (0-d2) [draw,left=0.05cm of 0-dd,text=black!30!blue,fill=white] {$t_2$};
    \node (0-d1) [draw,left=0.55cm of 0-d2,fill=white] {};
    \node (0-d3) [draw,right=0.05cm of 0-dd,fill=white] {};
    \node (0-d4) [draw,right=0.55cm of 0-d3,fill=white] {};
    
    \draw [] (0-a1) to[] (0-a2);
    \draw [] (0-a2) to[] (0-a3);
    \draw [] (0-a3) to[] (0-a4);
    \draw [] (0-a4) to[] (0-a5);
    \draw [] (0-a5) to[bend right] (0-a1);
    \draw [] (0-a3) to[] (0-b1);
    \draw [] (0-b1) to[] (0-c1);
    \draw [] (0-c1) to[] (0-d1);
    \draw [] (0-c1) to[] (0-d2);
    \draw [] (0-c1) to[] (0-d3);
    \draw [] (0-c1) to[] (0-d4);
    \draw [] (0-d1) to[] (0-d2);
    \draw [] (0-d2) to[] (0-d3);
    \draw [] (0-d3) to[] (0-d4);
    \draw [] (0-d4) to[bend left] (0-d1);
    
	\node (1-BGh1) [draw,rectangle,rounded corners=8pt,fill=red!10,
    minimum height=0.6cm, minimum width=2.14cm,right=1.26cm of 0-a5] {};
	\node (1-a1) [draw,left=0.24cm of 1-BGh1] {};
    \node (1-a2) [draw,right=0.35cm of 1-a1,fill=white] {};
    \node (1-a3) [draw,right=0.35cm of 1-a2,fill=white] {};
    \node (1-a4) [draw,right=0.35cm of 1-a3,fill=white,text=black!30!red] {$h_1$};
    \node (1-a5) [draw,right=0.35cm of 1-a4] {};
    \draw[dashed,fill=blue!04,rounded corners=18pt]
        (1-BGh1.south)+(0,0.14) -- +(-2.06,-2.46) -- +(0,-2.9)
        -- +(2.06,-2.46) -- cycle;
    \node (1-BGt2) [draw,rectangle,rounded corners=8pt,fill=blue!10,
    minimum height=0.6cm, minimum width=2.5cm,
    right=2.54cm of 0-BGt2] {};
    \node (1-b1) [draw,below=0.4cm of 1-a3,text=black!30!blue,fill=white] {$t_1$};
    \node (1-c1) [draw,below=0.4cm of 1-b1,fill=white] {};
    \node (1-dd) [below=0.5cm of 1-c1] {};
    \node (1-d2) [draw,left=0.05cm of 1-dd,text=black!30!blue,fill=white] {$t_2$};
    \node (1-d1) [draw,left=0.55cm of 1-d2,fill=white] {};
    \node (1-d3) [draw,right=0.05cm of 1-dd,fill=white] {};
    \node (1-d4) [draw,right=0.55cm of 1-d3,fill=white] {};
    
    \draw [] (1-a1) to[] (1-a2);
    \draw [] (1-a2) to[] (1-a3);
    \draw [] (1-a3) to[] (1-a4);
    \draw [] (1-a4) to[] (1-a5);
    \draw [] (1-a5) to[bend right] (1-a1);
    \draw [] (1-a3) to[] (1-b1);
    \draw [] (1-b1) to[] (1-c1);
    \draw [] (1-c1) to[] (1-d1);
    \draw [] (1-c1) to[] (1-d2);
    \draw [] (1-c1) to[] (1-d3);
    \draw [] (1-c1) to[] (1-d4);
    \draw [] (1-d1) to[] (1-d2);
    \draw [] (1-d2) to[] (1-d3);
    \draw [] (1-d3) to[] (1-d4);
    \draw [] (1-d4) to[bend left] (1-d1);
    
	\node (2-BGh1) [draw,rectangle,rounded corners=8pt,fill=red!10,
    minimum height=0.6cm, minimum width=2.88cm,right=0.5cm of 1-a5] {};
	\node (2-a1) [draw,left=-0.5cm of 2-BGh1,fill=white] {};
    \node (2-a2) [draw,right=0.35cm of 2-a1,fill=white] {};
    \node (2-a3) [draw,right=0.35cm of 2-a2,fill=white] {};
    \node (2-a4) [draw,right=0.35cm of 2-a3,fill=white,text=black!30!red] {$h_1$};
    \node (2-a5) [draw,right=0.35cm of 2-a4] {};
    \node (2-b1) [draw,below=0.4cm of 2-a3] {};
    \node (2-c1) [draw,below=0.4cm of 2-b1] {};
    \node (2-BGt2) [draw,rectangle,rounded corners=12pt,fill=blue!10,
    minimum height=1.0cm, minimum width=3.5cm,
    below=0.24cm of 2-c1] {};
    \node (2-dd) [below=0.5cm of 2-c1] {};
    \node (2-d2) [draw,left=0.05cm of 2-dd,text=black!30!blue,fill=white] {$t_2$};
    \node (2-d1) [draw,left=0.55cm of 2-d2,fill=white] {};
    \node (2-d3) [draw,right=0.05cm of 2-dd,fill=white] {};
    \node (2-d4) [draw,right=0.55cm of 2-d3,fill=white] {};
    
    \draw [] (2-a1) to[] (2-a2);
    \draw [] (2-a2) to[] (2-a3);
    \draw [] (2-a3) to[] (2-a4);
    \draw [] (2-a4) to[] (2-a5);
    \draw [] (2-a5) to[bend right] (2-a1);
    \draw [] (2-a3) to[] (2-b1);
    \draw [] (2-b1) to[] (2-c1);
    \draw [] (2-c1) to[] (2-d1);
    \draw [] (2-c1) to[] (2-d2);
    \draw [] (2-c1) to[] (2-d3);
    \draw [] (2-c1) to[] (2-d4);
    \draw [] (2-d1) to[] (2-d2);
    \draw [] (2-d2) to[] (2-d3);
    \draw [] (2-d3) to[] (2-d4);
    \draw [] (2-d4) to[bend left] (2-d1);    
    
\end{tikzpicture}
\end{center}
\end{footnotesize}
\vspace{-6mm}
\caption{An example of symbolic lock-step search showing the first three
iterations of the main while-loop. Note that during the second iteration,
the search started from $t_1$ is disregarded since it collides with $t_2$.
In the subsequent fourth iteration, the search started from $t_2$ is
returned by the procedure.}
\label{fig:lssex}
\end{figure}

\vspace{-1mm}
\smallskip\noindent{\em Comparison to Explicit Algorithm.}
In the \emph{explicit} version of the algorithm~\cite{HenzingerT96,ChatterjeeDHL16} 
the search from vertex~$t \in T_S$ performs a depth-first search
that terminates exactly when every \emph{edge} reachable from $t$ is explored.
Since any search that starts outside of a bottom SCC but reaches the bottom SCC 
has to explore more edges than the search started inside of the bottom SCC, 
the first search from a vertex of $T_S$ that terminates has exactly explored 
(one of) the smallest (in the number of edges) bottom SCC(s) of $G[S]$. Thus on
explicit graphs the explicit lock-step search from the vertices of $H_S \cup T_S$
finds (one of) the smallest (in the number of edges) top or bottom 
SCC(s) of $G[S]$ in time proportional to the number of searches times the 
number of edges in the identified SCC. 
In \emph{symbolically} represented graphs it can happen (1) that
a search started outside of a bottom (resp.\ top) SCC terminates earlier than the 
search started within the bottom (resp.\ top) SCC and (2) that a search
started in a larger (in the number of vertices) top or bottom SCC terminates
before one in a smaller top or bottom SCC. We discuss next how we address these
two challenges.

\vspace{-1mm}
\smallskip\noindent{\em Ensuring Strong Connectivity.}
First, we would like the set returned by Procedure~\ref{proc:lockstep} 
to indeed be a top or bottom SCC of $G[S]$. For this we use the following observation
for bottom SCCs that can be applied to top SCCs analogously.
If a search starting from a vertex of $t_1 \in T_S$ encounters another
vertex $t_2 \in T_S$, $t_1 \ne t_2$, there are two possibilities: either (1)
both vertices are in the same SSC or (2) $t_1$ can reach 
$t_2$ but not vice versa. In Case~(1) the searches from both vertices can 
explore all vertices in the SCC and thus it is sufficient to only search from 
one of them. In Case~(2) the SCC of~$t_1$ has an outgoing
edge and thus cannot be a bottom SCC. Hence in both cases we can remove the vertex
$t_1$ from the set $T_S$ while still maintaining Invariant~\ref{inv:HT}.
By Invariant~\ref{inv:HT} we further have that each search from a vertex of $T_S$
that is not in a bottom SCC encounters another vertex of $T_S$ in its search and 
therefore is removed from the set $T_S$ during Procedure~\ref{proc:lockstep} (if no
top or bottom SCC is found earlier). This ensures that the returned set 
is either a top or a bottom SCC.\footnote{To improve the practical performance, we 
return the updated sets $H_S$ and $T_S$. By the above argument this preserves Invariant~\ref{inv:HT}.}

\vspace{-1mm}
\smallskip\noindent{\em Bound on Symbolic Steps.}
Second, observe that we can still bound the number of symbolic steps needed for 
the search that terminates first by the 
number of \emph{vertices} in the smallest top or bottom SCC of $G[S]$, since this 
is an upper bound on the symbolic steps needed for the search started 
in this SCC. Thus provided Invariant~\ref{inv:HT},
we can bound the number of symbolic steps in Procedure~\ref{proc:lockstep}
to identify a vertex set $\scc \subsetneq S$ such that $\scc$ and $S \setminus \scc$
are not strongly connected in $G[S]$
by $O((\lvert H_S \rvert + \lvert T_S \rvert) \cdot \min(\lvert \scc \rvert, \lvert S \setminus \scc \rvert))$.
In the algorithms that call Procedure~\ref{proc:lockstep} we charge the 
number of symbolic steps in the procedure to the vertices in the smaller 
set of $\scc$ and $S \setminus \scc$; this ensures that each 
vertex is charged at most $O(\log{n})$ times over the whole algorithm.
We obtain the following result (proof in Appendix~\ref{sec:applss}).

\vspace{-1mm}
\begin{thm}[Lock-Step Search]\label{thm:lss}
	Provided Invariant~\ref{inv:HT} holds, 
	Procedure~\ref{proc:lockstep}\upbr{$G$, $S$, $H_S$, $T_S$}
	returns a top or bottom SCC~$\scc$ of $G[S]$. It uses $O((\lvert H_S \rvert 
	+ \lvert T_S \rvert) \cdot \min(\lvert \scc \rvert, \lvert S \setminus \scc \rvert))$  symbolic steps
	if $\scc \ne S$ and $O((\lvert H_S \rvert + \lvert T_S \rvert) \cdot \lvert \scc \rvert)$ otherwise.
\end{thm}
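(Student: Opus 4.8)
The plan is to verify three things in order: (i) that Procedure~\ref{proc:lockstep} terminates and returns \emph{some} set $\scc$; (ii) that the returned $\scc$ is in fact a top or bottom SCC of $G[S]$; and (iii) the bound on the number of symbolic steps. Termination is essentially already argued in the text: each $\scc_v$ for $v\in H_S\cup T_S$ is monotonically non-decreasing, bounded above by $S$, and in each while-iteration every search that is not pruned either strictly grows its $\scc_v$ or triggers a \texttt{return}; since $S$ is finite and $H_S\cup T_S$ shrinks only by pruning, a \texttt{return} must eventually fire. So the first real step is correctness of the returned set.

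For (ii), suppose without loss of generality that the procedure returns via the $t\in T_S$ branch, so $\scc=\scc_t$ with $\scc'_t=(\scc_t\cup\post(\scc_t))\cap S=\scc_t$; this means $\scc_t$ is closed under $\post$ within $G[S]$, i.e.\ it has no outgoing edges in $G[S]$, and by construction (starting from $\{t\}$ and repeatedly adding $\post$) it is exactly the set of vertices reachable from $t$ in $G[S]$. A $\post$-closed, forward-reachable-from-a-single-vertex set is strongly connected: every vertex in it is reachable from $t$, and since the set has no outgoing edges, $t$ is reachable from every vertex in it (follow any path, which stays inside the set, into the bottom SCC containing... — more carefully: the set of vertices that can reach $t$ within $\scc_t$ is non-empty and $\pre$-closed inside $\scc_t$, but $\scc_t$ being reachable from $t$ forces this set to be all of $\scc_t$). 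Hence $\scc_t$ is a bottom SCC of $G[S]$. The subtle point, and where Invariant~\ref{inv:HT} is needed, is to rule out the procedure \emph{failing} to return a genuine top/bottom SCC — but since we just showed any \texttt{return} yields one, the invariant is only needed to guarantee that \emph{if} $G[S]$ is not strongly connected then some search does reach a return before its start vertex is pruned; this follows because by Invariant~\ref{inv:HT}(b) every bottom SCC of $G[S]$ contains a start vertex of $T_S$, the search from such a vertex is never pruned (it never encounters a second vertex of $T_S$, as its SCC has no outgoing edges so cannot reach another $T_S$-vertex sitting in a different SCC), and that search terminates. The pruning argument itself (Case (1)/(2) reasoning in the ``Ensuring Strong Connectivity'' paragraph) shows pruning preserves Invariant~\ref{inv:HT}, which is what lets the argument be applied inductively across while-iterations.

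For (iii), the key observation is the one already highlighted: bound the symbolic steps of the \emph{first} search to terminate by the symbolic steps of the search started \emph{inside} the smallest (in vertices) top or bottom SCC, which is $O(\min(|\scc|,|S\setminus\scc|)+1)$ $\pre$/$\post$ operations since each step of that search either strictly enlarges its $\scc_v$ — and $\scc_v$ stays within that SCC of size $\le\min(|\scc|,|S\setminus\scc|)$ — or terminates. By lock-step, the total number of while-iterations equals the number of steps of the first-terminating search, and each while-iteration performs one step in each of the at most $|H_S|+|T_S|$ active searches, giving $O((|H_S|+|T_S|)\cdot\min(|\scc|,|S\setminus\scc|))$ when $\scc\subsetneq S$. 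When $\scc=S$ (the whole set is strongly connected, so all of $H_S\cup T_S$ is found to be inside one SCC), the bound is instead $O((|H_S|+|T_S|)\cdot|\scc|)$ since the relevant search may have to explore all of $S$. Set operations per step are $O(1)$ in number, so they do not dominate.

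The main obstacle I expect is not any single calculation but pinning down precisely why, when $G[S]$ is not strongly connected, the \emph{first} search to finish is guaranteed to have explored an SCC no larger (in vertices) than the smallest top/bottom SCC — i.e.\ ruling out the pathological symbolic behavior mentioned in the ``Comparison to Explicit Algorithm'' paragraph where a search started \emph{outside} a small SCC finishes first. The resolution is that we do not need the first-finishing search to \emph{be} the one inside the smallest SCC; we only need that \emph{some} search inside the smallest top/bottom SCC is running and would finish within $\min(|\scc|,|S\setminus\scc|)$ steps, so whichever search finishes first does so within that many while-iterations — this is exactly the ``upper bound'' framing, and getting the charging/accounting ($O(\log n)$ charges per vertex over the whole calling algorithm, via always recursing into the smaller side) stated cleanly is the part requiring the most care, though that last accounting properly belongs to the calling algorithms rather than to this theorem.
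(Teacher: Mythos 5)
There is a genuine gap in your correctness argument (part (ii)). You claim that ``a $\post$-closed, forward-reachable-from-a-single-vertex set is strongly connected,'' and justify it by saying that the set $R$ of vertices that can reach $t$ within $\scc_t$ is $\pre$-closed and that ``$\scc_t$ being reachable from $t$ forces this set to be all of $\scc_t$.'' That inference is false: take $t \to a$ with a self-loop at $a$ and no edge back to $t$; the converged forward-reachable set is $\{t,a\}$, it is $\post$-closed, yet $R=\{t\}$ and the set is not strongly connected. Consequently your statement ``we just showed any return yields one'' does not hold, and the role you then assign to Invariant~\ref{inv:HT} (merely guaranteeing that \emph{some} search returns) is misplaced. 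The paper's proof uses the invariant for exactly the step you skipped: a $\post$-search whose start vertex $t$ is \emph{not} in a bottom SCC of $G[S]$ can never execute the return statement, because by the time its set has converged it contains an entire bottom SCC, which by the invariant contains another surviving vertex of $T_S$; together with $t$ itself this makes $\lvert \scc'_t \cap T'_S\rvert > 1$, and the pseudocode checks this cancellation condition \emph{before} the convergence check. Hence only searches started inside a bottom (resp.\ top) SCC can return, and for those the converged set is contained in and covers that SCC, so it equals it. Your parenthetical claim that a search started in a bottom SCC ``is never pruned'' is also slightly too strong --- it can be pruned if a second $T_S$-vertex lies in the \emph{same} bottom SCC --- but then another vertex of that SCC remains in $T_S$, which is why the invariant (and hence progress) is preserved.

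Your part (iii) is essentially the paper's argument: the uncancelled search inside a smallest top/bottom SCC $\tilde C$ terminates within $\lvert\tilde C\rvert$ while-iterations, each iteration costs $O(\lvert H_S\rvert + \lvert T_S\rvert)$ one-step operations, and when the returned set $C$ satisfies $C \ne S$ one has $\lvert \tilde C\rvert \le \min(\lvert C\rvert, \lvert S\setminus C\rvert)$ because $C$ is itself a top or bottom SCC and $S\setminus C$ contains another one. So the accounting is fine; only the strong-connectivity argument needs to be repaired along the lines above.
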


\vspace{-3mm}
\section{Graphs with Streett Objectives}\label{sec:graphs}

\vspace{-1mm}
\smallskip\noindent{\bf Basic Symbolic Algorithm.}
Recall that for a given graph (with $n$ vertices) and a Streett objective (with $k$ 
target pairs) each non-trivial strongly connected subgraph without bad vertices is
a good component. The basic symbolic algorithm for graphs with Streett objectives 
repeatedly removes bad vertices from each SCC and then recomputes the SCCs 
until all good components are found. The winning set then consists of the vertices
that can reach a good component. We refer to this
algorithm as~\ref{alg:streettgraphbasic}. For the pseudocode and more details
see Appendix~\ref{sec:appgraphs}.

\vspace{-0.5mm}
\begin{prp}\label{prp:basicgraphs}
Algorithm~\ref{alg:streettgraphbasic} correctly computes the winning set in
graphs with Streett objectives and requires $O(n \cdot \min(n,k))$ symbolic steps.
\end{prp}

\vspace{-0.5mm}
\smallskip\noindent{\bf Improved Symbolic Algorithm.}
In our improved symbolic algorithm we replace the recomputation of all SCCs
with the search for a new top or bottom SCC with Procedure~\ref{proc:lockstep} 
from vertices that have lost adjacent edges whenever there are not too many such 
vertices. We present the improved symbolic algorithm for graphs with Streett objectives
in more detail as it also conveys important intuition for the MDP case.
The pseudocode is given in Algorithm~\ref{alg:streettgraphimpr}.

\vspace{-0.5mm}
\smallskip\noindent{\em Iterative Refinement of Candidate Sets.}
The improved algorithm maintains a set~$\goodC$ of 
already identified good components that is initially empty
and a set~$\mathcal{\ec}$ of candidates for good components that is initialized with 
the SCCs of the input graph~$G$. The difference to the basic algorithm lies
in the properties of the vertex sets maintained in $\mathcal{\ec}$ and the way we identify 
sets that can be separated from each other without destroying a good component.
In each iteration one vertex set~$S$ is removed from $\mathcal{\ec}$ and, after 
the removal of bad vertices from the set, either identified as a good component or 
split into several candidate sets.
By Lemma~\ref{lem:eccontained} and Corollary~\ref{cor:geccontained}
the following invariant is maintained throughout the algorithm for the sets
 in $\goodC$ and $\mathcal{\ec}$.

\vspace{-0.5mm}
\begin{invariant}[Maintained Sets]\label{inv:gccontained}
The sets in $\mathcal{\ec} \cup \goodC$ are pairwise disjoint and for every good 
component~$\scc$ of $G$ there exists a set $Y \supseteq \scc$
such that either $Y \in \mathcal{\ec}$ or $Y \in \goodC$.
\end{invariant}

\vspace{-0.5mm}
\smallskip\noindent{\em Lost Adjacent Edges.}
In contrast to the basic algorithm, the subgraph induced by a set~$S$ contained 
in~$\mathcal{\ec}$ is not necessarily strongly connected. Instead, we remember
vertices of~$S$ that have lost adjacent edges since the last time a superset
of $S$ was determined to induce a strongly connected subgraph; vertices that lost 
incoming edges are contained in~$H_S$ and vertices that lost 
outgoing edges are contained in~$T_S$. 
In this way we maintain Invariant~\ref{inv:HT} throughout the algorithm,
which enables us to use Procedure~\ref{proc:lockstep} with the running time 
guarantee provided by Theorem~\ref{thm:lss}.

\begin{algorithm2e}[t!]
	\SetAlgoRefName{StreettGraphImpr}
	\caption{Improved Alg. for Graphs with Streett Obj.}
	\label{alg:streettgraphimpr}
	\SetKwInOut{Input}{Input}
	\SetKwInOut{Output}{Output}
	\BlankLine
	\Input{graph $G = (V, E)$ and Streett pairs 
	$\SP= \{(L_i, U_i) \mid 1 \le i \le k\}$
	}
	\Output
	{
	$\win{G, \Streett{\SP}}$
	}
	\BlankLine
	$\mathcal{\ec} \gets \sccalg(G)$; $\goodC \gets \emptyset$\label{l:gimpr:initstart}\;
	\lForEach{$\scc \in \mathcal{\ec}$}{
		$H_\scc \gets \emptyset$; $T_\scc \gets \emptyset$\label{l:gimpr:initend}
	}
	\While{$\mathcal{\ec} \ne \emptyset$\label{l:gimpr:whilestart}}{
		remove some~$S \in \mathcal{\ec}$ from $\mathcal{\ec}$\;
		$\badv \gets \bigcup_{1 \le i \le k : U_i \cap S = \emptyset} (L_i \cap S)$\label{l:gimpr:badstart}\;
		\While{$\badv \ne \emptyset$\label{l:gimpr:innerw}}{
			$S \gets S \setminus \badv$\;
			$H_S \gets (H_S \cup \post(\badv)) \cap S$\;
			$T_S \gets (T_S \cup \pre(\badv)) \cap S$\;
			$\badv \gets \bigcup_{1 \le i \le k : U_i \cap S = \emptyset} (L_i \cap S)$\;
			\label{l:gimpr:badend}
		}
		\If(\tcc*[h]{$G[S]$ contains at least one edge}){$\post(S) \cap S \ne \emptyset$\label{l:p1}}{ 
			\lIf{$\lvert H_S \rvert + \lvert T_S \rvert = 0$}{
				$\goodC \gets \goodC \cup \set{S}$\label{l:gimpr:good1}
			}\ElseIf{$\lvert H_S \rvert + \lvert T_S \rvert \ge \sqrt{m / \log n}$\label{l:gimpr:basicstart}}{
				delete $H_S$ and $T_S$\;
				$\mathcal{\scc} \gets \sccalg(G[S])$\;
				\lIf{$\lvert \mathcal{\scc} \rvert = 1$}{
					$\goodC \gets \goodC \cup \set{S}$\label{l:gimpr:good2}
				}\Else{
					\lForEach{$\scc \in \mathcal{\scc}$}{
						$H_\scc \gets \emptyset$; $T_\scc \gets \emptyset$
					}
                    $\mathcal{\ec} \gets \mathcal{\ec} \cup \mathcal{\scc}$\;
				}
				\label{l:gimpr:basicend}
			}\Else{\label{l:gimpr:lssbeg}
				($\scc$, $H_S$, $T_S$) $\gets $\ref{proc:lockstep}($G$, $S$, $H_S$, $T_S$)\;
				\lIf{$\scc = S$}{
					$\goodC \gets \goodC \cup \set{S}$\label{l:gimpr:good3}
				}\Else(\tcc*[h]{separate $\scc$ and $S \setminus \scc$}){
					$S \gets S \setminus \scc$\;
					$H_\scc \gets \emptyset$; $T_\scc \gets \emptyset$\;
					$H_S \gets (H_S \cup \post(\scc)) \cap S$\label{l:p2}\;
					$T_S \gets (T_S \cup \pre(\scc)) \cap S$\label{l:p3}\;
					$\mathcal{\ec} \gets \mathcal{\ec} \cup \set{S} \cup \set{\scc}$\;
				\label{l:gimpr:lssend}
				}
			}
		}
		\label{l:gimpr:whileend}
	}
	\Return{$\reachG{G}{\bigcup_{\scc \in \goodC} \scc}$}\;
\end{algorithm2e}

\vspace{-0.5mm}
\smallskip\noindent{\em Identifying SCCs.}
Let $S$ be the vertex set removed from $\mathcal{\ec}$ in a fixed iteration of 
Algorithm~\ref{alg:streettgraphimpr} after the removal of bad vertices in 
the inner while-loop. First note that if $S$ is strongly connected and contains
at least one edge, then it is a good component. If the set $S$ was already identified
as strongly connected in a previous iteration, i.e., $H_S$ and $T_S$ are empty,
then $S$ is identified as a good component in line~\ref{l:gimpr:good1}.
If many vertices of $S$ have lost adjacent edges since the last time a super-set
of $S$ was identified as a strongly connected subgraph, then 
the SCCs of $G[S]$ are determined 
as in the basic algorithm. To achieve the optimal 
asymptotic upper bound, we say that many vertices of $S$ have 
lost adjacent edges when we have 
$\lvert H_S \rvert + \lvert T_S \rvert \ge \sqrt{m / \log n}$, while lower 
thresholds are used in our experimental results.
Otherwise, if not too many vertices of~$S$ lost
adjacent edges, then we  start a symbolic \emph{lock-step search} for top
SCCs from the vertices of~$H_S$ and for bottom SCCs from the vertices
of~$T_S$ using Procedure~\ref{proc:lockstep}. 
The set returned by the procedure is either a top or a bottom SCC $\scc$ of
$G[S]$ (Theorem~\ref{thm:lss}). Therefore we can from now on consider $\scc$
and $S \setminus \scc$ separately, maintaining
Invariants~\ref{inv:HT} and~\ref{inv:gccontained}. 

\vspace{-0.5mm}
\smallskip\noindent{\em Algorithm~\ref{alg:streettgraphimpr}.} A succinct description of the 
pseudocode is as follows: 
Lines~\ref{l:gimpr:initstart}--\ref{l:gimpr:initend} initialize the set of 
candidates for good components with the SCCs of the input graph.
In each iteration of the main while-loop one candidate is considered and the following
operations are performed:
(a)~lines~\ref{l:gimpr:badstart}--\ref{l:gimpr:badend} iteratively remove all bad vertices; if afterwards
the candidate is still strongly connected (and contains at least one edge), it is identified as a good component
in the next step; otherwise it is partitioned into new candidates in one of the following ways:
(b)~if many vertices lost adjacent edges, lines~\ref{l:gimpr:basicstart}--\ref{l:gimpr:basicend} partition the
candidate into its SCCs (this corresponds to an iteration of the basic algorithm);
(c)~otherwise, lines~\ref{l:gimpr:lssbeg}--\ref{l:gimpr:lssend} use symbolic 
lock-step search to partition the candidate into one of its SCCs and the remaining vertices.
The while-loop terminates when no candidates are left.
Finally, vertices that can reach some good component are returned.
We have the following result (proof in Appendix~\ref{sec:appgraphs}).

\vspace{-0.5mm}
\begin{thm}[Improved Algorithm for Graphs]\label{thm:improvedgraphs}
Algorithm~\ref{alg:streettgraphimpr} correctly
computes the winning set in graphs with Streett objectives and requires
$O(n \cdot \sqrt{m \log n})$ symbolic steps.
\end{thm}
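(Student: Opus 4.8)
The plan is to establish correctness and the running-time bound separately, in that order. For \textbf{correctness}, I would first verify that Invariants~\ref{inv:HT} and~\ref{inv:gccontained} are maintained throughout the main while-loop, proceeding by induction on the number of iterations. The base case follows from the initialization in lines~\ref{l:gimpr:initstart}--\ref{l:gimpr:initend}, where $\mathcal{\ec}$ is the SCC decomposition of $G$ and all $H_\scc, T_\scc$ are empty. For the inductive step I would check each of the three branches: (a)~in the bad-vertex removal loop (lines~\ref{l:gimpr:badstart}--\ref{l:gimpr:badend}), Corollary~\ref{cor:geccontained} (specialized to graphs, where the random attractor is just backward reachability of the player-1 vertices, i.e.\ set difference here) guarantees no good component is split off, and the updates to $H_S, T_S$ via $\post(\badv)$ and $\pre(\badv)$ record exactly the vertices losing incoming/outgoing edges, preserving Invariant~\ref{inv:HT}; (b)~in the high-threshold branch (lines~\ref{l:gimpr:basicstart}--\ref{l:gimpr:basicend}), refining into SCCs of $G[S]$ preserves Invariant~\ref{inv:gccontained} by Lemma~\ref{lem:eccontained}(a), and resetting $H_\scc = T_\scc = \emptyset$ for each new SCC is valid since each is strongly connected; (c)~in the lock-step branch (lines~\ref{l:gimpr:lssbeg}--\ref{l:gimpr:lssend}), Theorem~\ref{thm:lss} certifies that the returned $\scc$ is a top or bottom SCC of $G[S]$, so $\scc$ and $S \setminus \scc$ have no good component straddling the cut, and again the $H, T$ updates track lost edges correctly. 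I would also argue termination: every iteration either moves a set into $\goodC$ or strictly decreases $\sum_{S \in \mathcal{\ec}} |S|$ or splits a set, so the while-loop halts. Finally, when a set $S$ is placed in $\goodC$ it has no bad vertices (the inner loop ran to fixpoint) and is strongly connected with at least one edge (either $H_S \cup T_S = \emptyset$ combined with Invariant~\ref{inv:HT}, or the lock-step search returned $\scc = S$, or the SCC decomposition was a single non-trivial SCC guarded by the check at line~\ref{l:p1}); hence every set in $\goodC$ is a good component, and by Invariant~\ref{inv:gccontained} every good component is contained in some set of $\goodC$. The winning set is the set of vertices reaching a good component, which is exactly $\reachG{G}{\bigcup_{\scc \in \goodC}\scc}$ by the graph analogue of Lemma~\ref{lem:gec}, so the returned value is correct.

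For the \textbf{running-time bound} of $O(n\sqrt{m\log n})$, I would charge the symbolic steps to three categories. First, bad-vertex removal: across the whole algorithm each vertex is removed at most once, and the accompanying $\pre/\post$ operations on $\badv$ together with the $2k$ set operations to recompute $\bad(\cdot)$ can be amortized; since the candidate sets are always disjoint, the total here is $O(n)$ symbolic $\pre/\post$ steps plus lower-order set operations (the bad-set recomputation is counted as set operations, which are dominated). Second, the high-threshold branch: an SCC recomputation on $G[S]$ costs $O(|S|)$ symbolic steps by~\cite{GentiliniPP08}, and this branch is entered only when $|H_S| + |T_S| \ge \sqrt{m/\log n}$; the key observation is that the edges witnessing membership in $H_S \cup T_S$ (each vertex in $H_S$ lost an incoming edge, each in $T_S$ lost an outgoing edge, since the last time a superset of $S$ was strongly connected) are distinct edges that are "used up" and never regenerated, so this branch is triggered at most $O(m / \sqrt{m/\log n}) = O(\sqrt{m\log n})$ times in total, each time costing $O(n)$ steps, for a total of $O(n\sqrt{m\log n})$. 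Third, the lock-step branch: by Theorem~\ref{thm:lss}, identifying a top/bottom SCC $\scc \subsetneq S$ costs $O((|H_S| + |T_S|) \cdot \min(|\scc|, |S \setminus \scc|))$ symbolic steps, and since this branch is only entered when $|H_S| + |T_S| < \sqrt{m/\log n}$, this is $O(\sqrt{m/\log n} \cdot \min(|\scc|, |S \setminus \scc|))$; charging $\sqrt{m/\log n}$ units to each vertex in the smaller of $\scc$ and $S \setminus \scc$, and using the standard fact that each vertex is in the smaller side of such a split at most $O(\log n)$ times over the course of the algorithm, the total is $O(n \cdot \sqrt{m/\log n} \cdot \log n) = O(n\sqrt{m\log n})$ symbolic steps. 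The case $\scc = S$ costs $O((|H_S| + |T_S|) \cdot |S|)$ but happens at most once per candidate set before it is finalized into $\goodC$, and can be folded into the same charging scheme (or bounded by $O(\sqrt{m/\log n} \cdot n)$ summed over the at-most-$n$ finalizations — this is too weak, so instead one charges it against the creation of the set $S$, which happened via a split whose smaller side absorbs the cost). Finally the concluding reachability computation costs $O(n)$ steps. Summing all categories yields $O(n\sqrt{m\log n})$.

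The \textbf{main obstacle} I anticipate is the amortization argument for the high-threshold branch and, more subtly, ensuring that the $H_S, T_S$ bookkeeping is globally consistent: one must argue that whenever a set $S$ enters $\mathcal{\ec}$, the edges "lost" relative to the last strongly-connected superset are genuinely a set of distinct edges that can be bijectively (or injectively) associated to increments of $|H_S| + |T_S|$, and that these edge-accounts do not overlap across different candidate sets or across time after an SCC is re-split. The cleanest way to handle this is to fix, once and for all, a potential function or a charging scheme in which each edge of $G$ is charged at most a constant number of times to "entering the high-threshold branch" events, and separately to maintain an invariant that the multiset of lost-edge witnesses for the sets currently in $\mathcal{\ec}$ is a set (no repeats). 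A second, more routine obstacle is making precise the $O(\log n)$ bound on how often a vertex lies on the smaller side of a split: this is the standard divide-and-conquer recursion-tree argument, but it must be stated carefully because the splits here are not balanced — the bound $\min(|\scc|, |S \setminus \scc|) \le |S|/2$ is exactly what is needed, and it follows because one of a disjoint top and bottom SCC has at most half the vertices. I would relegate the detailed edge-counting and the formal statement of the charging scheme to the appendix (as the excerpt indicates the full proof lives in Appendix~\ref{sec:appgraphs}), presenting in the main text only the three-category breakdown and the two key counting facts.
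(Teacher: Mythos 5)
Your proposal is correct and follows essentially the same route as the paper's proof: establish Invariants~\ref{inv:HT} and~\ref{inv:gccontained} by induction over the three branches, derive soundness and completeness from them, and then bound the symbolic steps by charging $O(n)$ to bad-vertex removal, $O(n)$ per occurrence to the at most $O(\sqrt{m\log n})$ high-threshold SCC recomputations, and $O(\sqrt{m/\log n})$ per vertex per halving (at most $O(\log n)$ halvings) to the lock-step calls. The only wobble is your handling of the $\scc = S$ case, where the clean resolution (and the paper's) is simply to charge $O(\sqrt{m/\log n})$ to each vertex of $\scc$ once, which suffices because the finalized sets in $\goodC$ are pairwise disjoint and never revisited.
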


\vspace{-3mm}
\section{Symbolic MEC Decomposition}\label{sec:mecs}

\vspace{-1mm}
In this section we present a succinct description of the basic symbolic
algorithm for MEC decomposition and then present the main ideas
for the improved algorithm.

\vspace{-0.5mm}
\smallskip\noindent{\em Basic symbolic algorithm for MEC decomposition.} 
The basic symbolic algorithm for MEC decomposition maintains a set of identified
MECs and a set of candidates for MECs, initialized with the SCCs of the MDP.
Whenever a candidate is considered, either
(a)~it is identified as a MEC or
(b)~it contains vertices with outgoing random edges, which are then removed 
together with their random attractor from the candidate, and the SCCs
of the remaining sub-MDP are added to the set of candidates.
We refer to the algorithm as~\ref{alg:mecbasic}.

\vspace{-0.5mm}
\begin{prp}\label{prp:basicmecs}
Algorithm~\ref{alg:mecbasic} correctly computes the MEC decomposition of MDPs and requires $O(n^2)$ symbolic steps.
\end{prp}

\vspace{-0.5mm}
\smallskip\noindent{\em Improved symbolic algorithm for MEC decomposition.}
The improved symbolic algorithm for MEC decomposition uses the ideas of symbolic
lock-step search presented in Section~\ref{sec:lss}.
Informally, when considering a candidate that lost a few edges from the remaining
graph, we use the symbolic lock-step search to identify some bottom SCC.
We refer to the algorithm as~\ref{alg:mecimpr}.
Since all the important conceptual ideas regarding the symbolic lock-step search 
are described in Section~\ref{sec:lss}, we relegate the technical details to Appendix~\ref{sec:appmecs}.
We summarize the main result (proof in Appendix~\ref{sec:appmecs}).

\vspace{-0.5mm}
\begin{thm}[Improved Algorithm for MEC]\label{thm:improvedmecs}
Algorithm~\ref{alg:mecimpr} correctly computes the MEC decomposition of MDPs and requires $O(n\cdot \sqrt{m})$ symbolic steps.
\end{thm}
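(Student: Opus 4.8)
The plan is to follow the same two-pronged structure as the improved graph algorithm (Algorithm~\ref{alg:streettgraphimpr}), adapting it from SCC/good-component computation to MEC computation. The correctness part rests on the standard facts about end-components: a MEC is an SCC of the current sub-MDP with no outgoing random edges, and by Lemma~\ref{lem:eccontained} any end-component contained in a candidate set~$S$ must lie within a single SCC of $\mdp[S]$ and survives the removal of the random attractor of any vertex set disjoint from it. So the invariant to maintain is the MEC-analogue of Invariant~\ref{inv:gccontained}: the identified MECs together with the current candidate sets are pairwise disjoint, and every MEC of the input MDP is contained in exactly one of them. First I would describe Algorithm~\ref{alg:mecimpr}: initialize the candidate set with the SCCs of~$\mdp$; in each iteration pick a candidate~$S$, compute the vertices of $S$ with outgoing random edges, remove their random attractor within $\mdp[S]$ (this is the MEC-analogue of the bad-vertex removal loop), update the ``head''/``tail'' sets $H_S, T_S$ by adding $\post$ and $\pre$ of the removed set intersected with~$S$ so that Invariant~\ref{inv:HT} is preserved; then if $\mdp[S]$ has no outgoing random edges and is strongly connected, declare $S$ a MEC. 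If $S$ is not yet known to be strongly connected we branch exactly as in the graph case: if $|H_S|+|T_S| \ge \sqrt{m}$ we recompute the SCCs of $\mdp[S]$ from scratch (the basic-algorithm step) and re-initialize empty head/tail sets; otherwise we invoke Procedure~\ref{proc:lockstep}$(G,S,H_S,T_S)$ to extract a single top or bottom SCC~$\scc$ of $\mdp[S]$, separate $\scc$ from $S\setminus\scc$, update the head/tail sets on $S\setminus\scc$ via $\post(\scc)\cap S$ and $\pre(\scc)\cap S$, and put both pieces back as candidates. Correctness then follows by induction on iterations from Lemma~\ref{lem:eccontained}, Theorem~\ref{thm:lss} (which guarantees the returned set is genuinely a top or bottom SCC so the split is legitimate), and the termination argument that every iteration either finalizes a MEC or strictly splits a candidate.

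For the symbolic-step bound I would account separately for the three kinds of work, mirroring the analysis behind Theorem~\ref{thm:improvedgraphs}. (i)~Attractor and random-edge-detection work: each $\cpre{R}$-attractor computation for a removed set $W$ costs $O(|W|+1)$ symbolic steps by the bound stated for random attractors, and every vertex is removed by an attractor computation at most once over the whole run, so this totals $O(n)$ plus the number of attractor invocations, which is bounded by the number of candidates ever created, $O(n)$; detecting outgoing random edges can be folded into the same $\pre/\post$ budget. (ii)~Basic-algorithm SCC recomputations (the $|H_S|+|T_S|\ge\sqrt m$ branch): each such recomputation on $G[S]$ costs $O(|S|)\le O(n)$ symbolic steps, and it is triggered only when at least $\sqrt m$ edges have been lost since a superset of $S$ was last certified strongly connected; since each of the $m$ edges is ``lost'' at most once, there are at most $\sqrt m$ such recomputations, for a total of $O(n\sqrt m)$. (iii)~Lock-step searches (the small branch): by Theorem~\ref{thm:lss} a single call costs $O((|H_S|+|T_S|)\cdot \min(|\scc|,|S\setminus\scc|))$ symbolic steps, and since we are in the small branch $|H_S|+|T_S| < \sqrt m$, this is $O(\sqrt m \cdot \min(|\scc|,|S\setminus\scc|))$; we charge it to the vertices of the smaller side, each of which is charged $O(\log n)$ times over the algorithm because the smaller side has at most half the vertices. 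This would naively give $O(n\sqrt m\log n)$; to shave the $\log n$ and reach $O(n\sqrt m)$ I would use the threshold $\sqrt m$ (not $\sqrt{m/\log n}$ as in the graph algorithm) together with a sharper charging: the lock-step cost is actually $O((|H_S|+|T_S|)\cdot\min(\cdot))$ and the lost-edge budget can be amortized against the $\min(\cdot)$ factor rather than paying $\sqrt m$ for every search, so that the summed cost over all searches is $O(n\sqrt m)$ directly. Summing (i)--(iii) gives $O(n\sqrt m)$.

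The main obstacle is step~(iii): getting the $O(n\sqrt m)$ rather than $O(n\sqrt m\log n)$ bound requires a careful amortized argument that balances the number-of-start-vertices factor against the size-of-smaller-SCC factor, and simultaneously reconciles this with the edge-loss potential that drives the basic-algorithm branch. In the graph case the $\sqrt{m/\log n}$ threshold is tuned precisely so that the two branches balance at $O(n\sqrt{m\log n})$; for MECs the claim is the cleaner $O(n\sqrt m)$, so I expect the delicate part to be showing that the lock-step branch, with the adjusted threshold and the ``charge to the smaller side'' scheme, contributes only $O(n\sqrt m)$ — essentially arguing that whenever a lock-step search is expensive (large $|H_S|+|T_S|$) the SCC it extracts is correspondingly large, so the per-vertex charge stays bounded by $O(\sqrt m)$ in total rather than $O(\sqrt m\log n)$. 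The interleaving of MEC identification with the attractor removals (the ``symbolic interleaved MEC computation'' mentioned in the introduction) also needs to be stated carefully so that Invariant~\ref{inv:HT} genuinely survives each attractor removal; but that is a routine bookkeeping check once the head/tail update rule $(H_S\cup\post(W))\cap S$, $(T_S\cup\pre(W))\cap S$ is in place.
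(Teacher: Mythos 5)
There is a genuine gap, and it sits exactly where you flagged it: the symbolic-step analysis of the lock-step branch. Your algorithm calls Procedure~\ref{proc:lockstep} with both $H_S$ and $T_S$, returns a top \emph{or} bottom SCC~$\scc$, puts \emph{both} $\scc$ and $S\setminus\scc$ back as candidates, and charges the $O((\lvert H_S\rvert+\lvert T_S\rvert)\cdot\min(\lvert\scc\rvert,\lvert S\setminus\scc\rvert))$ cost to the smaller side. That charging scheme inherently incurs a factor $O(\log n)$ per vertex (a vertex can be on the smaller side $\Theta(\log n)$ times), so it proves only $O(n\sqrt{m}\log n)$. Your proposed repair --- that ``whenever a lock-step search is expensive the SCC it extracts is correspondingly large,'' so the lost-edge budget can be amortized against the $\min(\cdot)$ factor --- is not substantiated and is not true in general: a search with $\lvert T_S\rvert$ close to $\sqrt{m}$ start vertices can perfectly well return a constant-size SCC, and nothing in your setup prevents the same vertex from being charged $\sqrt{m}$ symbolic steps on $\Theta(\log n)$ separate occasions.

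The paper closes this gap with a structural idea rather than a sharper amortization. The improved MEC algorithm maintains \emph{only} the tail sets $T_S$ and calls Procedure~\ref{proc:lockstep}$(G,S,\emptyset,T_S)$, so the returned set~$\scc$ is always a \emph{bottom} SCC of $G[S]$ (a restricted variant of Theorem~\ref{thm:lss}, stated as Theorem~\ref{thm:lockstepbscc}). Because the random attractor of the outgoing-random-edge vertices has already been stripped from $S$, a bottom SCC of $G[S]$ has no outgoing edges of any kind; hence if it contains an edge it is itself a MEC and is \emph{finalized immediately} --- it never re-enters the candidate pool. Only the remainder $S\setminus\scc$ goes back as a candidate. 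The $O(\lvert T_S\rvert\cdot\lvert\scc\rvert)=O(\sqrt{m}\cdot\lvert\scc\rvert)$ cost of the call is then charged to the vertices of~$\scc$, each of which is charged exactly once over the whole run, giving $O(n\sqrt{m})$ with no logarithmic factor and no halving argument. This is the one missing idea; the rest of your proposal (the invariants, the attractor accounting, and the $O(\sqrt{m})$ bound on full SCC recomputations at threshold $\sqrt{m}$) matches the paper's argument.
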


\section{MDPs with Streett Objectives}\label{sec:mdps}

\smallskip\noindent{\bf Basic Symbolic Algorithm.}
We refer to the basic symbolic algorithm for MDPs with Streett objectives
as~\ref{alg:streettmdpbasic}, which is similar to the algorithm for graphs,
with SCC computation replaced by MEC computation.
The pseudocode of Algorithm~\ref{alg:streettmdpbasic} together with its
detailed description is presented in Appendix~\ref{sec:appmdps}. 

\begin{prp}\label{prp:basicmdps}
Algorithm~\ref{alg:streettmdpbasic} correctly computes the almost-sure winning set
in MDPs with Streett objectives and requires $O(n^2 \cdot \min(n,k))$ symbolic steps.
\end{prp}

\begin{rmk}
The above bound uses the basic symbolic MEC decomposition algorithm. 
Using our improved symbolic MEC decomposition algorithm, the above bound
could be improved to 
$O(n \cdot \sqrt{m} \cdot \min(n,k))$.
\end{rmk}

\smallskip\noindent{\bf Improved Symbolic Algorithm.}
We refer to the improved symbolic algorithm for MDPs with Streett objectives as~\ref{alg:streettmdpimpr}.
First we present the main ideas for the improved symbolic algorithm. Then we explain
the key differences compared to the improved symbolic algorithm for graphs. A thorough description
with the technical details and proofs is presented in Appendix~\ref{sec:appmdps}.
\begin{compactitem}
\item First, we improve the algorithm by interleaving the symbolic MEC computation with the detection
of bad vertices~\cite{ChatterjeeDHL16,Loitzenbauer16}. This allows to replace the
computation of MECs in each iteration of the while-loop with the
computation of SCCs and an additional random attractor computation.
\begin{compactitem}
\item {\em Intuition of interleaved computation.}
Consider a candidate for a good end-component $S$ after a random attractor
to some bad vertices is removed from it. After the removal of the random attractor,
the set $S$ does not have random vertices with outgoing edges. Consider that further $\bad(S) = \emptyset$ holds.
If $S$ is strongly connected and contains an edge, then it is a good end-component.
If $S$ is not strongly connected, then $\mdp[S]$ contains at least two SCCs and some
of them might have random vertices with outgoing edges. Since end-components are strongly connected and
do not have random vertices with outgoing edges, we have that (1) every good end-component is completely
contained in one of the SCCs of $\mdp[S]$ and (2) the random vertices of an SCC with outgoing edges
and their random attractor do not intersect with any good end-component (see Lemma~\ref{lem:eccontained}).
\item {\em Modification from basic to improved algorithm.}
We use these observations to modify the basic algorithm as follows:
First, for the sets that are candidates for good end-components, we do not maintain the property
that they are end-components, but only that they do not have random vertices with outgoing edges
(it still holds that every maximal good end-component is either already identified or 
contained in one of the candidate sets). Second, for a candidate set $S$, we repeat the removal of 
bad vertices until $\bad(S) = \emptyset$ holds before we continue with the 
next step of the algorithm. This allows us to make progress after the removal of bad vertices
by computing all SCCs (instead of MECs) of the remaining sub-MDP.
If there is only one SCC, then this is a good end-component (if it contains 
at least one edge). Otherwise (a) we remove from each SCC
the set of random vertices with outgoing edges and their random attractor
and (b) add the remaining vertices of each SCC as a new candidate set.

\end{compactitem}
\item Second, as for the improved symbolic algorithm for graphs, we use 
the symbolic lock-step search to quickly identify a top or bottom SCC every time
a candidate has lost a small number of edges since the last time its superset
was identified as being strongly connected. The symbolic lock-step search
is described in detail in Section~\ref{sec:lss}.
\end{compactitem}

\vspace{2mm}
Using interleaved MEC computation and lock-step search leads to a similar algorithmic
structure for Algorithm~\ref{alg:streettmdpimpr} as for our improved symbolic algorithm for graphs
(Algorithm~\ref{alg:streettgraphimpr}). The key differences are as follows:
First, the set of candidates for good end-components is initialized with the MECs of
the input graph instead of the SCCs. Second, whenever bad vertices are removed
from a candidate, also their random attractor is removed.
Further, whenever a candidate is partitioned into its SCCs, for each SCC, the
random attractor of the vertices with outgoing random edges is removed.
Finally, whenever a candidate $S$ is separated into
$C$ and $S \setminus C$ via symbolic lock-step search, the random attractor of
the vertices with outgoing random edges is removed from $C$, and the random
attractor of $C$ is removed from $S$.

\begin{thm}[Improved Algorithm for MDPs]\label{thm:improvedmdps}
Algorithm~\ref{alg:streettmdpimpr} correctly computes the almost-sure winning set
in MDPs with Streett objectives and requires \mbox{$O(n \cdot \sqrt{m \log n})$}
symbolic steps.
\end{thm}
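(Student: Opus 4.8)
The plan is to prove Theorem~\ref{thm:improvedmdps} by following the same two-part template used for Algorithm~\ref{alg:streettgraphimpr}: first establish correctness via the maintained invariants, then bound the number of symbolic steps using the lock-step charging argument together with the improved MEC decomposition bound. For correctness, I would argue that Algorithm~\ref{alg:streettmdpimpr} maintains an analogue of Invariant~\ref{inv:gccontained}: the sets in $\mathcal{\ec} \cup \goodC$ are pairwise disjoint, each contains no outgoing random edges, and every maximal good end-component of $\mdp$ is contained in some set of $\mathcal{\ec} \cup \goodC$. Initialization with the MECs of $\mdp$ satisfies this by Lemma~\ref{lem:gec} (since every good end-component is an end-component, hence inside some MEC). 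For the inductive step I would invoke Lemma~\ref{lem:eccontained}(a) when a candidate is split into SCCs, Lemma~\ref{lem:eccontained}(b) and Corollary~\ref{cor:geccontained} when a random attractor of bad vertices or of vertices with outgoing random edges is removed, and Theorem~\ref{thm:lss} when the lock-step search separates $\scc$ from $S \setminus \scc$ (a good end-component, being strongly connected in $\mdp[S]$, lies entirely in one side). When a candidate $S$ with $\bad(S)=\emptyset$, no outgoing random edges, and at least one edge is found to be strongly connected, it is by definition a good end-component, so moving it to $\goodC$ is sound; conversely every set placed in $\goodC$ is a genuine good end-component. Finally, by Lemma~\ref{lem:gec}, returning $\reachG{\mdp}{\bigcup_{\scc \in \goodC}\scc}$ restricted to the almost-sure reachability winning set yields $\as{\mdp, \Streett{\SP}}$; I would note the almost-sure reachability set is computed with a random-attractor-style fixpoint in $O(n)$ symbolic steps.

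For the complexity bound I would partition the symbolic steps into three groups and bound each by $O(n\sqrt{m\log n})$. (1) \emph{Bad-vertex and random-attractor work}: computing $\bad(S)$ costs $O(k)$ set operations per removal, and each random attractor of a removed set $W$ costs $O(|{\at{R}{\cdot}{W}} \setminus W| + 1)$; since attractor vertices are permanently deleted from all candidates, these telescope to $O(n)$ attractor steps overall, while the $\bad$ computations are dominated by the other groups (I'd fold the $O(k)$ factor into the usual argument that $k \le $ the other costs or handle it as in the graph case). (2) \emph{SCC/MEC recomputation when many edges are lost}: each time the algorithm enters the ``basic'' branch, the candidate $S$ has lost $\ge \sqrt{m/\log n}$ adjacent edges since its superset was certified strongly connected; since each edge is lost at most once, this branch is entered at most $O(m / \sqrt{m/\log n}) = O(\sqrt{m\log n})$ times, and each invocation costs $O(|S|)$ symbolic steps for SCC computation (using \cite{GentiliniPP08}) plus the attractor removals already counted, giving $O(n\sqrt{m\log n})$. (3) \emph{Lock-step search}: by Theorem~\ref{thm:lss} each call costs $O((|H_S|+|T_S|)\cdot \min(|\scc|, |S\setminus\scc|))$; since we only enter this branch when $|H_S|+|T_S| < \sqrt{m/\log n}$, and we charge the cost to the $\min(|\scc|,|S\setminus\scc|)$ vertices in the smaller part, each vertex is charged $O(\log n)$ times over the algorithm (the smaller-half argument), for a total of $O(n\log n \cdot \sqrt{m/\log n}) = O(n\sqrt{m\log n})$.

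The main obstacle I anticipate is group~(2): correctly bounding the number of times the ``basic'' SCC-recomputation branch is entered and the cost charged there, because the interleaved random-attractor removals interact with the edge-loss accounting — a vertex removed by a random attractor in one SCC destroys edges in that SCC, and I must ensure these ``lost edges'' are consistently attributed so that the $\sqrt{m/\log n}$ threshold genuinely limits the number of basic iterations to $O(\sqrt{m\log n})$. The clean way is to maintain the bookkeeping invariant that $|H_S| + |T_S|$ lower-bounds the number of distinct edges incident to $S$ lost since the last strong-connectivity certification of a superset of $S$ (so that crossing the threshold really does consume $\Omega(\sqrt{m/\log n})$ fresh edges), exactly as in the graph case but now also incrementing $H_S, T_S$ when random-attractor vertices are removed; I would state this as a lemma mirroring Invariant~\ref{inv:HT}'s companion bound and prove it by the same case analysis as for Algorithm~\ref{alg:streettgraphimpr}. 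Once that edge-counting invariant is in place, the three bounds above combine to $O(n\sqrt{m\log n})$, and together with the correctness argument this establishes the theorem. The full pseudocode, the precise invariant statements, and the routine verification of each case are deferred to Appendix~\ref{sec:appmdps}.
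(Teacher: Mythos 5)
Your proposal follows essentially the same route as the paper's proof: the same three invariants (start-vertices, maintained-sets with disjointness/containment, and no outgoing random edges) established by the same case analysis via Lemma~\ref{lem:eccontained}, Corollary~\ref{cor:geccontained}, and Theorem~\ref{thm:lss}, and the same three-way charging of symbolic steps (attractor/bad-vertex work telescoping to $O(n)$, the $\ge\sqrt{m/\log n}$ branch occurring $O(\sqrt{m\log n})$ times because each vertex of $H_S\cup T_S$ witnesses a distinct deleted edge, and the smaller-half $O(\log n)$ charging for the lock-step calls). The bookkeeping concern you flag in group~(2) is resolved exactly as you propose, and is the argument the paper uses.
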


\section{Experiments}\label{sec:exper}

We present a basic prototype implementation of our algorithm and compare against 
the basic symbolic algorithm for graphs and MDPs with Streett objectives.

\smallskip\noindent{\em Models.}
We consider the academic benchmarks from the VLTS benchmark suite~\cite{VLTS}, which gives
representative examples of systems with nondeterminism, and has been used in previous
experimental evaluation (such as~\cite{BarnatCP11,ChatterjeeHJS13}).

\smallskip\noindent{\em Specifications.}
We consider random LTL formulae and use the tool Rabinizer~\cite{KomarkovaK14} to 
obtain deterministic Rabin automata. Then the negations of the formulae
give us Streett automata, which we consider as the specifications.

\smallskip\noindent{\em Graphs.}
For the models of the academic benchmarks, we first compute SCCs, as all 
algorithms for Streett objectives compute SCCs as a preprocessing step. 
For SCCs of the model benchmarks we consider products with the specification Streett
automata, to obtain graphs with Streett objectives, which are the benchmark
examples for our experimental evaluation. The number of transitions in the
benchmarks ranges from $300$K to $5$Million.

\smallskip\noindent{\em MDPs.}
For MDPs, we consider the graphs obtained as above and consider a 
fraction of the vertices of the graph as random vertices, which is 
chosen uniformly at random. 
We consider $10\%$, $20\%$, and $50\%$ of the vertices as random vertices
for different experimental evaluation.

\setlength{\abovecaptionskip}{1pt}
\begin{figure}[t]
\centering
\includegraphics[width=0.5\textwidth]{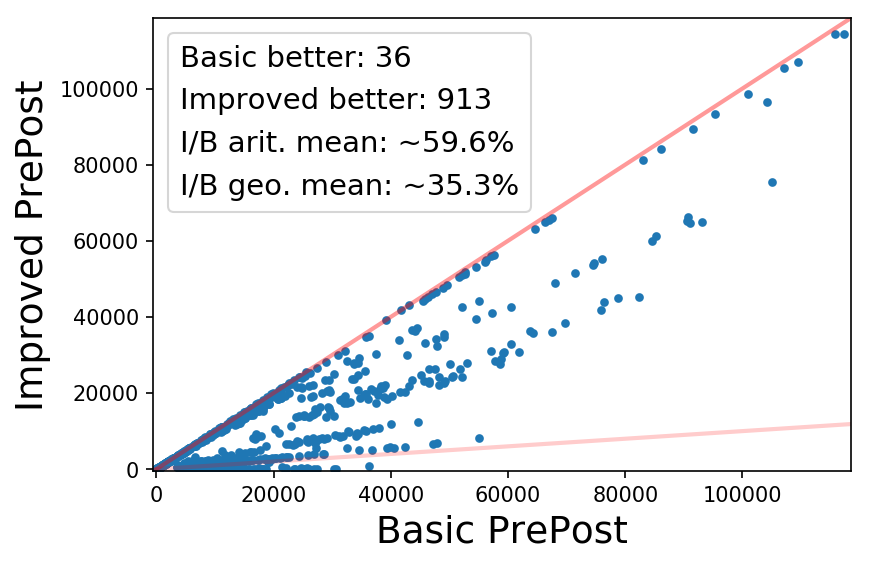}
\caption{Results for graphs with Streett objectives.}
\label{fig:graphs}
\end{figure}
\setlength{\abovecaptionskip}{6pt}
\begin{figure}[H]
\begin{center}
	\vspace{-2mm}
     \subfloat[10\% random vertices\label{fig:mdp10}]{
       \includegraphics[width=0.5\textwidth]{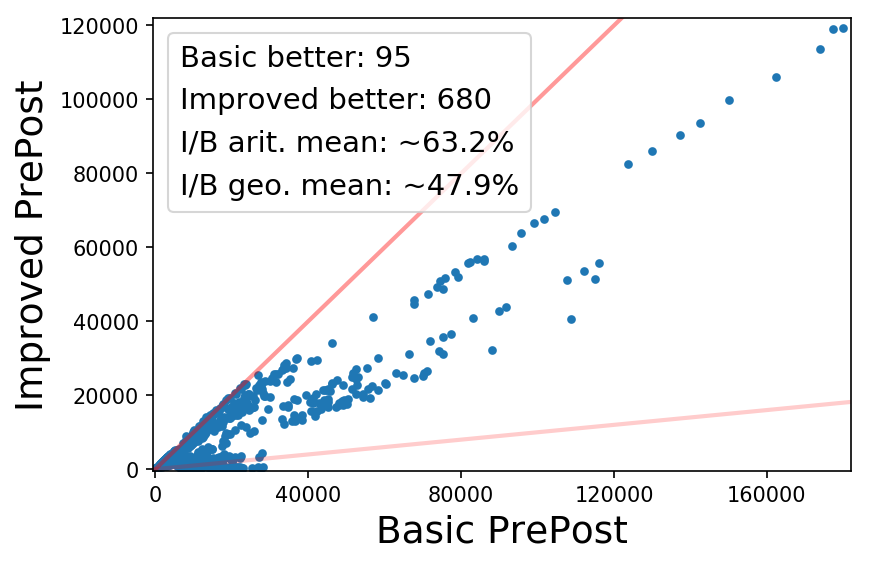}
     }
     \subfloat[20\% random vertices\label{fig:mdp20}]{
       \includegraphics[width=0.5\textwidth]{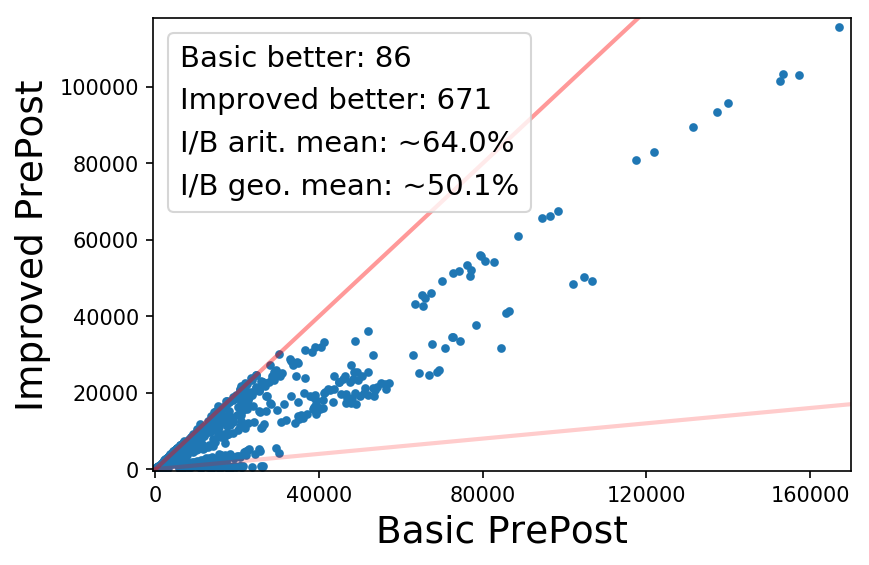}
     }
     \hfill
	\vspace{-2mm}
     \subfloat[50\% random vertices\label{fig:mdp50}]{
       \includegraphics[width=0.5\textwidth]{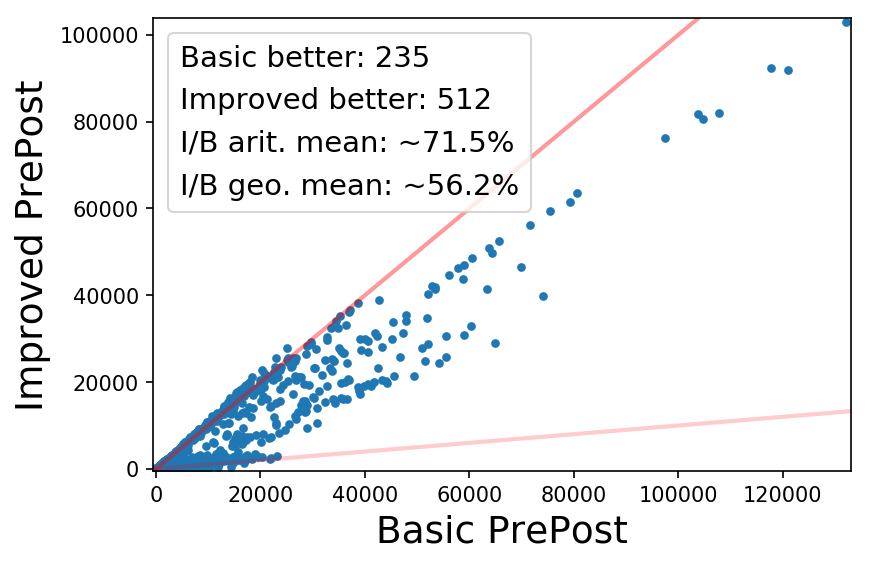}
     }
\caption{Results for MDPs with Streett objectives.}
\label{fig:mdps}
\end{center}
\end{figure}

\smallskip\noindent{\em Experimental evaluation.}
In the experimental evaluation we compare the number of symbolic steps 
(i.e., the number of $\pre/\post$ operations\footnote{Recall that the basic
set operations are cheaper to compute, and asymptotically at most the
number of $\pre/\post$ operations in all the presented algorithms.})
executed by the algorithms, the comparison of running time yields similar
results and is provided in Appendix~\ref{sec:appexper}.
As the initial preprocessing step is the same for all the algorithms
(computing all SCCs for graphs and all MECs for MDPs), the comparison
presents the number of symbolic steps executed after the preprocessing.
The experimental results for graphs are shown in Figure~\ref{fig:graphs} and
the experimental results for MDPs are shown in Figure~\ref{fig:mdps} (in each
figure the two lines represent equality and an order-of-magnitude improvement, respectively).

\smallskip\noindent{\em Discussion.}
Note that the lock-step search is the key reason for theoretical 
improvement, however, the improvement relies on a large number of Streett 
pairs. In the experimental evaluation, the LTL formulae generate
Streett automata with small number of pairs, which after the product
with the model accounts for an even smaller fraction of pairs as compared to
the size of the state space. This has two effects:
\begin{compactitem}
\item In the experiments the lock-step search is performed for a much smaller 
parameter value ($O(\log n)$ instead of the theoretically optimal  bound of 
$\sqrt{m/\log n}$), and leads to a small improvement.
\item For large graphs, since the number of pairs is small as compared to the 
number of states, the improvement over the basic algorithm is minimal.
\end{compactitem}
In contrast to graphs, in MDPs even with small number of pairs as 
compared to the state-space, the interleaved MEC computation has
a notable effect on practical performance, and we observe performance improvement 
even in large MDPs.

\section{Conclusion}\label{sec:conc}

In this work we consider symbolic algorithms for graphs and MDPs with 
Streett objectives, as well as for MEC decomposition. 
Our algorithmic bounds match for both graphs and MDPs. 
In contrast, while SCCs can be computed in linearly many symbolic steps no such
algorithm is known for MEC decomposition.
An interesting direction of future work would be to explore further 
improved symbolic algorithms for MEC decomposition.
Moreover, further improved symbolic algorithms for graphs and MDPs with Streett
objectives is also an interesting direction of future work.

\subsubsection*{Acknowledgements.}
K.~C. and M.~H.\ are partially supported by
the Vienna Science and Technology Fund (WWTF) grant ICT15-003.
K.~C.\ is partially supported by
the Austrian Science Fund (FWF): S11407-N23 (RiSE/SHiNE), and an ERC Start Grant (279307: Graph Games).
V.~T. \ is partially supported by
the European Union’s Horizon 2020 research and innovation programme under the Marie Skłodowska-Curie Grant Agreement No. 665385.
V.~L.~is partially supported by
the Austrian Science Fund (FWF): S11408-N23 (RiSE/SHiNE), the ISF grant \#1278/16, and an ERC Consolidator Grant (project MPM).
For M.~H.\ and V.~L.\ the research leading to these results has received funding from
the European Research Council under the European Union's Seventh Framework Programme (FP/2007-2013) / ERC Grant Agreement no. 340506.

\bibliographystyle{splncs03}
\bibliography{literature}

\begin{thebibliography}{10}
\providecommand{\url}[1]{\texttt{#1}}
\providecommand{\urlprefix}{URL }

\bibitem{Akers78}
Akers, S.B.: Binary decision diagrams. {IEEE} Trans. Comput.  C-27(6),
  509--516 (1978)

\bibitem{AlurHenzingerBook}
Alur, R., Henzinger, T.A.: Computer-aided verification (2004),
  \url{http://www.cis.upenn.edu/group/cis673/}

\bibitem{baierbook}
Baier, C., Katoen, J.P.: Principles of model checking. MIT Press (2008)

\bibitem{BarnatCP11}
Barnat, J., Chaloupka, J., van~de Pol, J.: Distributed algorithms for {SCC}
  decomposition. J. Log. Comput.  21(1),  23--44 (2011)

\bibitem{BloemGS06}
Bloem, R., Gabow, H.N., Somenzi, F.: An algorithm for strongly connected
  component analysis in \emph{n} log \emph{n} symbolic steps. Form. Methods
  Syst. Des.  28(1),  37--56 (2006)

\bibitem{Bryant85}
Bryant, R.E.: Symbolic manipulation of {B}oolean functions using a graphical
  representation. In: Conference on Design automation ({DAC}). pp. 688--694
  (1985)

\bibitem{ChatterjeeDHL16}
Chatterjee, K., Dvo{\v r}{\' a}k, W., Henzinger, M., Loitzenbauer, V.: Model
  and objective separation with conditional lower bounds: Disjunction is harder
  than conjunction. In: {LICS}. pp. 197--206 (2016)

\bibitem{ChatterjeeH11}
Chatterjee, K., Henzinger, M.: {Faster and Dynamic Algorithms For Maximal
  End-Component Decomposition And Related Graph Problems In Probabilistic
  Verification}. In: SODA. pp. 1318--1336 (2011)

\bibitem{ChatterjeeH12}
Chatterjee, K., Henzinger, M.: {An \ensuremath{O(n^2)} Time Algorithm for
  Alternating B\"{u}chi Games}. In: SODA. pp. 1386--1399 (2012)

\bibitem{ChatterjeeH14}
Chatterjee, K., Henzinger, M.: {Efficient and Dynamic Algorithms for
  Alternating B\"uchi Games and Maximal End-Component Decomposition}. Journal
  of the ACM  61(3), ~15 (2014)

\bibitem{ChatterjeeHJS13}
Chatterjee, K., Henzinger, M., Joglekar, M., Shah, N.: Symbolic algorithms for
  qualitative analysis of {M}arkov decision processes with {B}{\"{u}}chi
  objectives. Form. Methods Syst. Des.  42(3),  301--327 (2013)

\bibitem{ChatterjeeHL15}
Chatterjee, K., Henzinger, M., Loitzenbauer, V.: {Improved Algorithms for
  One-Pair and $k$-Pair Streett Objectives}. In: LICS. pp. 269--280 (2015)

\bibitem{ChatterjeeJH03}
Chatterjee, K., Jurdziński, M., Henzinger, T.A.: {Simple stochastic parity
  games}. In: CSL. pp. 100--113 (2003)

\bibitem{ChatterjeeDHL18}
Chatterjee, K., Dvo{\v r}{\'{a}}k, W., Henzinger, M., Loitzenbauer, V.: Lower
  bounds for symbolic computation on graphs: Strongly connected components,
  liveness, safety, and diameter. In: {SODA}. pp. 2341--2356 (2018)

\bibitem{ChatterjeeGK13}
Chatterjee, K., Gaiser, A., Kret{\'{\i}}nsk{\'{y}}, J.: Automata with
  generalized rabin pairs for probabilistic model checking and {LTL} synthesis.
  In: {CAV}. pp. 559--575 (2013)

\bibitem{LIQUOR}
Ciesinski, F., Baier, C.: {LiQuor}: {A} tool for qualitative and quantitative
  linear time analysis of reactive systems. In: {QEST}. pp. 131--132 (2006)

\bibitem{NUSMV}
Cimatti, A., Clarke, E., Giunchiglia, F., Roveri, M.: {NUSMV}: a new symbolic
  model checker. International Journal on Software Tools for Technology
  Transfer (STTT)  2(4),  410--425 (2000)

\bibitem{ClarkeBook}
Clarke, Jr., E.M., Grumberg, O., Peled, D.A.: Model Checking. MIT Press,
  Cambridge, MA, USA (1999)

\bibitem{ClarkeGP99}
Clarke, E., Grumberg, O., Peled, D.: Symbolic model checking. In: Model
  Checking. MIT Press (1999)

\bibitem{VLTS}
{CWI/SEN2 and INRIA/VASY}: The {VLTS} {B}enchmark {S}uite,
  \url{http://cadp.inria.fr/resources/vlts}

\bibitem{STORM}
Dehnert, C., Junges, S., Katoen, J., Volk, M.: A {Storm} is coming: {A} modern
  probabilistic model checker. In: {CAV}. pp. 592--600 (2017)

\bibitem{EsparzaK14}
Esparza, J., Kret{\'{\i}}nsk{\'{y}}, J.: From {LTL} to deterministic automata:
  {A} safraless compositional approach. In: {CAV}. pp. 192--208 (2014)

\bibitem{GentiliniPP03}
Gentilini, R., Piazza, C., Policriti, A.: Computing strongly connected
  components in a linear number of symbolic steps. In: {SODA}. pp. 573--582
  (2003)

\bibitem{GentiliniPP08}
Gentilini, R., Piazza, C., Policriti, A.: Symbolic graphs: Linear solutions to
  connectivity related problems. Algorithmica  50(1),  120--158 (2008)

\bibitem{HenzingerT96}
Henzinger, M., Telle, J.A.: {Faster Algorithms for the Nonemptiness of Streett
  Automata and for Communication Protocol Pruning}. In: SWAT. pp. 16--27 (1996)

\bibitem{SPIN}
Holzmann, G.J.: The model checker {SPIN}. IEEE Trans. Softw. Eng.  23(5),
  279--295 (1997)

\bibitem{KomarkovaK14}
Kom{\'{a}}rkov{\'{a}}, Z., Kret{\'{\i}}nsk{\'{y}}, J.: Rabinizer 3: Safraless
  translation of {LTL} to small deterministic automata. In: {ATVA}. pp.
  235--241 (2014)

\bibitem{PRISM}
Kwiatkowska, M.Z., Norman, G., Parker, D.: {PRISM} 4.0: Verification of
  probabilistic real-time systems. In: CAV. pp. 585--591 (2011)

\bibitem{Lee59}
Lee, C.Y.: Representation of switching circuits by binary-decision programs.
  Bell System Techn. J.  38(4),  985--999 (1959)

\bibitem{Loitzenbauer16}
Loitzenbauer, V.: Improved Algorithms and Conditional Lower Bounds for Problems
  in Formal Verification and Reactive Synthesis. Ph.D. thesis, University of
  Vienna (2016)

\bibitem{MPProgress}
Manna, Z., Pnueli, A.: Temporal Verification of Reactive Systems: Progress
  (Draft) (1996)

\bibitem{RaviBS00}
Ravi, K., Bloem, R., Somenzi, F.: A comparative study of symbolic algorithms
  for the computation of fair cycles. In: {FMCAD}. pp. 143--160 (2000)

\bibitem{Safra88}
Safra, S.: On the complexity of $\omega$-automata. In: {{FOCS}}. pp. 319--327
  (1988)

\bibitem{Somenzi15}
Somenzi, F.: {CUDD}: {CU} decision diagram package release 3.0.0 (2015),
  \url{http://vlsi.colorado.edu/~fabio/CUDD/}

\end{thebibliography}

\clearpage
\section*{Appendix}
\appendix
\section{Details of Section~\ref{sec:lss}: Symbolic Lock-Step Search}\label{sec:applss}

\begin{proof}[of Theorem~\ref{thm:lss}]

\smallskip\noindent{\bf Strong connectivity.}
We want to show that $C \gets $ \ref{proc:lockstep}($G$, $S$, $H_S$, $T_S$) is
a top or bottom SCC of $G[S]$ given Invariant~\ref{inv:HT} is satisfied. 
By the invariant at least one vertex of each top SCC of $G[S]$ is contained in~$H_S$
and at least one vertex of each bottom SCC of $G[S]$ is contained in~$T_S$.
Suppose $C$ is the set obtained from a search conducted by $\post$ operations 
that started from within a
bottom SCC $\tilde{C}$ of $G[S]$. Since $\tilde{C}$ is a bottom SCC
and we update the search by executing $\post$ operations (and moreover
intersect with $S$ at every update), we have $C \subseteq \tilde{C}$. Further,
since $\tilde{C}$ is an SCC, the updates
with $\post$ eventually cover all vertices of $\tilde{C}$, which gives
us $C = \tilde{C}$. A set~$C_t$ constructed with $\post$ operations whose start vertex~$t$
is not contained in a bottom SCC of $G[S]$ can not yield the set $C$ since
eventually it contains a bottom SCC of $G[S]$, and by
Invariant~\ref{inv:HT} this SCC contains a candidate in $T_S$; therefore 
$|C_t \cap T_S| > 1$ is satisfied at some point in the construction of $C_t$ and
then search is canceled by removing $t$ from $T_S$; note that a search starting 
from a bottom SCC can be canceled only if another vertex of the bottom SCC remains in $T_S$. By the
symmetric argument for searches conducted by $\pre$ operations that
started from a vertex of a top SCC we have that the returned set $C$ is either 
a top or a bottom SCC of $G[S]$.

\smallskip\noindent{\bf Bound on symbolic steps.}
Consider (one of) the smallest top or bottom SCCs $\tilde{C}$ of $G[S]$. Suppose
w.l.o.g.\ that $\tilde{C}$ is a bottom SCC. By
Invariant~\ref{inv:HT} there is a search, conducted by $\post$ operations, 
that starts from a vertex $t \in T_S$
within $\tilde{C}$ and that is not canceled, and
therefore this search terminates after at most $\lvert \tilde{C} \rvert$ many 
$\post$ operations. Other searches may terminate earlier but this gives an upper bound 
of $O((\lvert H_S \rvert + \lvert T_S \rvert) \cdot \lvert \tilde{C} \rvert)$ on 
the number of symbolic steps until the lock-step search terminates. 
Finally, consider the returned set
$C \gets $ \ref{proc:lockstep}($G$, $S$, $H_S$, $T_S$). 
There are two possible cases: either (i) $S = C$, which implies $C = \tilde{C}$ so the number of
symbolic steps can be bounded by $O((\lvert H_S \rvert + \lvert T_S \rvert) \cdot \lvert C \rvert)$, 
or (ii) $S \neq C$. In the second case, since $\tilde{C}$ is (some) smallest SCC, $C$ is an SCC,
and $S \setminus C$ contains at least one SCC, we have $\lvert \tilde{C} \rvert \le \lvert C \rvert$
and $\lvert \tilde{C} \rvert \le \lvert S \setminus C \rvert$, and hence we can bound the number
of symbolic steps in this case by $O((\lvert H_S \rvert + \lvert T_S \rvert) \cdot 
\min(\lvert \scc \rvert, \lvert S \setminus \scc \rvert))$.
\qed
\end{proof}

\section{Details of Section~\ref{sec:graphs}: Graphs with Streett Objectives}\label{sec:appgraphs}

\subsection{Basic Symbolic Algorithm for Graphs with Streett Objectives}

\begin{algorithm2e}
	\SetAlgoRefName{StreettGraphBasic}
	\caption{Basic Algorithm for Graphs with Streett Obj.}
	\label{alg:streettgraphbasic}
	\SetKwInOut{Input}{Input}
	\SetKwInOut{Output}{Output}
	\BlankLine
	\Input{graph $G = (V, E)$ and Streett pairs 
	$\SP= \{(L_i, U_i) \mid 1 \le i \le k\}$
	}
	\Output
	{
	$\win{G, \Streett{\SP}}$
	}
	\BlankLine
	$\mathcal{\ec} \gets \sccalg(G)$; $\goodC \gets \emptyset$\;
	\While{$\mathcal{\ec} \ne \emptyset$}{
		remove some~$S \in \mathcal{\ec}$ from $\mathcal{\ec}$\;
		$\badv \gets \bigcup_{1 \le i \le k : U_i \cap S = \emptyset} (L_i \cap S)$\;
		\If{$\badv \ne \emptyset$}{
			$S \gets S \setminus \badv$\;
			$\mathcal{\ec} \gets \mathcal{\ec} \cup \sccalg(G[S])$\;
		}\Else{
			\If(\tcc*[h]{$G[S]$ contains at least one edge}\label{l:gbasic:edge})
			{$\post(S) \cap S \ne \emptyset$}{
				$\goodC \gets \goodC \cup \set{S}$\;
			}
		}
	}
	\Return{$\reachG{G}{\bigcup_{\scc \in \goodC} \scc}$}\;
\end{algorithm2e}

The pseudocode of the basic symbolic algorithm for graphs with Streett
objectives is given in Algorithm~\ref{alg:streettgraphbasic}.

The basic symbolic algorithm for Streett objectives on graphs finds good components as follows.
The algorithm maintains two sets of vertex sets: $\goodC$ contains identified 
good components and is initially empty; $\mathcal{\ec}$ contains candidates for 
good components and is initialized with the SCCs of the input graph~$G$. The sets 
in $\mathcal{\ec}$ are strongly connected subgraphs of~$G$ throughout the 
algorithm. In each iteration of the while-loop one of the candidate sets~$S$
maintained in $\mathcal{\ec}$ is considered. If the set~$S$ does not contain
bad vertices and contains at least one edge, then it is a good component and 
added to $\goodC$. Otherwise, the set of bad vertices~$\badv$ in~$S$ is removed from $S$; the
subgraph induced by $S' = S \setminus \badv$ might not be strongly connected
but every good component contained in $S'$ must still be strongly connected,
therefore the maximal strongly connected subgraphs of $G[S']$ are added 
to $\mathcal{\ec}$ as new candidates for good components. By Lemma~\ref{lem:eccontained} 
and Corollary~\ref{cor:geccontained} this procedure maintains the property that 
every good component of~$G$ is completely contained in one of the vertex sets of
$\goodC$ or $\mathcal{\ec}$. Further in each iteration either (a) 
vertices are removed or separated into different vertex sets or (b) a new good 
component is identified. Thus after at most $O(n)$ iterations the set $\mathcal{\ec}$
is empty and all good components of~$G$ are contained in $\goodC$.
Furthermore, whenever bad vertices are removed from a given candidate set,
the number of target pairs this candidate set intersects is reduced by one.
Thus each vertex is considered in at most $O(k)$ iterations of the main while-loop.
Finally, the set of vertices that can reach a good component is determined (by $O(n)$ 
$\pre$ operations) and output as the winning set.
Since computing SCCs can be done in $O(n)$ symbolic steps, the total number of
symbolic steps of the basic algorithm is bounded by $O(n \cdot \min(n, k))$.

\subsection{Improved Symbolic Algorithm for Graphs with Streett Objectives}

\begin{lemma}[Invariants of Improved Algorithm for Graphs]\label{lem:improvedgraphsinv}
Invariant~\ref{inv:HT} and Invariant~\ref{inv:gccontained} are preserved throughout
Algorithm~\ref{alg:streettgraphimpr}, i.e., they hold before the first iteration, after each
iteration, and after termination of the main while-loop. Further, Invariant~\ref{inv:HT}
is preserved during each iteration of the main while-loop.
\end{lemma}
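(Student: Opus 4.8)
The plan is to prove Lemma~\ref{lem:improvedgraphsinv} by induction on the iterations of the main while-loop of Algorithm~\ref{alg:streettgraphimpr}, tracking the two invariants separately but in tandem.

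\textbf{Base case.} First I would verify that both invariants hold immediately after the initialization in lines~\ref{l:gimpr:initstart}--\ref{l:gimpr:initend}. At that point $\mathcal{\ec}$ consists exactly of the SCCs of $G$, which are pairwise disjoint and partition $V$, and every good component of $G$ is a strongly connected subgraph, hence contained in one SCC; this gives Invariant~\ref{inv:gccontained}. For each such SCC $\scc$ we set $H_\scc = T_\scc = \emptyset$, and since $G[\scc]$ is strongly connected, case~(a) of Invariant~\ref{inv:HT} applies.

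\textbf{Inductive step.} Assuming both invariants hold at the start of an iteration, I would trace the effect on the candidate set $S$ removed from $\mathcal{\ec}$ through the three branches, also checking that Invariant~\ref{inv:HT} is preserved at every intermediate point (not only between iterations). For the inner bad-vertex-removal loop (lines~\ref{l:gimpr:badstart}--\ref{l:gimpr:badend}): removing $\badv$ from $S$ cannot destroy a good component contained in $S$ by Corollary~\ref{cor:geccontained} (applied in the special case $\vr = \emptyset$, where the random attractor is just $\badv$ itself), so Invariant~\ref{inv:gccontained} is preserved; and updating $H_S$ by adding $\post(\badv) \cap S$ and $T_S$ by adding $\pre(\badv) \cap S$ records exactly the vertices of the new $S$ that lost an incoming resp.\ outgoing edge, so that every top SCC of the new $G[S]$ still has a representative in $H_S$ and every bottom SCC in $T_S$ (a vertex of a top SCC of $G[S']$, for $S' = S\setminus\badv$, that had an incoming edge in the old $G[S]$ must have lost it to a removed vertex, hence lies in $\post(\badv)$), giving Invariant~\ref{inv:HT}. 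Then: in the branch of line~\ref{l:gimpr:good1} the set is returned as a good component and, since $H_S\cup T_S = \emptyset$, by Invariant~\ref{inv:HT}(a) $G[S]$ is strongly connected and contains an edge (checked at line~\ref{l:p1}), so it genuinely is a good component and moving it to $\goodC$ preserves Invariant~\ref{inv:gccontained}; in the branch of lines~\ref{l:gimpr:basicstart}--\ref{l:gimpr:basicend} we recompute all SCCs of $G[S]$, which by Lemma~\ref{lem:eccontained}(a) each still contain every good component previously inside $S$, re-establishing both invariants with fresh empty $H,T$ sets (and if there is a single SCC, line~\ref{l:p1} again guarantees it is a good component); in the branch of lines~\ref{l:gimpr:lssbeg}--\ref{l:gimpr:lssend}, Theorem~\ref{thm:lss} (applicable because Invariant~\ref{inv:HT} holds for $(S,H_S,T_S)$ at this point) tells us the returned $\scc$ is a top or bottom SCC of $G[S]$, so $\scc$ and $S\setminus\scc$ are not strongly connected in $G[S]$ and every good component inside $S$ is contained in one of them (Lemma~\ref{lem:eccontained}(a)); updating $H_S, T_S$ via $\post(\scc)$ and $\pre(\scc)$ on line~\ref{l:p2}--\ref{l:p3} again records exactly the lost edges, and $H_\scc = T_\scc = \emptyset$ is correct because Procedure~\ref{proc:lockstep} returned $\scc$ as an SCC, so $G[\scc]$ is strongly connected. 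Finally, after the main while-loop only $\reachG{G}{\cdot}$ is computed, which does not touch $\mathcal{\ec}$ or $\goodC$, so the invariants persist at termination.

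\textbf{Main obstacle.} The routine bookkeeping is the repeated verification that updating $H_S \gets (H_S \cup \post(\cdot)) \cap S$ and $T_S \gets (T_S \cup \pre(\cdot)) \cap S$ correctly re-establishes Invariant~\ref{inv:HT}(b) after each type of modification of $S$. The one genuinely delicate point is the intra-iteration claim: inside the bad-vertex while-loop the set $S$ shrinks several times before any SCC recomputation, so one must argue that $H_S$ and $T_S$ accumulate \emph{all} heads and tails of lost edges across the whole loop --- i.e.\ that intersecting with the shrunk $S$ does not discard a vertex that is still needed, and that an edge lost in an earlier sub-iteration whose endpoint survives is still accounted for --- which follows because a vertex removed from $H_S$ by the intersection is no longer in $S$ and hence not in any SCC of $G[S]$, and because $\post(\badv)$ is recomputed against the already-updated $\badv$ in each pass. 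The other branches are comparatively immediate given Theorem~\ref{thm:lss} and Lemma~\ref{lem:eccontained}.
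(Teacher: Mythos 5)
Your proof is correct and follows essentially the same route as the paper's: induction over iterations with the base case at initialization, the observation that each update of $H_S$ and $T_S$ via $\post$/$\pre$ of the removed set catches a representative of every newly emerged top/bottom SCC, and a three-way case analysis (strongly connected, SCC recomputation, lock-step search) for the containment part of Invariant~\ref{inv:gccontained} using Theorem~\ref{thm:lss}, Lemma~\ref{lem:eccontained}, and Corollary~\ref{cor:geccontained}. One small inaccuracy: even with $\vr=\emptyset$ the random attractor of $\badv$ is not in general $\badv$ itself (player-1 vertices with all successors in the attractor are also added), but this does not affect your argument, since Corollary~\ref{cor:geccontained} still yields that a good component avoids $\badv$ — indeed this already follows from the definition of a good component.
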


\begin{proof}
\item
\smallskip\noindent{\bf Invariant~\ref{inv:HT}.}
Whenever a new candidate $S$ is added as a result from $\sccalg$, it is strongly
connected, and we set $H_S = T_S = \emptyset$; this in particular implies that 
the invariant is satisfied after the initialization of the algorithm. 

By induction and Theorem~\ref{thm:lss}, the invariant is satisfied whenever
Procedure~\ref{proc:lockstep} returns a candidate $C$ and we set
$H_C = T_C = \emptyset$.

Now consider an update of a candidate $S$ where some subset
$B$ is deleted from it and assume the invariant holds before the update. In these
cases we update $H_S$ and $T_S$ by setting $H_S \gets (H_S \cup \post(B)) \cap S$ 
and $T_S \gets (T_S \cup \pre(B)) \cap S$. This adds the vertices that remain 
in~$S$ and have an edge from a vertex of $B$ to $H_S$ and those with an edge 
to $B$ to $T_S$. Suppose a new
top (resp. bottom) SCC $\tilde{S} \subseteq S$ emerges in $S$ by the removal 
of $B$ from $S$. Then some vertex of $\tilde{S}$ had an outgoing edge to $B$ (resp.\ an incoming edge from $B$)
and thus is contained in the updated set $T_S$ (resp.\ $H_S$),
maintaining the invariant. 
This happens whenever we remove $\bad(S)$ from $S$, and whenever
we subtract a result from Procedure~$\ref{proc:lockstep}$ $C$ from $S$.

\smallskip\noindent{\bf Invariant~\ref{inv:gccontained} -- Disjointness.}
The sets in $\mathcal{\ec} \cup \goodC$ are pairwise disjoint at the
initialization since $\goodC$ is initialized as $\emptyset$. Furthermore,
whenever a set $S$ is added to $\goodC$ in an iteration of the main
while-loop, a superset $\tilde{S} \supseteq S$ is removed from
$\mathcal{\ec}$ in the same iteration of the while-loop. Therefore
by induction the disjointness of the sets in $\mathcal{\ec} \cup \goodC$
is preserved.

\smallskip\noindent{\bf Invariant~\ref{inv:gccontained} -- Containment of good components.}
At initialization, $\mathcal{\ec}$ contains all SCCs of the input graph $G$.
Each good component $C$ of $G$ is strongly connected, so there exists an SCC
$Y \supseteq C$ such that $Y \in \mathcal{\ec}$ for each good component $C$.

Consider a set $S \in \mathcal{\ec}$ that is removed from $\mathcal{\ec}$
at the beginning of an iteration of the main while-loop. Consider further
a good component $C$ of $G$ such that $C \subseteq S$. We require that a set
$Y \supseteq C$ is added to either
$\mathcal{\ec}$ or $\goodC$ in this iteration of the main while-loop.

First, whenever we remove $\bad(S)$ from $S$, by Corollary~\ref{cor:geccontained}
we maintain the fact that $C \subseteq S$. Second, $G[S]$ contains an edge since $C \subseteq S$.
Finally, one of the three cases happens:

\smallskip\noindent Case~(1): If $\lvert H_S \rvert + \lvert T_S \rvert = 0$, then the set
$S \supseteq C$ is added to $\goodC$.

\smallskip\noindent Case~(2): If $\lvert H_S \rvert + \lvert T_S \rvert \ge \sqrt{m / \log n}$,
then the algorithm computes the SCCs of $G[S]$. Since $C \subseteq S$ is strongly connected,
it is completely contained in some SCC $Y$ of $G[S]$, and $Y$ is added either to
$\mathcal{\ec}$ or to $\goodC$.

\smallskip\noindent Case~(3): If $0 < \lvert H_S \rvert + \lvert T_S \rvert < \sqrt{m / \log n}$,
then the algorithm either adds $S \supseteq C$ to $\goodC$, or partitions $S$
into $\tilde{S}$ and $S \setminus \tilde{S}$. Suppose the latter case happens, then by
Theorem~\ref{thm:lss} we have that $\tilde{S}$ is an SCC of $G[S]$. Further, since $C \subseteq S$
is strongly connected, it is completely contained in some SCC of $G[S]$.
Therefore either $C \subseteq \tilde{S}$ or $C \subseteq (S \setminus \tilde{S})$, and
both $\tilde{S}$ and $S \setminus \tilde{S}$ are added to $\mathcal{\ec}$.

\smallskip\noindent By the above case analysis we have that a set $Y \supseteq C$ is added
to either $\mathcal{\ec}$ or $\goodC$ in the iteration of the main while-loop, and thus
the invariant is preserved throughout the algorithm.
\qed

\end{proof}

\begin{proof}[of Theorem~\ref{thm:improvedgraphs}]

\smallskip\noindent{\bf Correctness.}
	Whenever a candidate set $S$ is added to $\goodC$, it contains an edge
    by the check at line~\ref{l:p1}, and $\bad(S) = \emptyset$ by the check at
    line~\ref{l:gimpr:innerw}. Furthermore, \upbr{a} at line~\ref{l:gimpr:good1}, $S$
    is strongly connected by Invariant~\ref{inv:HT}, \upbr{b} at line~\ref{l:gimpr:good2},
    $S$ is strongly connected by the result of~\sccalg, and \upbr{c} at
    line~\ref{l:gimpr:good3}, $S$ is strongly connected by Theorem~\ref{thm:lss}.
    Therefore we have that whenever a candidate set is added to $\goodC$,
    it is indeed a good component (soundness).

    Finally, by soundness, Invariant~\ref{inv:gccontained}, the termination of 
    the algorithm (shown below), and the fact that $\mathcal{\ec} = \emptyset$ at the
    termination of the algorithm, we have that $\goodC$ contains all good
    components of $G$ (completeness).

\smallskip\noindent{\bf Symbolic steps analysis.}
	By~\cite{GentiliniPP08}, the initialization with the SCCs of
	the input graph takes $O(n)$ symbolic steps.
	Furthermore, the reachability computation in the last step
	takes $O(n)$ $\pre$ operations.
	
	In each iteration of the outer while-loop, a set~$S$ is removed 
	from $\mathcal{\ec}$ and either \upbr{a} a set $S' \subseteq S$ is added
    to $\goodC$ and no set is added to $\mathcal{\ec}$ or \upbr{b} at least two
    sets that are (proper subsets of) a partition of $S$ are added to $\mathcal{\ec}$.
    Both can happen at most $O(n)$ times,
	thus there can be at most $O(n)$ iterations of the outer while-loop.
	The $\pre$ and $\post$ operations at lines~\ref{l:p1},~\ref{l:p2}, 
    and~\ref{l:p3} can be charged to the iterations of the outer while-loop.
    
	An iteration of the inner while-loop (lines~\ref{l:gimpr:innerw}-\ref{l:gimpr:badend})
	is executed only if some vertices~$\badv$
	are removed from $S$; the vertices of $\badv$
	are then not considered further. Thus there can, in total, be at most $O(n)$
	$\pre$ and $\post$ operations over all iterations of the inner while-loop.
	
    Note that every vertex in each
    of $H_S$ and $T_S$ can be attributed to at least one unique implicit edge deletion 
    since we only add vertices to $H_S$ resp.\ $T_S$ that are successors resp.\ 
    predecessors of vertices that were separated from $S$ 
    (or deleted from the maintained graph).
    Whenever the case $\lvert H_S \rvert + \lvert T_S \rvert \ge \sqrt{m / \log{n}}$
    occurs, for all subsets $C \subseteq S$ that are then added to
    $\mathcal{\ec}$, we initialize $H_C = T_C = \emptyset$. Therefore the
    case $\lvert H_S \rvert + \lvert T_S \rvert \ge \sqrt{m / \log{n}}$ can
    happen at most $O(\sqrt{m \log{n}})$ times throughout the algorithm since
    there are at most $m$ edges that can be deleted, and hence in total
    takes $O(n \cdot \sqrt{m \log{n}})$ symbolic steps.
	
	It remains to bound the number of symbolic steps in
	Procedure~\ref{proc:lockstep}. Let $\scc$ be the set returned by the procedure;
	we charge the symbolic steps in this call of the procedure to the vertices of 
	the smaller set of $\scc$ and $S \setminus \scc$.
	By Theorem~\ref{thm:lss} we have either \upbr{a}
	$\scc = S$, the number of symbolic steps in this call is 
	bounded by $O(\sqrt{m/\log{n}} \cdot \lvert \scc \rvert)$, and the set $S$ is added to $\goodC$ or
	\upbr{b} $\min(\lvert \scc \rvert, \lvert S \setminus 
	\scc \rvert) \le \lvert S \rvert / 2$ and the
	number of symbolic steps in this call is 
	bounded by $O(\sqrt{m/\log{n}} \cdot \min(\lvert \scc \rvert, \lvert S \setminus 
	\scc \rvert))$.
	Case~\upbr{a} can happen at most once for the vertices of $\scc$,
	and for case~\upbr{b} note that the size of a set containing a specific vertex 
	can be halved at most $O(\log{n})$ times; thus we charge each vertex at 
	most $O(\log{n})$ times. Hence we can bound the total number of symbolic
	steps in all calls to the procedure by $O(n \cdot \sqrt{m \log{n}})$.
    \qed

\end{proof}

\section{Details of Section~\ref{sec:mecs}: Symbolic MEC Decomposition}\label{sec:appmecs}

\subsection{Basic Symbolic Algorithm for MEC decomposition}
Recall that an end-component is a set of vertices that (a) has no 
random edges to vertices not in the set and its induced sub-MDP
is (b) strongly connected and (c) contains at least one edge.

Algorithm~\ref{alg:mecbasic} computes all maximal end-components of
a given MDP and is formulated as to highlight the similarities to the algorithms
for graphs and MDPs with Streett objectives.
The algorithm maintains two sets, the set $\goodC$ of identified maximal 
end-components that is initially empty and the set $\mathcal{\ec}$ of candidates 
for maximal end-components that is initialized with the SCCs of 
the MDP. In each iteration of the while-loop one set $S$ is removed from 
$\mathcal{\ec}$ and either (1a) identified as a maximal end-component and added 
to $\goodC$ or (1b) removed because the induced sub-MDP does not contain an edge
or (2) it contains vertices with outgoing random edges. In the latter case 
these vertices $\rout$ are identified and their random attractor is removed from 
$S$. After this step the sub-MDP induced by the remaining vertices 
of $S$ might not be strongly connected any more. Therefore the SCCs of this 
sub-MDP are determined and added to $\mathcal{\ec}$ as new candidates 
for maximal end-components. Note that this maintains the invariants that 
(i) each set in $\mathcal{\ec}$ induces a strongly connected subgraph and 
(ii) each end-component is a subset of one set in either 
$\goodC$ or $\mathcal{\ec}$. By (i) a set in $\mathcal{\ec}$ is an end-component
if it does not have outgoing random edges and the induced sub-MDP contains an edge, 
i.e., in particular this holds for the sets added to $\goodC$ (soundness). By (ii) and 
$\mathcal{\ec} = \emptyset$ at termination of the while-loop
the algorithm identifies all maximal end-components of the MDP (completeness).
Since both (1) and (2) can happen at most $O(n)$ times, there are $O(n)$
iterations of the while-loop. 
In each iteration the most expensive operations are the computation
of a random attractor and of SCCs, which can both be done in $O(n)$ symbolic steps.
Thus Algorithm~\ref{alg:mecbasic} correctly computes all maximal end-components
of an MDP and takes $O(n^2)$ symbolic steps.

\begin{algorithm2e}[t]
	\SetAlgoRefName{MECBasic}
	\caption{Basic Algorithm for Maximal End-Components}
	\label{alg:mecbasic}
	\SetKwInOut{Input}{Input}
	\SetKwInOut{Output}{Output}
	\BlankLine
	\Input{an MDP~$\mdp = (G = (V, E), (\vp, \vr))$
	}
	\Output
	{
	 the set of maximal end-components of $\mdp$
	}
	\BlankLine
	$\goodC \gets \emptyset$\;
	$\mathcal{\ec} \gets \sccalg(G)$\;
	\While{$\mathcal{\ec} \ne \emptyset$}{
		remove some $S \in \mathcal{\ec}$ from $\mathcal{\ec}$\;
		$\rout \gets S \cap \vr \cap \pre(V\setminus S) $\;
		\If{$\rout \ne \emptyset$}{
			$S \gets S \setminus \at{R}{G}{\rout}$\;
			$\mathcal{\ec} \gets \mathcal{\ec} \cup \sccalg(G[S])$\;
		}\Else{
			\If(\tcc*[h]{$G[S]$ contains at least one edge})
			{$\post(S) \cap S \ne \emptyset$}{
				$\goodC \gets \goodC \cup \set{S}$\;
			}
		}
	}
	\Return{$\goodC$}\;
\end{algorithm2e}

\subsection{Improved Symbolic Algorithm for MEC decomposition}

\smallskip\noindent{\em Informal description.}
We show how to determine all maximal end-components (MECs) of an MDP 
in $O(n \sqrt{m})$ symbolic operations. The difference to the basic algorithm  lies in the way strongly connected parts
of the MDP are identified after the deletion of vertices that cannot be contained
in a MEC. For this the symbolic lock-step search from Section~\ref{sec:lss} is used whenever
not too many edges have been deleted since the last re-computation of SCCs.

Let $\mdp$ be the given MDP and $G = (V, E)$ its underlying graph.
The algorithm maintains two sets of vertex sets: the set $\goodC$ of already
identified MECs that is initialized with the empty set and the set $\mathcal{\ec}$
that is initialized with the SCCs of $G$ and contains vertex sets that are
candidates for MECs. The algorithm preserves the 
following invariant for the $\goodC$ and $\mathcal{\ec}$ over the iterations 
of the while-loop and returns the set $\goodC$
when the set $\mathcal{\ec}$ is empty after an iteration of the while-loop.

\begin{invariant}[Maintained Sets]\label{inv:meccontained}
The sets in $\mathcal{\ec} \cup \goodC$ are pairwise disjoint and
for every maximal end-component~$\ec$ of $G$ there exists a set $Y \supseteq \ec$
such that either $Y \in \mathcal{\ec}$ or $Y \in \goodC$.
\end{invariant}

\begin{algorithm2e}
	\SetAlgoRefName{MECImpr}
	\caption{Improved Algorithm for Maximal End-Components}
	\label{alg:mecimpr}
	\SetKwInOut{Input}{Input}
	\SetKwInOut{Output}{Output}
	\BlankLine
	\Input{an MDP~$\mdp = (G = (V, E), (\vp, \vr))$
	}
	\Output
	{
	 the set of maximal end-components of $\mdp$
	}
	\BlankLine
	$\mathcal{\ec} \gets \sccalg(G)$; $\goodC \gets \emptyset$\;
	\ForEach{$\scc \in \mathcal{\ec}$}{
		$T_\scc \gets \emptyset$\;
	}
	\While{$\mathcal{\ec} \ne \emptyset$}{
		remove some $S \in \mathcal{\ec}$ from $\mathcal{\ec}$\;
		$\rout \gets S \cap \vr \cap \pre(V\setminus S)$\label{l:mec:remb}\;
		$A \gets \at{R}{G}{\rout}$\label{l:mecimpr:remattr}\;
		$S \gets S \setminus A$\label{l:mecimpr:rem1}\;
		$T_S \gets (T_S \cup \pre(A)) \cap S$\label{l:mec:reme}\;
		\If(\tcc*[h]{$G[S]$ contains at least one edge}){$\post(S) \cap S \ne \emptyset$\label{l:mec:p1}}{ 
			\If{$\lvert T_S \rvert = 0$}{
				$\goodC \gets \goodC \cup \set{S}$\label{l:mec:good1}\;
			}\ElseIf{$\lvert T_S \rvert \ge \sqrt{m}$}{
				delete $T_S$\;
				$\mathcal{\scc} \gets \sccalg(G[S])$\;
				\If{$\lvert \mathcal{\scc} \rvert = 1$}{
					$\goodC \gets \goodC \cup \set{S}$\label{l:mec:good2}
				}\Else{
					\ForEach{$\scc \in \mathcal{\scc}$}{
								$T_\scc \gets \emptyset$\;}
					$\mathcal{\ec} \gets \mathcal{\ec} \cup \mathcal{\scc}$\;
				}
			}\Else{
				$\scc \gets $\ref{proc:lockstep}($G$, $S$, $\emptyset$, $T_S$)\;
				\If(\tcc*[h]{$G[\scc]$ contains at least one edge}){$\post(\scc) \cap \scc \ne \emptyset$\label{l:mec:p2}}{
					$\goodC \gets \goodC \cup \set{\scc}$\label{l:mec:good3}\;
				}
				$S \gets S \setminus \scc$\label{l:mecimpr:rem2}\;
				$T_S \gets (T_S \cup \pre(\scc)) \cap S$\label{l:mec:p3}\;
				$\mathcal{\ec} \gets \mathcal{\ec} \cup \set{S}$\;
			}
		}
	}
	\Return{$\goodC$}\;
\end{algorithm2e}

For each vertex set~$S$ in $\mathcal{\ec}$
additionally a subset $T_S$ of $S$ is maintained that contains vertices that
have lost outgoing edges since the last time a superset of $S$ was identified as 
strongly connected.
We use the following restrictions of Invariant~\ref{inv:HT} and Theorem~\ref{thm:lss}
(presented in Section~\ref{sec:lss}) to bottom SCCs only.

\begin{invariant}[Start Vertices BSCC]\label{inv:T}
	Either \upbr{a} $T_S$ is empty and 
	$G[S]$ is strongly connected or \upbr{b} at least one vertex of
	each bottom SCC of $G[S]$ is contained in $T_S$.
\end{invariant}

{
\begin{thm}[Lock-Step Search BSCC]\label{thm:lockstepbscc}
	Provided Invariant~\ref{inv:T} holds, 
	Procedure~\ref{proc:lockstep}\upbr{$G$, $S$, $\emptyset$, $T_S$}
	returns a bottom SCC~$\scc \subseteq S$ of
	$G[S]$ in $O(\lvert T_S \rvert \cdot \lvert \scc \rvert)$ symbolic steps.
\end{thm}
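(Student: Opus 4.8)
The plan is to specialize the proof of Theorem~\ref{thm:lss} to the situation $H_S = \emptyset$, keeping only the part of that argument that concerns searches conducted by $\post$. First I would observe that when Procedure~\ref{proc:lockstep} is invoked with an empty set of heads, the first for-each loop is vacuous, so the procedure performs only forward searches from the vertices of $T_S$ and only the return statement inside the second for-each loop can be reached; hence the returned set has the form $\scc_t$ for some $t \in T_S$, is contained in $S$ by construction, and no top-SCC search ever takes place. Termination of the while-loop is argued as in the general case: each $\scc_t$ is monotone non-decreasing and bounded by $S$, and in any iteration in which some $\scc_t$ does not grow, either $t$ is removed from $T_S$ or a return occurs; since $T_S$ is finite and, in the only case in which the procedure is actually called, nonempty (the case $\lvert T_S \rvert = 0$ is handled before the call in line~\ref{l:mec:good1}), a return is eventually executed.

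For correctness I would transcribe the $\post$-side reasoning from the proof of Theorem~\ref{thm:lss}. A search $\scc_t$ started from a vertex inside a bottom SCC $\tilde{C}$ of $G[S]$ stays inside $\tilde{C}$ (as $\tilde{C}$ has no outgoing edges and every update intersects with $S$) and eventually equals $\tilde{C}$ since $\tilde{C}$ is strongly connected; a search started from a vertex outside every bottom SCC eventually contains some bottom SCC, which by Invariant~\ref{inv:T}\,(b) contains a vertex of $T_S$, so at some point $\lvert \scc'_t \cap T'_S \rvert > 1$ holds and this search is discarded. A search confined to a bottom SCC can itself be discarded only when a second vertex of the same bottom SCC is still present in $T_S$, carrying a surviving confined search; moreover the number of $T_S$-vertices inside a fixed bottom SCC never drops to zero, since discarding one of them requires at least two to be present. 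Hence the procedure always returns, and the returned set $\scc$ is a bottom SCC of $G[S]$ with $\scc \subseteq S$.

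For the bound on symbolic steps I would let $\tilde{C}$ be a smallest (in number of vertices) bottom SCC of $G[S]$. By Invariant~\ref{inv:T}\,(b) and the previous paragraph, at least one search confined to $\tilde{C}$ survives, and since a surviving search strictly enlarges its set in every while-iteration in which it neither returns nor is discarded, it returns after at most $\lvert \tilde{C} \rvert$ iterations of the while-loop. As the procedure halts at the first return, it performs at most $\lvert \tilde{C} \rvert$ while-iterations, each executing $O(\lvert T_S \rvert)$ many $\post$ operations (and asymptotically no more set operations), for a total of $O(\lvert T_S \rvert \cdot \lvert \tilde{C} \rvert)$ symbolic steps; since $\lvert \tilde{C} \rvert \le \lvert \scc \rvert$ for the returned bottom SCC $\scc$, this is $O(\lvert T_S \rvert \cdot \lvert \scc \rvert)$, as claimed.

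The one point requiring care — the main obstacle — is that Invariant~\ref{inv:T} is \emph{not} the $H_S = \emptyset$ instance of Invariant~\ref{inv:HT} (setting $H_S = \emptyset$ in Invariant~\ref{inv:HT} would be essentially vacuous, since every nonempty induced subgraph has a top SCC), so Theorem~\ref{thm:lss} cannot be applied as a black box. One must instead verify that the portions of its proof dealing with $\post$-searches use only the bottom-SCC clause of Invariant~\ref{inv:HT}, which is exactly Invariant~\ref{inv:T}\,(b); granting this, the argument above is a routine adaptation.
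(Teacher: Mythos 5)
Your proof is correct and takes exactly the route the paper intends: the paper's own proof of this theorem is the single sentence that it is ``a straightforward simplification of the proof of Theorem~\ref{thm:lss}'', and you have carried out that simplification faithfully, including the correct observation that Invariant~\ref{inv:T} is not literally the $H_S=\emptyset$ instance of Invariant~\ref{inv:HT} and that only the bottom-SCC half of the original argument is needed.
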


\begin{proof}
The proof of Theorem~\ref{thm:lockstepbscc} is a straightforward
simplification of the proof of Theorem~\ref{thm:lss} located in Appendix~\ref{sec:applss}.
\qed
\end{proof}
}

Initially the sets $T_S$ are empty.
The algorithm maintains Invariant~\ref{inv:T} for 
all $S \in \mathcal{\ec}$. 
This will ensure the correctness and the number of symbolic steps
of Procedure~\ref{proc:lockstep} (Section~\ref{sec:lss}) as called by the algorithm.

In each iteration of the while-loop one vertex set~$S$ is removed from $\mathcal{\ec}$ 
and processed. First the random vertices of $S$
with edges to vertices of $V \setminus S$ are identified and their random attractor
is removed from $S$.
After this step, there are no random vertices with edges 
from $S$ to $V \setminus S$. The predecessors of the removed vertices that are contained
in $S$ are added to $T_S$ and additionally $T_S$ is updated to only include 
vertices that are still in $S$. This preserves Invariant~\ref{inv:T} (see also 
\cite[Lemma 4.5.2]{Loitzenbauer16}).  The number of symbolic steps for the 
attractor computation can be charged to the removed vertices and is therefore bounded 
by $O(n)$ in total.

If afterwards $G[S]$ does not contain an edge anymore, 
then $S$ is not considered further and the algorithm continues with
the next iteration. Otherwise one of three cases happens.

\smallskip\noindent Case~(1): If $T_S$ is empty,
then by Invariant~\ref{inv:T} $G[S]$ is strongly connected, contains at least one 
edge and does not contain a random vertex with edges to $V \setminus S$, i.e.,
$S$ is an end-component, and by Invariant~\ref{inv:meccontained} it is a MEC.
In this case the algorithm adds the set $S$ to $\goodC$, which preserves 
both invariants and can happen at most $O(n)$ times.

\smallskip\noindent Case~(2): If there 
are at least $\sqrt{m}$ vertices in $T_S$, then the set $T_S$ is deleted and
as in the basic algorithm all SCCs of $G[S]$ are computed and add to $\mathcal{\ec}$
as new candidates for MECs. For each of the SCCs~$\scc$ a set $T_\scc$ is 
initialized with the empty set. As a vertex is added to a set $T_S$ only if one 
of its incoming edges is removed by the algorithm, Case~(2) can happen only $O(\sqrt{m})$
times over the whole algorithm. Thus the total number of symbolic steps 
for this case is $O(n \sqrt{m})$. Note that the Invariants~\ref{inv:T} 
and~\ref{inv:meccontained} are preserved.

\smallskip\noindent Case~(3): If $T_S$ contains less than $\sqrt{m}$ vertices, then 
Procedure~\ref{proc:lockstep}($G$, $S$, $\emptyset$, $T_S$) is called. 
By Invariant~\ref{inv:T} and Theorem~\ref{thm:lockstepbscc} the procedure
returns a bottom SCC~$\scc$ of $G[S]$ in $O(\lvert T_S \rvert \cdot \lvert \scc \rvert)$
many symbolic steps. Since there are no random edges between $S$ and $V\setminus S$
in $\mdp$ and $\scc$ has no outgoing edges in $G[S]$, we have that $\scc$ is 
an end-component if it contains at least one edge. By Invariant~\ref{inv:meccontained}
it is also a MEC and is correctly added to $\goodC$. 
As the sets in $\goodC$ are not considered further by the algorithm, we can 
charge the symbolic steps of Procedure~\ref{proc:lockstep} to the vertices
of $\scc$. Thus this part takes at most $O(n \sqrt{m})$ symbolic steps over the 
whole algorithm. The vertices of 
$S \setminus \scc$ are added back to $\mathcal{\ec}$, which preserves 
Invariant~\ref{inv:meccontained}. The predecessors of $\scc$ in
$S \setminus \scc$  are added to $T_{S \setminus \scc}$ and vertices of $\scc$ 
are removed from $T_{S \setminus \scc}$, which preserves Invariant~\ref{inv:T}.

By the above case analysis we have that each vertex set that is added to 
$\goodC$ is indeed a MEC (soundness). By Invariant~\ref{inv:meccontained}
and $\mathcal{\ec} = \emptyset$ at termination of the algorithm we
further have completeness.
In each iteration either $S$ does not contain an edge and is not considered further,
a set is added to $\goodC$ (and not contained in $\mathcal{\ec}$ after that) or 
case~(2) happens. Thus there are at most $O(n + \sqrt{m})$ iterations
of the algorithm. The symbolic operations we have not yet accounted for in 
the analysis of the number of symbolic steps are of $O(1)$ per iteration. Hence
Algorithm~\ref{alg:mecimpr} takes $O(n \sqrt{m})$ symbolic steps and correctly
computes the MECs of the given MDP~$\mdp$.

\begin{lemma}[Invariants of Improved Algorithm for MEC]\label{lem:improvedmecinv}
Invariant~\ref{inv:T} and Invariant~\ref{inv:meccontained} are preserved throughout
Algorithm~\ref{alg:mecimpr}, i.e., they hold before the
first iteration, after each iteration, and after termination of the main while-loop. Further,
Invariant~\ref{inv:T} is preserved during each iteration of the main while-loop.
\end{lemma}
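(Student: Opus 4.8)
The plan is to establish Lemma~\ref{lem:improvedmecinv} by the same induction scheme used for the graph case (Lemma~\ref{lem:improvedgraphsinv}), namely: show the two invariants hold after initialization, then show that a single iteration of the main while-loop preserves Invariant~\ref{inv:meccontained} and maintains Invariant~\ref{inv:T} throughout. For the base case, after line~1 the set $\mathcal{\ec}$ contains exactly the SCCs of $G$ and $\goodC=\emptyset$: disjointness is immediate, every MEC of $G$ is strongly connected and hence contained in some SCC (so containment holds), and each $T_\scc$ is set to $\emptyset$ while $G[\scc]$ is strongly connected, so Invariant~\ref{inv:T} holds.

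For the inductive step I would fix an iteration that removes a set $S$ from $\mathcal{\ec}$ and track the two invariants. First I handle Invariant~\ref{inv:T}. After the attractor removal (lines~\ref{l:mec:remb}--\ref{l:mec:reme}), the update $T_S \gets (T_S\cup\pre(A))\cap S$ adds exactly the vertices of the new $S$ with an edge into the removed set $A$; by the standard argument (already spelled out for the graph case) any newly emerged bottom SCC of $G[S]$ must contain a vertex that had an outgoing edge into $A$, so that vertex lies in the updated $T_S$, preserving Invariant~\ref{inv:T}(b) — unless $G[S]$ is strongly connected and $T_S=\emptyset$, which is case~(a). In case~(2), when $\lvert T_S\rvert\ge\sqrt m$, we discard $T_S$, recompute the SCCs of $G[S]$, and set $T_\scc=\emptyset$ for each; each such $\scc$ is strongly connected, so Invariant~\ref{inv:T}(a) holds for it. In case~(3), Theorem~\ref{thm:lockstepbscc} (applicable since Invariant~\ref{inv:T} holds for $S$ on entry) tells us the returned $\scc$ is a bottom SCC of $G[S]$; the subsequent update $T_S\gets(T_S\cup\pre(\scc))\cap(S\setminus\scc)$ again captures a vertex of every newly emerged bottom SCC of $G[S\setminus\scc]$, so Invariant~\ref{inv:T} is preserved for $S\setminus\scc$; sets added to $\goodC$ carry no $T$-set and are irrelevant. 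I would cite \cite[Lemma~4.5.2]{Loitzenbauer16} for the attractor-removal step exactly as the surrounding text already does.

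Next I handle Invariant~\ref{inv:meccontained}. Disjointness: whenever a set is added to $\goodC$ (lines~\ref{l:mec:good1},~\ref{l:mec:good2},~\ref{l:mec:good3}) it is a subset of $S$, which was just removed from $\mathcal{\ec}$, and in case~(3) the complementary piece $S\setminus\scc$ that goes back into $\mathcal{\ec}$ is disjoint from $\scc$; in case~(2) the SCCs of $G[S]$ are pairwise disjoint and contained in $S$. So by induction all sets in $\mathcal{\ec}\cup\goodC$ stay pairwise disjoint. Containment: let $\ec$ be any maximal end-component of $G$ with $\ec\subseteq S$ on entry to the iteration. Since $\ec$ has no outgoing random edges, $\ec\cap\at{R}{G}{\rout}=\emptyset$ by Lemma~\ref{lem:eccontained}(b) (with $W=\rout\subseteq V\setminus\ec$), so $\ec$ survives the attractor removal. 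If $G[S]$ then has no edge, $\ec$ would be empty, contradiction, so the \textbf{if} at line~\ref{l:mec:p1} is entered. In case~(1) $S\supseteq\ec$ is placed in $\goodC$. In case~(2), $\ec$ is strongly connected so it lies inside one SCC $\scc$ of $G[S]$, which is added to $\mathcal{\ec}$ (or to $\goodC$ if it is the unique SCC). In case~(3), by Lemma~\ref{lem:eccontained}(a) $\ec$ lies in one SCC of $G[S]$; since $\scc$ is a bottom SCC, either $\ec\subseteq\scc$ (then $\scc$, containing an edge, is added to $\goodC$ — note $\ec$ nonempty and strongly connected forces $\post(\scc)\cap\scc\ne\emptyset$) or $\ec\subseteq S\setminus\scc$ (then $S\setminus\scc$ is added to $\mathcal{\ec}$). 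In every case a superset of $\ec$ lands in $\mathcal{\ec}\cup\goodC$, so the invariant is preserved; after termination $\mathcal{\ec}=\emptyset$ and the statement follows.

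The main obstacle I anticipate is the case~(3) argument that a set added to $\goodC$ is genuinely a \emph{maximal} end-component and not merely an end-component: $\scc$ being a bottom SCC of $G[S]$ with no outgoing random edges in $\mdp$ makes it an end-component, but maximality must be argued from Invariant~\ref{inv:meccontained} — any larger end-component containing $\scc$ would itself be contained in some set of $\mathcal{\ec}\cup\goodC$, which by disjointness must be the one holding $\scc$, forcing equality. This is the same subtlety that appears in the correctness proof of the basic algorithm, so I would import that reasoning. The bookkeeping that $\ec$ genuinely stays inside the shrinking candidate across the attractor removal and the SCC split is routine given Lemma~\ref{lem:eccontained}, and the $T_S$-update arguments are direct analogues of the graph case, so no further difficulty is expected there.
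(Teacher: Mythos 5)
Your proposal is correct and follows essentially the same route as the paper: the same induction over iterations, the same three-case analysis (empty $T_S$, large $T_S$, lock-step search) using Lemma~\ref{lem:eccontained} and Theorem~\ref{thm:lockstepbscc} for containment, and the same edge-into-the-removed-set argument for maintaining $T_S$. The only divergence is that you spend effort on maximality of the sets placed in $\goodC$, which is not part of either invariant and belongs to the correctness proof of Theorem~\ref{thm:improvedmecs} rather than to this lemma.
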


\begin{proof}
\item 
\smallskip\noindent{\bf Invariant~\ref{inv:T}.}
The proof of maintaining Invariant~\ref{inv:T} in Algorithm~\ref{alg:mecimpr} is a
straightforward simplification of the proof of maintaining Invariant~\ref{inv:HT} in
Algorithm~\ref{alg:streettgraphimpr} (located in Appendix~\ref{sec:appgraphs}).

\smallskip\noindent{\bf Invariant~\ref{inv:meccontained} -- Disjointness.}
The sets in $\mathcal{\ec} \cup \goodC$ are pairwise disjoint at the
initialization since $\goodC$ is initialized as $\emptyset$. Furthermore,
whenever a set $S$ is added to $\goodC$ in an iteration of the main
while-loop, a superset $\tilde{S} \supseteq S$ is removed from
$\mathcal{\ec}$ in the same iteration of the while-loop. Therefore
by induction the disjointness of the sets in $\mathcal{\ec} \cup \goodC$
is preserved.

\smallskip\noindent{\bf Invariant~\ref{inv:meccontained} -- Containment of maximal end-components.}
At initialization, $\mathcal{\ec}$ contains all SCCs of $G$.
Each maximal end-component $X$ of $\mdp = (G = (V, E), \allowbreak(\vp, \vr), \allowbreak\trans)$
is strongly connected, so there exists an SCC $Y \supseteq X$ of $G$
such that $Y \in \mathcal{\ec}$.

Consider a set $S \in \mathcal{\ec}$ that is removed from $\mathcal{\ec}$
at the beginning of an iteration of the main while-loop. Consider further
a maximal end-component $X$ of $\mdp$ such that $X \subseteq S$. We require
that a set $Y \supseteq X$ is added to either
$\mathcal{\ec}$ or $\goodC$ in this iteration of the main while-loop.

First, after we remove $\at{R}{G}{S \cap \vr \cap \pre(V\setminus S)}$ from $S$, we maintain the fact
that $X \subseteq S$ by Lemma~\ref{lem:eccontained}. Second, $G[S]$ contains an edge
since $X \subseteq S$. Finally, one of the three cases happens:

\smallskip\noindent Case~(1): If $\lvert T_S \rvert = 0$, then the set
$S \supseteq X$ is added to $\goodC$.

\smallskip\noindent Case~(2): If $\lvert T_S \rvert \ge \sqrt{m}$,
then the algorithm computes the SCCs of $G[S]$. Since $X \subseteq S$ is strongly connected,
it is completely contained in some SCC $Y$ of $G[S]$, and $Y$ is added to~$\mathcal{\ec}$.

\smallskip\noindent Case~(3): If $0 < \lvert T_S \rvert < \sqrt{m}$,
then the algorithm partitions $S$ into $C$ and $S \setminus C$. By Theorem~\ref{thm:lockstepbscc} we
have that $C$ is a (bottom) SCC of $G[S]$. Since $X \subseteq S$ is strongly connected, it is completely
contained in some SCC of $G[S]$. Therefore either $X \subseteq C$ or $X \subseteq (S \setminus C)$.
The set $S \setminus C$ is added to $\mathcal{\ec}$. If $X \subseteq C$, then in particular
$G[C]$ contains an edge, and $C$ is added to $\goodC$.

\smallskip\noindent By the above case analysis we have that a set $Y \supseteq X$ is added
to either $\mathcal{\ec}$ or $\goodC$ in the iteration of the main while-loop.
\qed
\end{proof}

\begin{proof}[of Theorem~\ref{thm:improvedmecs}]

\smallskip\noindent{\bf Correctness.}
A candidate set can be added to $\goodC$ in three cases. When $S$ is added to $\goodC$
at line~\ref{l:mec:good1} (resp. at line~\ref{l:mec:good2}), then it contains an edge by the check at line~\ref{l:mec:p1},
it is strongly connected by $\lvert T_S \rvert = 0$ and Invariant~\ref{inv:T} (resp. by the result of $\sccalg$), and it has no
random vertices with edges to $V \setminus S$ by the random attractor removal at lines~\ref{l:mec:remb}-\ref{l:mec:reme}.
When $C$ is added at line~\ref{l:mec:good3}, then it contains an edge by the check at line~\ref{l:mec:p2},
it is strongly connected by Theorem~\ref{thm:lockstepbscc}, it contains no random vertices with edges to
$V \setminus S$ by the random attractor removal at lines~\ref{l:mec:remb}-\ref{l:mec:reme}, and it
contains no random vertices with edges to $S \setminus C$ by the fact that $C$ is a bottom SCC of $G[S]$
(see Theorem~\ref{thm:lockstepbscc}). Therefore we have that whenever a candidate set is added to $\goodC$,
it is an end-component, and by induction and Invariant~\ref{inv:meccontained} we have that
it is a maximal end-component (soundness).

Finally, by soundness, Invariant~\ref{inv:meccontained}, the termination of the algorithm
(shown below), and the fact that $\mathcal{\ec} = \emptyset$ at the
termination of the algorithm, we have that $\goodC$ contains all the maximal
end-components of $\mdp$ (completeness).

\smallskip\noindent{\bf Symbolic steps analysis.}
By~\cite{GentiliniPP08}, the initialization with the SCCs of
a given MDP takes $O(n)$ symbolic steps.

In each iteration of the outer while-loop, a set~$S$ is removed 
from $\mathcal{\ec}$ and \upbr{a} $S$ is added to $\goodC$, or
\upbr{b} at least two sets that are (subsets of) a partition of $S$ are added
to $\mathcal{\ec}$, or \upbr{c} $S$ is partitioned into two sets, one
of them may be added to $\goodC$ and the other is added to $\mathcal{\ec}$.
All three cases can happen at most $O(n)$ times, so there can be at most
$O(n)$ iterations of the outer while-loop. The $\pre$ and $\post$ operations
at lines~\ref{l:mec:remb},~\ref{l:mec:reme},~\ref{l:mec:p1},~\ref{l:mec:p2},
and~\ref{l:mec:p3} can be charged to the iterations of the outer while-loop.

Each $\cpre{R}$ operation executed as a part of the random attractor computation
at line~\ref{l:mecimpr:remattr} adds at least one vertex to $A$, and the vertices
of $A$ are then not considered any further in the algorithm. Therefore there
can, in total, be at most $O(n)$ $\cpre{R}$ operations over all attractor computations
at line~\ref{l:mecimpr:remattr}.

Note that every vertex in each of $T_S$ can be attributed to at least one unique implicit edge deletion 
since we only add vertices to $T_S$ that are predecessors of the vertices that were separated from $S$ 
(or deleted from the maintained graph). Whenever the case $\lvert T_S \rvert \ge \sqrt{m}$
occurs, for all subsets $C \subseteq S$ that are then added to $\mathcal{\ec}$, we initialize $T_C = \emptyset$.
Therefore, the case $\lvert T_S \rvert \ge \sqrt{m}$ can happen at most $O(\sqrt{m})$ times throughout
the algorithm since there are at most $m$ edges that can be deleted. By~\cite{GentiliniPP08} we have a bound
$O(n)$ for one iteration, so we can bound the total number of symbolic steps in all iterations of this
case by $O(n \cdot \sqrt{m})$.

It remains to bound the number of symbolic steps in Procedure~\ref{proc:lockstep}. Let $\scc$ be the set
returned by $\ref{proc:lockstep}$($G$, $S$, $\emptyset$, $T_S$). By Theorem~\ref{thm:lockstepbscc} and the fact that
$\lvert T_S \rvert < \sqrt{m}$, the number of symbolic steps in this call is bounded by $O(\sqrt{m} \cdot \lvert \scc \rvert)$,
and the set $\scc$ is not considered further in the algorithm after this call. Hence we can bound the total number of
symbolic steps in all calls of the procedure by $O(n \cdot \sqrt{m})$.
\qed
\end{proof}

\section{Details of Section~\ref{sec:mdps}: MDPs with Streett Objectives}\label{sec:appmdps}

\subsection{Basic Symbolic Algorithm for MDPs with Streett Objectives}

\begin{algorithm2e}
	\SetAlgoRefName{StreettMDPbasic}
	\caption{Basic Algorithm for MDPs with Streett Obj.}
	\label{alg:streettmdpbasic}
	\SetKwInOut{Input}{Input}
	\SetKwInOut{Output}{Output}
	\BlankLine
	\Input{MDP $\mdp = ((V, E), (\vp, \vr), \trans)$ and pairs $\SP= \{(L_i, U_i) \mid 1 \le i \le k\}$
	}
	\Output
	{
	$\as{\mdp, \Streett{\SP}}$
	}
	\BlankLine
	$\mathcal{\ec} \gets \mecalg(\mdp)$; $\good \gets \emptyset$\;
	\While{$\mathcal{\ec} \ne \emptyset$}{
		remove some $S \in \mathcal{\ec}$ from $\mathcal{\ec}$\;
		$\badv \gets \bigcup_{1 \le i \le k : U_i \cap S = \emptyset} (L_i \cap S)$\;
		\If{$\badv \ne \emptyset$}{
			$S \gets S \setminus \at{R}{\mdp[S]}{\badv}$\;
			$\mathcal{\ec} \gets \mathcal{\ec} \cup \mecalg(\mdp[S])$\;
		}\lElse{
			$\good \gets \good \cup \set{S}$
		}
	}
	\Return{$\as{\mdp, \Reach{\bigcup_{\ec \in \good} \ec}}$}\;
\end{algorithm2e}

The pseudocode of the basic symbolic algorithm for MDPs with Streett objectives
is given in Algorithm~\ref{alg:streettmdpbasic}. The key differences
compared to Algorithm~\ref{alg:streettgraphbasic} are as follows: 
(a)~SCC computation is replaced by MEC computation;
(b)~along with the removal of bad vertices, their random attractor is also removed;
and (c)~removing the attractor ensures that the check required for trivial SCCs
for graphs (line~\ref{l:gbasic:edge})
is not required any further.

To compute the almost-sure winning set for MDPs with Streett objectives,
we first find all (maximal) good end-components and then solve almost-sure
reachability with the union of the good end-components as target set as the last 
step of the algorithm. This is correct by Lemma~\ref{lem:gec}. Towards finding 
all good end-components, the algorithm maintains two sets, the set $\good$ of
identified good end-components that is initially empty and the set $\mathcal{\ec}$ 
of end-components that are candidates for good end-components that is 
initialized with the MECs of the MDP.
In each iteration of the while-loop one set $S$ is removed from the set of 
candidates $\mathcal{\ec}$ and the set of bad vertices $\bad(S)$ of $S$
is determined. If $\bad(S)$ is empty, then $S$ is a good end-component and 
added to $\good$. Otherwise the random attractor of $\bad(S)$ in $\mdp[S]$
is removed from $S$, which by Corollary~\ref{cor:geccontained} does not 
remove any vertices that are in a good end-component. The remaining vertices of 
$S$ have no outgoing random edges and thus still induce a sub-MDP but the 
sub-MDP might not be strongly connected any more. Then the MECs
of this sub-MDP are added to $\mathcal{\ec}$.
These operations maintain the invariants that (i) each set in $\mathcal{\ec}$
is an end-component and (ii) each good end-component is a subset of one set 
in either $\good$ or $\mathcal{\ec}$. By (i) a set in $\mathcal{\ec}$ is a (maximal) 
good end-component if it does not contain any bad vertices, i.e., in particular
this holds for the sets added to $\good$ (soundness). By (ii) and 
$\mathcal{\ec} = \emptyset$ at termination of the while-loop the algorithm
identifies all (maximal) good end-components of the MDP (completeness).
Since in each iteration of the while-loop either (1) a set is removed from $\mathcal{\ec}$
and added to $\good$ or (2) bad vertices are removed from a set and not considered 
further by the algorithm, there can be at most $O(n)$ iterations of the while-loop. 
Furthermore, whenever bad vertices are removed, then the number of target pairs
a given candidate set intersects is reduced by one. Thus each vertex 
is considered in at most $O(k)$ iterations of the while-loop.
The most expensive operation in the while-loop is the computation of the MECs. 
Denoting the number of symbolic steps for the MEC computation with $O(\mectime)$,
the number of symbolic steps of Algorithm~\ref{alg:streettmdpbasic} 
is $O(\min(n, k) \cdot \mectime)$ (assuming that the number of symbolic steps 
for the almost-sure reachability computation is lower than that).

\subsection{Improved Symbolic Algorithm for MDPs with Streett Objectives}

We present the technical details regarding the improved symbolic algorithm
for MDPs with Streett objectives. The main ideas of the algorithm are presented
in Section~\ref{sec:mdps}. The pseudocode is given in Algorithm~\ref{alg:streettmdpimpr}.


\begin{algorithm2e}
	\SetAlgoRefName{StreettMDPimpr}
	\caption{Improved Alg. for MDPs with Streett Obj.}
	\label{alg:streettmdpimpr}
	\SetKwInOut{Input}{Input}
	\SetKwInOut{Output}{Output}
	\BlankLine
    \Input{MDP $\mdp = ((V, E), (\vp, \vr), \trans)$ and pairs $\SP= \{(L_i, U_i) \mid 1 \le i \le k\}$
    }
	\Output
	{
	$\as{\mdp, \Streett{\SP}}$
	}
	\BlankLine
	$\mathcal{\ec} \gets \mecalg(\mdp)$; $\good \gets \emptyset$\;
	\lForEach{$\scc \in \mathcal{\ec}$}{
		$H_\scc \gets \emptyset$; $T_\scc \gets \emptyset$
	}
	\While{$\mathcal{\ec} \ne \emptyset$}{
		remove some $S \in \mathcal{\ec}$ from $\mathcal{\ec}$\;
		$\badv \gets \bigcup_{1 \le i \le k : U_i \cap S = \emptyset} (L_i \cap S)$\;
		\While{$\badv \ne \emptyset$\label{l:imdp:innerw}}{
			$A \gets \at{R}{\mdp[S]}{\badv}$\label{l:imdp:attr1}\;
			$S \gets S \setminus A$\label{l:mimpr:rem1}\;
			$H_S \gets (H_S \cup \post(A)) \cap S$\label{l:imdp:bh}\;
			$T_S \gets (T_S \cup \pre(A)) \cap S$\label{l:imdp:bt}\;
			$\badv \gets \bigcup_{1 \le i \le k : U_i \cap S = \emptyset} (L_i \cap S)$\;
		}
		\If(\tcc*[h]{$\mdp[S]$ contains at least one edge}){$\post(S) \cap S \ne \emptyset$\label{l:imdp:edge}}{
			\lIf{$\lvert H_S \rvert + \lvert T_S \rvert = 0$}{
				$\good \gets \good \cup \set{S}$\label{l:imdp:good1}
			}\ElseIf{$\lvert H_S \rvert + \lvert T_S \rvert \ge \sqrt{m / \log n}$}{
				delete $H_S$ and $T_S$\;
				$\mathcal{\scc} \gets \sccalg(\mdp[S])$\;
				\lIf{$\lvert \mathcal{\scc} \rvert = 1$}{
					$\good \gets \good \cup \set{S}$\label{l:imdp:good2}
				}\Else{
					\ForEach{$\scc \in \mathcal{\scc}$}{
						$\rout \gets \scc \cap \vr \cap \pre(S \setminus \scc)$\label{l:imdp:sccrout}\;
						$A \gets \at{R}{\mdp[\scc]}{\rout}$\label{l:imdp:attr2}\;
						$\scc \gets \scc \setminus A$\label{l:mimpr:rem3}\;
						$H_\scc \gets \post(A) \cap \scc$\label{l:imdp:scch}\;
						$T_\scc \gets \pre(A) \cap \scc$\label{l:imdp:scct}\;
						$\mathcal{\ec} \gets \mathcal{\ec} \cup \set{\scc}$\;
					}
				}
			}\Else{
				($\scc$, $H_S$, $T_S$) $\gets $\ref{proc:lockstep}($G$, $S$, $H_S$, $T_S$)\;
				\lIf{$\scc = S$}{
					$\good \gets \good \cup \set{S}$\label{l:imdp:good3}
				}\Else(\tcc*[h]{separate $\scc$ and $S \setminus \scc$}){
					$\rout_\scc \gets \scc \cap \vr \cap \pre(S \setminus \scc)$\label{l:mimpr:lssrout}\tcc*[h]{empty if $\scc$ bottom SCC}
					$A_\scc \gets \at{R}{\mdp[\scc]}{\rout_\scc}$\label{l:imdp:attr3}\tcc*[h]{$=\at{R}{\mdp[S]}{S \setminus \scc} \cap \scc$}
					$A_S \gets \at{R}{\mdp[S]}{\scc}$\label{l:imdp:attr4}\;
					$\scc \gets \scc \setminus A_\scc$\label{l:mimpr:rem4}\;
					$S \gets S \setminus A_S$\label{l:mimpr:rem2}\;
					$H_\scc \gets \post(A_\scc) \cap \scc$\label{l:imdp:lsshc}\;
					$T_\scc \gets \pre(A_\scc) \cap \scc$\label{l:imdp:lsstc}\;
					$H_S \gets (H_S \cup \post(A_S)) \cap S$\label{l:imdp:lsshs}\;
					$T_S \gets (T_S \cup \pre(A_S)) \cap S$\label{l:imdp:lssts}\;
					$\mathcal{\ec} \gets \mathcal{\ec} \cup \set{S} \cup 
					\set{\scc}$\;
				}
			}
		}
	}
	\Return{$\as{\mdp, \Reach{\bigcup_{\scc \in \good} \scc}}$}\;
\end{algorithm2e}

The following invariant is maintained throughout Algorithm~\ref{alg:streettmdpimpr}
for the sets in $\good$ and $\mathcal{\ec}$.

\begin{invariant}[Maintained Sets]\label{inv:gecontained}
The sets in $\mathcal{\ec} \cup \good$ are pairwise disjoint and for every good 
end-component~$\scc$ of $G$ there exists a set $Y \supseteq \scc$
such that either $Y \in \mathcal{\ec}$ or $Y \in \good$.
\end{invariant}

Furthermore, the algorithm maintains the invariant
that each candidate for a good end-component $S \in \mathcal{\ec}$
contains no random edges to vertices not in $S$.

\begin{invariant}[No Random Outgoing Edges]\label{inv:mdprout}
Given an MDP~$\mdp$ and its underlying graph $G = (V, E)$, for each
set $S \in \mathcal{\ec}$ there are no random vertices in $S$ with edges
to vertices in~$V \setminus S$.
\end{invariant}

Finally, for each candidate set $S \in \mathcal{\ec}$ the algorithm remembers
sets $H_S$ and $T_S$ of vertices that have lost incoming resp. outgoing
edges since the last time a superset of $S$ was identified as being strongly connected.
The algorithm maintains Invariant~\ref{inv:HT} and therefore it can use
Procedure~\ref{proc:lockstep} together with its correctness guarantee and bound
on symbolic steps provided by Theorem~\ref{thm:lss}.

\begin{lemma}[Invariants of Improved Algorithm for MDPs]\label{lem:improvedmdpsinv}
Invariant~\ref{inv:HT}, Invariant~\ref{inv:gecontained}, and Invariant~\ref{inv:mdprout}
are preserved throughout Algorithm~\ref{alg:streettmdpimpr}, i.e., they hold before the
first iteration, after each iteration, and after termination of the main while-loop. Further,
Invariant~\ref{inv:HT} is preserved during each iteration of the main while-loop.
\end{lemma}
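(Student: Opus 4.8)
The plan is to prove the three invariants separately, by induction over the iterations of the outer while-loop (and, where needed, over the iterations of the inner bad-vertex loop). For Invariant~\ref{inv:HT} and the disjointness part of Invariant~\ref{inv:gecontained} the arguments are exactly the ones already used for the graph algorithm (Lemma~\ref{lem:improvedgraphsinv}) and the MEC algorithm (Lemma~\ref{lem:improvedmecinv}); the work specific to the MDP case is to check that the \emph{random-attractor} removals — which the graph algorithm does not have — are compatible with all three invariants. The recurring observation I would isolate first is: if a candidate $S$ has no random edge to $V\setminus S$ (so $\mdp[S]$ is a genuine sub-MDP), and $W\subseteq S$, then $S\setminus\at{R}{\mdp[S]}{W}$ still has no random edge leaving it, because a random vertex $v\notin\at{R}{\mdp[S]}{W}$ has $\Out(v)\subseteq S$ and $\Out(v)\cap\at{R}{\mdp[S]}{W}=\emptyset$, hence $\Out(v)\subseteq S\setminus\at{R}{\mdp[S]}{W}$.

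Using this, I would first establish Invariant~\ref{inv:mdprout}. It holds after initialization because $\mecalg(\mdp)$ returns end-components. For the step, every vertex removal in the body is either the removal of a random attractor of bad vertices in $\mdp[S]$ (lines~\ref{l:imdp:attr1}--\ref{l:imdp:bt}), or — on the split branches — the removal from a piece $\scc$ of its random vertices with edges leaving $\scc$ together with their attractor (lines~\ref{l:imdp:attr2} and~\ref{l:imdp:attr3}), or the removal of $A_S=\at{R}{\mdp[S]}{\scc}$ from $S\setminus\scc$ (line~\ref{l:imdp:attr4}). For the first and third I apply the observation above verbatim. For the second I would use that the attractor $\at{R}{\mdp[\scc]}{\rout}$ computed inside $\scc$ equals $\at{R}{\mdp[S]}{S\setminus\scc}\cap\scc$ (the equality flagged by the comment on line~\ref{l:imdp:attr3}; it holds because $\mdp[S]$ is a genuine sub-MDP and, inside a non-trivial SCC of $\mdp[S]$, only random vertices can be attracted to $S\setminus\scc$ in a single step), so $\scc\setminus\at{R}{\mdp[\scc]}{\rout}=\scc\setminus\at{R}{\mdp[S]}{S\setminus\scc}$ and the observation applies again, now also killing every edge from the remaining part of $\scc$ into $S\setminus\scc$. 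I expect this identification of the per-piece attractors with restrictions of attractors of $\mdp[S]$ to be the main obstacle, since a single SCC of $\mdp[S]$ need not itself be a valid sub-MDP.

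With Invariant~\ref{inv:mdprout} in hand, Invariant~\ref{inv:HT} follows just as in the graph case. Every fresh candidate entering $\mathcal{\ec}$ from $\sccalg$ is strongly connected with $H=T=\emptyset$, and by Theorem~\ref{thm:lss} the same holds for the set returned by Procedure~\ref{proc:lockstep}; whenever a subset $B$ is deleted from a candidate $S$ (one of the random attractors just listed, with $A_S\supseteq\scc$ covering the removal of $\scc$ itself), the update $H_S\gets(H_S\cup\post(B))\cap S$, $T_S\gets(T_S\cup\pre(B))\cap S$ places a vertex of every newly emerged top (resp.\ bottom) SCC of $G[S]$ into $H_S$ (resp.\ $T_S$) and drops from $H_S,T_S$ the vertices no longer in $S$. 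The freshly created pieces $\scc\setminus A_\scc$ likewise get $H\gets\post(A_\scc)\cap\scc$, $T\gets\pre(A_\scc)\cap\scc$, which plays the same role for the SCCs that the attractor removal newly exposed. Hence the invariant is preserved after — and during — each iteration.

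Finally, Invariant~\ref{inv:gecontained}. Disjointness is identical to the graph/MEC proofs: $\good$ starts empty, every addition to $\good$ is matched by removing a superset from $\mathcal{\ec}$, and the pieces produced by a split are pairwise disjoint because the SCCs of $\mdp[S]$ partition $S$, we only remove subsets from each, and on the lock-step branch $A_S\supseteq\scc$ keeps $S\setminus A_S$ disjoint from $\scc\setminus A_\scc$. For containment, fix a good end-component $X$ of $\mdp$; initially $X$ lies in some MEC, hence in some member of $\mathcal{\ec}$. If $X\subseteq S$ when $S$ is removed from $\mathcal{\ec}$, then in the inner loop $\bad(S)\cap X=\emptyset$ (because $X$ is a \emph{good} EC: $X\cap U_i=\emptyset$ forces $X\cap L_i=\emptyset$), so Lemma~\ref{lem:eccontained}(b) applied in the sub-MDP $\mdp[S]$ gives $X\subseteq S\setminus\at{R}{\mdp[S]}{\bad(S)}$ in each round, and $\post(S)\cap S\ne\emptyset$ since $X$ has an edge. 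By Lemma~\ref{lem:eccontained}(a), $X$ lies in a single SCC $\scc$ of $\mdp[S]$. I would then dispatch the three cases as in Lemma~\ref{lem:improvedgraphsinv}: Case~(1) puts $S\supseteq X$ into $\good$; in Case~(2) the SCC $\scc\supseteq X$ is processed, and since $X$ is an EC with $X\subseteq\scc$ we have $X\cap(S\setminus\scc)=\emptyset$, hence $X\cap\at{R}{\mdp[S]}{S\setminus\scc}=\emptyset$ by Lemma~\ref{lem:eccontained}(b), so $X\subseteq\scc\setminus A$ which re-enters $\mathcal{\ec}$; in Case~(3), $\scc=S$ puts $S\supseteq X$ into $\good$, otherwise $X\subseteq\scc$ or $X\subseteq S\setminus\scc$, the first subcase giving $X\subseteq\scc\setminus A_\scc$ exactly as in Case~(2), and the second giving $X\subseteq S\setminus A_S$ by Lemma~\ref{lem:eccontained}(b) with $W=\scc$. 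In every case a superset of $X$ re-enters $\mathcal{\ec}\cup\good$, closing the induction.
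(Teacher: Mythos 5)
Your proposal is correct and follows essentially the same route as the paper's proof: induction over the loop iterations, the same three-case analysis of the branches, Lemma~\ref{lem:eccontained} and Corollary~\ref{cor:geccontained} for containment of good end-components, and the fixpoint definition of the random attractor for the no-outgoing-random-edges invariant. The only differences are organizational (you establish Invariant~\ref{inv:mdprout} first and package the attractor-closure argument as a standalone observation, and you are somewhat more explicit than the paper about the abuse of notation in $\at{R}{\mdp[\scc]}{\rout}$ when $\scc$ is not a valid sub-MDP), which do not change the substance.
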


\begin{proof}
\item 
\smallskip\noindent{\bf Invariant~\ref{inv:HT}.}
The proof is a minor extension of the maintenance proof for Algorithm~\ref{alg:streettgraphimpr}
that is given in Appendix~\ref{sec:appgraphs}. In terms of strong connectivity of a candidate $S$
and the maintenance of the sets $H_S$ and $T_S$, the only difference to the graph case is that
after an SCC~$C$ is computed by $\sccalg$ or Procedure~\ref{proc:lockstep}, another subset
of vertices~$A$ (vertices with outgoing random edges and their random attractor) is removed
from $C$. In this case the invariant is maintained by initializing $H_C$ resp.\ $T_C$ with 
the vertices of $C \setminus A$ with edges from resp.\ to vertices of $A$, i.e., 
$H_C \gets \post(A) \cap C$ and $T_C \gets \pre(A) \cap C$.

\smallskip\noindent{\bf Invariant~\ref{inv:gecontained} -- Disjointness.}
The sets in $\mathcal{\ec} \cup \good$ are pairwise disjoint at the
initialization since $\good$ is initialized as $\emptyset$. Furthermore,
whenever a set $S$ is added to $\good$ in an iteration of the main
while-loop, a superset $\tilde{S} \supseteq S$ is removed from
$\mathcal{\ec}$ in the same iteration of the while-loop. Therefore
by induction the disjointness of the sets in $\mathcal{\ec} \cup \good$
is preserved.

\smallskip\noindent{\bf Invariant~\ref{inv:gecontained} -- Containment of good end-components.}
At initialization, $\mathcal{\ec}$ contains all MECs of the input MDP
$\mdp = (G = (V, E), \allowbreak(\vp, \vr), \allowbreak\trans)$.
Each good end-component $C$ of $P$ is an end-component, 
so there exists a MEC $Y \supseteq C$ such that $Y \in \mathcal{\ec}$
for each good end-component $C$.

Consider a set $S \in \mathcal{\ec}$ that is removed from $\mathcal{\ec}$
at the beginning of an iteration of the main while-loop. Consider further
a good end-component $C$ of $P$ such that $C \subseteq S$. We require that a set
$Y \supseteq C$ is added to either
$\mathcal{\ec}$ or $\good$ in this iteration of the main while-loop.

First, whenever we remove $\at{R}{\mdp[S]}{\bad(S)}$ from $S$, by Corollary~\ref{cor:geccontained},
we maintain the fact that $C \subseteq S$. Second, $P[S]$ contains an edge since $C \subseteq S$.
Finally, one of the three cases happens:

\smallskip\noindent Case~(1): If $\lvert H_S \rvert + \lvert T_S \rvert = 0$, then the set
$S \supseteq C$ is added to $\good$.

\smallskip\noindent Case~(2): If $\lvert H_S \rvert + \lvert T_S \rvert \ge \sqrt{m / \log n}$,
then the algorithm computes the SCCs of $\mdp[S]$. If $S$ itself is the (sole) SCC of $\mdp[S]$, then it
is added to $\good$. Otherwise, since $C \subseteq S$ is strongly connected,
it is completely contained in some SCC $Y$ of $\mdp[S]$. Furthermore, since $C$ has no outgoing
random edges, by Lemma~\ref{lem:eccontained} it is contained in $Y$ even after we remove
$\at{R}{\mdp[Y]}{ Y \cap \vr \cap \pre(S \setminus Y) }$ from it. Finally, $Y$ is added to $\mathcal{\ec}$.

\smallskip\noindent Case~(3): If $0 < \lvert H_S \rvert + \lvert T_S \rvert < \sqrt{m / \log n}$,
then the algorithm either adds $S \supseteq C$ to $\good$, or partitions $S$
into $\tilde{S}$ and $S \setminus \tilde{S}$. Suppose the latter case happens, then by
Theorem~\ref{thm:lss} we have that $\tilde{S}$ is an SCC of $\mdp[S]$. Further, since $C \subseteq S$
is strongly connected, it is completely contained in some SCC of $\mdp[S]$.
Therefore either $C \subseteq \tilde{S}$ or $C \subseteq (S \setminus \tilde{S})$.
If $C \subseteq \tilde{S}$, then by Lemma~\ref{lem:eccontained} after the removal
of $\at{R}{\mdp[\tilde{S}]}{ \tilde{S} \cap \vr \cap \pre(S \setminus \tilde{S}) }$ from
$\tilde{S}$ we maintain that $C \subseteq \tilde{S}$.
If $C \subseteq (S \setminus \tilde{S})$, then by Lemma~\ref{lem:eccontained} after the removal
of $\at{R}{\mdp[S]}{ \tilde{S} }$ from $(S \setminus \tilde{S})$ we maintain that $C \subseteq (S \setminus \tilde{S})$.
Finally, both $\tilde{S}$ and $S \setminus \tilde{S}$ are added to $\mathcal{\ec}$.

\smallskip\noindent By the above case analysis we have that a set $Y \supseteq C$ is added
to either $\mathcal{\ec}$ or $\good$ in the iteration of the main while-loop.

\smallskip\noindent{\bf Invariant~\ref{inv:mdprout}.}
Given an MDP, the set $\mathcal{\ec}$ is initialized with the MECs of the MDP, and
by definition they have no random outgoing edges. Therefore the invariant holds
before the first iteration of the main while-loop.

Consider a candidate set $S \in \mathcal{\ec}$ in a given iteration of the main
while-loop. By the induction hypothesis, $S$ has no random vertices with edges to $V \setminus S$.
First, some bad vertices can be iteratively removed from $S$. At each such removal,
the random attractor to these vertices is removed from $S$ as well. After the removal,
by the definition of a random attractor, $S$ has no random outgoing edges to the attractor,
and therefore by induction has no random outgoing edges to $V \setminus S$.
Second, $S$ may be partitioned into at least two proper subsets. Then for each
such subset $C$, the random attractor to random vertices in $C$ with edges
to $S \setminus C$ is removed from $C$. By induction and the definition of a random attractor,
after the removal $C$ contains no random outgoing edges to $V \setminus C$ and 
adding it to $\mathcal{\ec}$ preserves the invariant.

\qed
\end{proof}

\begin{proof}[of Theorem~\ref{thm:improvedmdps}]

\smallskip\noindent{\bf Correctness.}
Whenever a candidate set $S$ is added to $\good$, it contains an edge
by the check at line~\ref{l:imdp:edge}, $\bad(S) = \emptyset$ by the check
at line~\ref{l:imdp:innerw}, and it has no outgoing random edges
by Invariant~\ref{inv:mdprout} and the random attractor
removal at line~\ref{l:mimpr:rem1}.
Furthermore, \upbr{a} at line~\ref{l:imdp:good1}, $S$ is strongly connected by Invariant~\ref{inv:HT},
\upbr{b} at line~\ref{l:imdp:good2}, $S$ is strongly connected by the result of~\sccalg, and \upbr{c}
at line~\ref{l:imdp:good3}, $S$ is strongly connected by Theorem~\ref{thm:lss}.
Therefore we have that whenever a candidate set is added to $\good$,
it is indeed a good end-component (soundness).

Finally, by soundness, Invariant~\ref{inv:gecontained}, the termination of the
algorithm (shown below), and the fact that $\mathcal{\ec} = \emptyset$ at the
termination of the algorithm, we have that $\good$ contains all good
end-components of $G$ (completeness).

\smallskip\noindent{\bf Symbolic steps analysis.}
When using our improved symbolic algorithm for MEC decomposition,
the initialization takes $O(n\cdot \sqrt{m})$ symbolic steps by Theorem~\ref{thm:improvedmecs}.

In each iteration of the outer while-loop, a set~$S$ is removed
from $\mathcal{\ec}$ and either \upbr{a} a set $S' \subseteq S$ is added
to $\good$ and no set is added to $\mathcal{\ec}$ or \upbr{b} at least two
sets that are (subsets of) a partition of $S$ are added to $\mathcal{\ec}$.
Both can happen at most $O(n)$ times, thus there can be at most $O(n)$ iterations
of the outer while-loop. The $\pre$ and $\post$ operations at
lines~\ref{l:imdp:edge},~\ref{l:mimpr:lssrout},~\ref{l:imdp:lsshc},~\ref{l:imdp:lsstc},~\ref{l:imdp:lsshs},
and~\ref{l:imdp:lssts} can be charged to the iterations of the outer while-loop.

An iteration of the inner while-loop (line~\ref{l:imdp:innerw}) is executed
only if some vertices~$\badv$ are removed from $S$; the vertices of $\badv$
are then not considered further. Thus there can, in total, be at most $O(n)$
$\post$ operations at line~\ref{l:imdp:bh} and $\pre$ operations at line~\ref{l:imdp:bt}
over all iterations of the inner while-loop.

Similarly, each $\cpre{R}$ operation executed as a part of a random attractor computation
adds at least one vertex to the attractor, and the vertices of the attractor are then not
considered any further in the algorithm. Therefore there can, in total, be at most
$O(n)$ $\cpre{R}$ operations over all attractor computations at
lines~\ref{l:imdp:attr1},~\ref{l:imdp:attr2},~\ref{l:imdp:attr3}, and~\ref{l:imdp:attr4}.

Note that every vertex in each of $H_S$ and $T_S$ can be attributed to at least one unique
implicit edge deletion since we only add vertices to $H_S$ resp.\ $T_S$ that are successors
resp.\ predecessors of vertices that were separated from $S$ (or deleted from the
maintained graph). Whenever the case $\lvert H_S \rvert + \lvert T_S \rvert \ge \sqrt{m / \log{n}}$
occurs, for all subsets $C \subseteq S$ that are then added to $\mathcal{\ec}$, we initialize
$H_C = T_C = \emptyset$. Therefore, the case $\lvert H_S \rvert + \lvert T_S \rvert \ge \sqrt{m / \log{n}}$
can happen at most $O(\sqrt{m \log{n}})$ times throughout the algorithm since
there are at most $m$ edges that can be deleted. In one iteration of this case, the number
of symbolic steps executed by $\sccalg$ together with symbolic steps executed at
lines~\ref{l:imdp:sccrout},~\ref{l:imdp:scch}, and~\ref{l:imdp:scct}, is bounded by $O(n)$~\cite{GentiliniPP08}.

It remains to bound the number of symbolic steps in Procedure~\ref{proc:lockstep}. Let $\scc$ be the set
returned by the procedure; we charge the symbolic steps in this call of the procedure to the vertices of 
the smaller set of $\scc$ and $S \setminus \scc$. By Theorem~\ref{thm:lss} we have either \upbr{a}
$\scc = S$, the number of symbolic steps in this call is bounded by $O(\sqrt{m/\log{n}} \cdot \lvert \scc \rvert)$,
and the set $S$ is added to $\good$ or \upbr{b}
$\min(\lvert \scc \rvert, \lvert S \setminus \scc \rvert) \le \lvert S \rvert / 2$ and the number of symbolic steps
in this call is bounded by $O(\sqrt{m/\log{n}} \cdot \min(\lvert \scc \rvert, \lvert S \setminus \scc \rvert))$.
Case~\upbr{a} can happen at most once for the vertices of $\scc$, and for case~\upbr{b} note that the size
of a set containing a specific vertex can be halved at most $O(\log{n})$ times; thus we charge each vertex at 
most $O(\log{n})$ times. Hence we can bound the total number of symbolic steps in all calls to the procedure
by $O(n \cdot \sqrt{m \log{n}})$.
\qed
\end{proof}

\section{Details of Section~\ref{sec:exper}: Experiments}\label{sec:appexper}

We present the results of the experimental evaluation when comparing based on the time.
In all the figures, both axes plot the amount of seconds spent on the execution. Similar
to the case of symbolic steps, we begin the measurement after the initial preprocessing
step (computing all SCCs for graphs and all MECs for MDPs) is finished. The results for
graphs are shown in Figure~\ref{fig:graphsT} and the results for MDPs are shown in 
Figure~\ref{fig:mdpsT}.

\newpage

\setlength{\abovecaptionskip}{2pt}

\begin{figure}[t]
\centering
\includegraphics[width=0.7\textwidth]{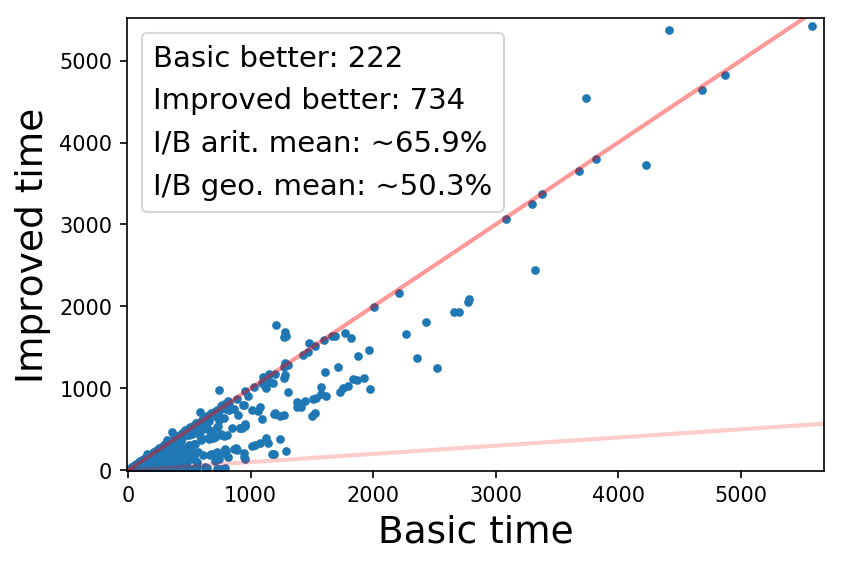}
\caption{Comparison of time for graphs with Streett objectives.}
\label{fig:graphsT}
\end{figure}

\setlength{\abovecaptionskip}{4pt}

\begin{figure}
\begin{center}
     \subfloat[10\% random vertices\label{fig:mdp10T}]{%
       \includegraphics[width=0.5\textwidth]{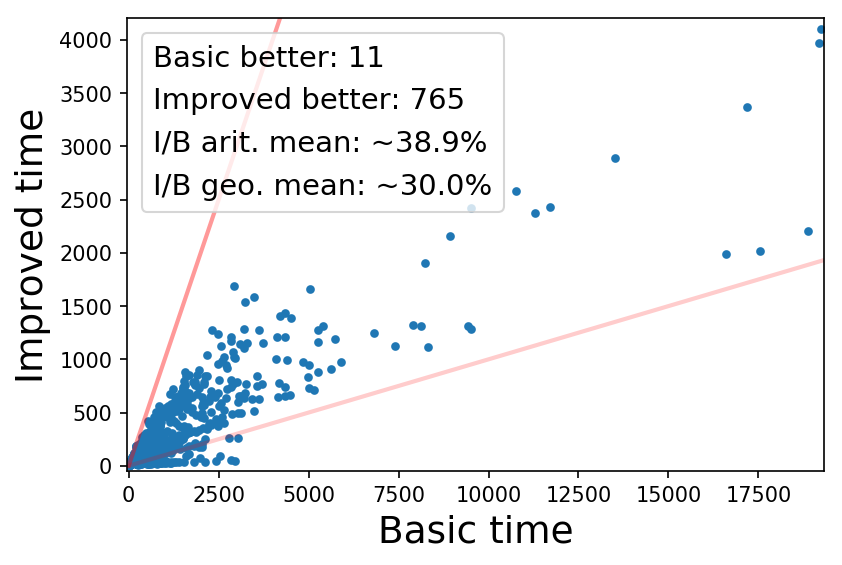}
     }
     \subfloat[20\% random vertices\label{fig:mdp20T}]{%
       \includegraphics[width=0.5\textwidth]{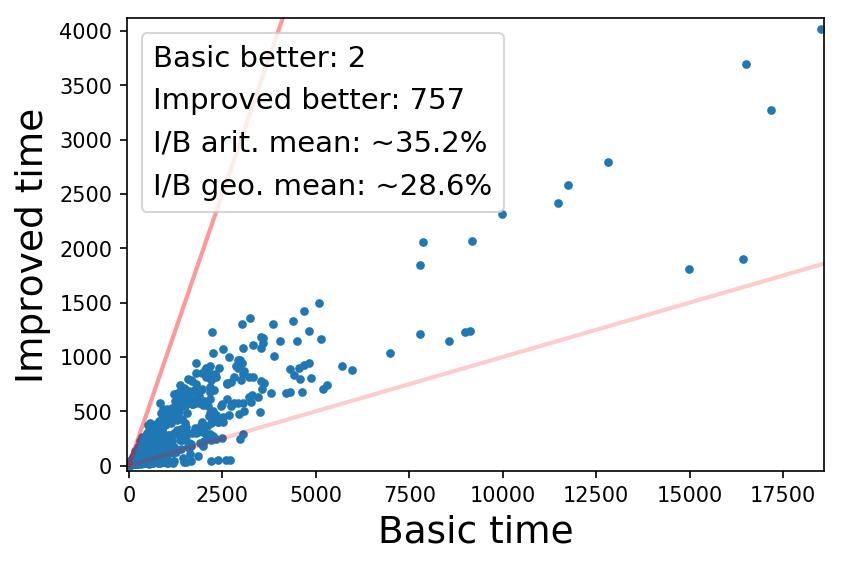}
     }
     \hfill
     \subfloat[50\% random vertices\label{fig:mdp50T}]{%
       \includegraphics[width=0.5\textwidth]{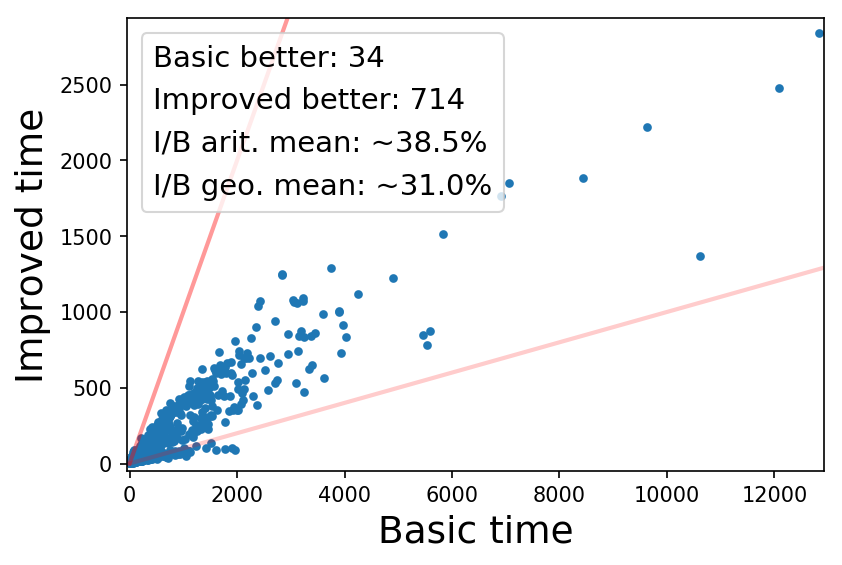}
     }
\caption{Comparison of time for MDPs with Streett objectives.}
\label{fig:mdpsT}
\end{center}
\end{figure}

\end{document}